\RequirePackage{fix-cm}
\documentclass[smallextended]{svjour3}
\smartqed

\usepackage{amsmath,amssymb,amsfonts}
\usepackage{graphicx}
\usepackage{booktabs}
\usepackage{multirow}
\usepackage{siunitx}
\usepackage{float}
\usepackage{placeins}
\usepackage{subcaption}
\usepackage{xcolor}
\usepackage{tikz}
\usetikzlibrary{arrows.meta,decorations.pathreplacing}
\usepackage{comment}
\usepackage[round,authoryear]{natbib}
\usepackage{url}
\usepackage{parskip}

\makeatletter
\let\c@lemma\c@theorem
\let\c@proposition\c@theorem
\let\c@corollary\c@theorem
\makeatother
\renewcommand\makeheadbox{}
\makeatother

\usepackage{xurl}
\usepackage[final, breaklinks=true]{hyperref}

\begin{document}

\title{Optimal Catastrophe Bond Design and its Applications to Climate Change Risk}
\subtitle{}

\titlerunning{Optimal Catastrophe Bond Design and its Applications} 

\author{Andres Medina \and Wei Wei}

\authorrunning{Medina \and Wei} 

\institute{Andres Medina \at
              Department of Mathematics, University of Illinois Urbana-Champaign, \\
              1409 W Green St, Urbana, IL 61801 \\
              \email{andres9@illinois.edu}           
           \and
           Wei Wei \at
              Program of Actuarial Science and Risk Management, University of Illinois Urbana-Champaign, \\
              Computing Applications Building, 605 E Springfield Ave, Champaign, IL 61820 \\
              \email{weiw@illinois.edu}
}

\date{}

\maketitle

\begin{abstract}
Catastrophe bonds have become a central instrument for transferring catastrophic risk to capital markets, yet their payout structures are typically specified exogenously rather than optimized. In this paper, we treat the indemnity function as the decision variable and study the optimal design of CAT bonds from the sponsor's perspective. We establish the optimality of a piecewise linear indemnity function under both expected-disutility and spectral-risk-measure frameworks and thereby reduce an infinite-dimensional design problem to a one-dimensional optimization over the attachment point. The optimal design also eliminates the potential moral hazard associated with the discontinuity of threshold payout structures. We characterize the optimal attachment point under six risk criteria and obtain explicit or semi-explicit solutions. We further derive bounds for the optimal attachment point and develop comparative statics that provide practical design insights. Using U.S. severe storm losses from NOAA's Billion-Dollar Weather and Climate Disasters database for calibration, we show that the optimal CAT bond designs achieve substantial risk reductions across most risk criteria. 


\keywords{Catastrophe bonds  
\and Optimal indemnity design 
\and Attachment point
\and Comparative statics analysis
\and Climate change risk
}


\subclass{91G05 \and 91G70 \and 91G80 }


\end{abstract}

\section{Introduction}
\label{intro}

A Catastrophe Bond (CAT bond) is a financial instrument that transfers the risk of extreme events from sponsoring institutions, such as governments and insurance companies, to capital market investors. By securitizing catastrophe risk, sponsors obtain additional loss-absorbing capital and reduce reliance on traditional reinsurance, whose cost and availability often deteriorate after major disasters as markets harden and capacity contracts \citep{cummins2009convergence,froot2001market}. Unlike conventional reinsurance, CAT bonds are fully collateralized, and thus eliminate counterparty risk and ensure that funds remain available even during periods of market stress. They also typically provide multi-year coverage, which enables sponsors to lock in pricing and protection over extended time periods \citep{polacek2018catastrophe}. 
\\

 The CAT bond market emerged in response to shortcomings in traditional catastrophe risk financing exposed by Hurricane Andrew in 1992, whose unprecedented losses contributed to numerous insurer failures and revealed capacity constraints in reinsurance markets \citep{NOAA2024,mcchristian2012hurricane}. The search for alternative risk-transfer mechanisms subsequently led to several forms of catastrophe-linked securities and, ultimately, to the development of the CAT bond market \citep{cummins2012catbonds}. Since then, CAT bonds have evolved into a major segment of insurance-linked securities, with risk capital outstanding reaching \$65.5 billion as of mid-2026. Their growth reflects demand from sponsors seeking additional catastrophe-risk capacity and investors attracted by the asset class's diversification benefits and risk-return profile \citep{haffar2022dependence,drobetz2020catbonds,artemis2026}.
\\

 In the early development of the catastrophe bond market, contract design was dominated by threshold payout structures, under which bond principal is immediately exhausted once a predefined trigger hits a specified threshold. Such contracts arise naturally as a standard feature of CAT Bond transactions \citep{cummins2008catbonds, cummins2012catbonds},  because they are intuitive, easy to communicate to investors, and relatively transparent in implementation. Their principal drawback, however, stems from the discontinuity at the trigger point, where a small change in the underlying loss measure can generate a disproportionately large change in contractual payouts. Such a cliff effect can encourage costly disputes over trigger determination and introduce additional frictional costs into the contracting process. Similar concerns have been extensively documented in both insurance and finance, see for example, \citet{bernard2010insurance} and \citet{kisgen2006credit}. In the CAT Bond context, our analysis provides a strong theoretical justification against  threshold structures. Specifically, we show that over single-layer indemnities with an unrestricted slope, maximizing the sponsor's expected net loss drives the design to a threshold contract (see Appendix~\ref{app:max_expected_net_loss}). This result suggests that the simplicity of threshold contracts may come at a substantial efficiency and risk reduction cost.

 Motivated by the limitations of threshold structures, market participants have increasingly adopted smoother payout mechanisms. Among the most prominent alternatives are piecewise-linear indemnity functions, which preserve much of the transparency and implementability of threshold contracts while mitigating the abrupt payout jumps associated with cliff effects. Several recent CAT Bond transactions illustrate this trend. For example, recent World Bank-sponsored issuances for Mexico (US \$595 million) and Chile (US \$350 million) employ stepped or sliding-scale payout mechanisms which link principal reductions more continuously to event severity and thereby reduce the abrupt payout discontinuities associated with threshold structures \citep{worldbank2025mexico, worldbank2023chile_press}. These examples suggest that piecewise-linear structures have emerged in practice as a natural compromise between contractual simplicity and efficient risk transfer.

 The growing prevalence of such contracts has also attracted academic attention. For instance, \citet{baryshnikov1998pricing} price CAT bonds with threshold payouts, and \citet{nowak2013pricing} extend valuation to piecewise-linear payoff structures, in both cases, treating the indemnity function as an exogenously specified contractual feature. However, an important question remains unanswered: while piecewise-linear contracts appear attractive in practice and tractable in valuation models, is their widespread adoption supported by theoretical justification, or are there superior contract forms that remain unexplored?

In order to answer this question, we formulate CAT Bond design as a functional optimization problem in which the indemnity function itself serves as the decision variable. Rather than restricting attention to a particular family of payout structures, we optimize over a broad admissible class of indemnity functions and establish that the optimal indemnity function must take a piecewise-linear form  under a wide range of economically meaningful objective criteria. This result suggests that the growing use of piecewise-linear CAT bond structures is supported not only by practical considerations, but also by economic optimality under a broad class of risk assessment criteria.

The CAT Bond design problem considered in this paper is closely related to the classical problem of optimal insurance and reinsurance design. In both settings, the indemnity function serves as the control variable, and the objective is to determine the most effective risk-transfer arrangement under a given performance criterion. Consequently, the two problems share many structural similarities. In particular, the optimal CAT Bond indemnity function derived in this paper takes a piecewise-linear form under which coverage becomes effective only after losses exceed a certain threshold, referred to as the attachment point. This feature resembles the stop-loss contract, which is known to be optimal in a wide variety of insurance and reinsurance settings. The study of optimal insurance and reinsurance contracts has a long history, dating back to the seminal works of \citet{Borch1960} and \citet{Arrow1963}. Subsequent research has considerably broadened the scope of the field, covering expected-utility criteria, risk-measure optimization, distortion-based models, and Pareto-optimal risk-sharing problems; see \citet{CaiChi2020} for a comprehensive review.

Despite these similarities, the methodologies developed for optimal insurance and reinsurance design cannot be carried over directly to the CAT Bond setting. The key distinction is that CAT Bond coverage is inherently limited by the issuance volume. Whereas a conventional stop-loss contract provides protection against all losses above a deductible, a CAT bond covers only a finite layer of losses. As a result, the central design question is no longer merely how much risk should be transferred, but rather which portion of the loss should receive protection.  Determining the optimal location of this coverage layer, and understanding how it responds to changes in pricing, risk preferences, and loss characteristics, requires analytical tools that differ substantially from those used in the classical optimal reinsurance literature.

 Building on the piece-wise linear structural characterization, we further derive the optimal indemnity strategy by identifying the optimal attachment point.  The results provide quantitative guidance for issuers to reduce catastrophe-risk exposure while maintaining cost-effective protection. We then conduct a comparative-statics analysis to examine how the optimal CAT Bond design responds to changes in model parameters. The analysis clarifies how the economic forces shape the optimal contract and provides guidance for constructing robust designs across a range of economic environments rather than for a single calibrated scenario.

 Finally, we implement the proposed framework using U.S. severe-storm loss data. The empirical study illustrates how the theoretically optimal designs can effectively reduce risk exposure and demonstrates the practical relevance of our approach. Beyond validating the theoretical findings, the case study also identifies several questions that remain outside the scope of the current framework and points to promising directions for future research.

The remainder of the paper is organized as follows: Section~\ref{sec:model-setup} introduces the model, formulates the design problem, and establishes its well-posedness. Section~\ref{sec:optimal_indemnity} characterizes the optimal indemnity form. Section~\ref{sec-optimal_attachment_point} studies the optimal attachment-point problem in detail. Section~\ref{sec:case_study} presents the case study and numerical implementation. Section~\ref{sec:conclussions} concludes and discusses directions for future research.

\section{Model formulation}
\label{sec:model-setup}
Suppose the aggregate loss of an insurance portfolio is modeled by a nonnegative continuous random variable $X$ with a cumulative distribution function $F(x)$. Assume $X$ has a finite expected value and a decreasing probability  density function. This assumption of decreasing pdf is a standard practice in actuarial science \citep[see, e.g.,][]{embrechts1997modelling, klugman2012loss} to capture the diminishing probabilities of extreme losses. The insurer, acting as the sponsor, seeks protection from the capital market by issuing catastrophe bonds (CAT bonds). 

We use the direct loss as the trigger mechanism. For a unit of CAT bond (with face amount $\$1$), let $I(x)$ denote the indemnity ratio paid to the insurer given a loss realization $x$. By definition, $0 \le I(x) \le 1$ and $I(x)$ is a non-decreasing function. Under this indemnity schedule, the investor's potential loss per unit CAT bond is $I(X)$, resulting in a final payoff of $1-I(X)$. For analytical simplicity, we assume there is no coupon and the risk-free interest rate is zero, while positive interest rate would not add technical complexity in the analysis.

To price the CAT bond, we adopt the expected value premium principle. Specifically, the indemnity provided by investors to the insurer can be viewed as a reinsurance contract, where the coverage per unit, $I(X)$, is priced as $(1+\tau_2)\mathbb{E}[I(X)]$. This term represents the risk premium required by the investors. Subtracting this premium from the principal amount yields the price of a unit CAT bond:
\begin{equation*}
       P_2= 1- (1+\tau_2)\mathbb{E}[ I(X)].
\end{equation*}

 Let $P_1$ denote the total premium collected from the insureds for covering the risk $X$. We continue to use the expected value premium principle and calculate $P_1$ by $(1+\tau_1)\mathbb{E}[X]$.  To reflect the fact that issuing a CAT bond as a reinsurance instrument should be more costly than direct insurance, we impose the following constraints for the two safety loading coefficients: $\tau_2\ge \tau_1 \ge 0$. In the following analysis, the coefficients $\frac{\tau_2}{1+\tau_2}$ and $\frac{1}{1+\tau_2}$ appear frequently; for convenience, we denote 
 \begin{equation*}
     \rho_0 = \frac{\tau_2}{1+\tau_2} \quad \text{and}\quad 1-\rho_0 = \frac{1}{1+\tau_2}. 
 \end{equation*} 
 By viewing the investor as a reinsurer,  $1-\rho_0$ and $\rho_0$ can be respectively interpreted as the expected loss ratio and the expected expense ratio associated with providing reinsurance coverage through the purchase of the CAT bond.

 Assume the insurer issues $w$ units of CAT bonds. At the maturity of the CAT bonds,  the net loss  of the insurer  under the indemnity scheme $I$ is 
\begin{eqnarray*}
\label{eqn:net_loss}
      L_I &=& X - P_1 + w(1-I(X)) - wP_2    \nonumber \\  
          &=& X - (1+\tau_1)\mathbb{E}[X]  + w(1+\tau_2)\mathbb{E}[I(X)]- w I(X) \label{netloss} 
\end{eqnarray*}

The goal is to find the optimal indemnity scheme so as to minimize the net loss in some sense. Generally, 
the optimization problem can be formulated as follows
\begin{equation*}
      \inf_{ I \in \mathfrak{C}} \mathcal{R}({L_I}),
\end{equation*} 
 where $\mathfrak{C}$ denotes the set of feasible indemnity schemes and $\mathcal{R}$ represents a functional mapping a random variable to a real number, which is used to measure the riskiness of the net loss random variable. 

Since $I(x)$ represents the indemnity ratio, it should be bounded between $0$ and $1$.Intuitively, the indemnity provided by CAT bond investors should increase with the realized loss severity. Moreover, to mitigate moral hazard, the marginal indemnity should not exceed the marginal increase in the underlying loss, which implies $I^\prime(x) \le 1/w$.  Motivated by these practical considerations, we define the following class of admissible strategies:

\begin{equation}\label{control_set}
\mathfrak{C}_0 = \left\{ I:[0,\infty)\to[0,1] \;\middle|\; 
0 \le I^\prime(x) \le \frac{1}{w}
\right\}
\end{equation}

The choice of $\mathcal{R}$ can be very general. In this paper, we study the optimization problem in two specific frameworks, by setting $\mathcal{R}$ to be an expected disutility and a spectral risk measure respectively. 

The expected disutility framework provides a natural extension of expected utility theory for analyzing decisions involving losses. Rather than modeling preferences over final wealth, this approach directly evaluates random losses $L$ through an increasing and convex loss disutility function $v(\cdot)$, with the objective of minimizing $\mathbb{E}[v(L)]$. Early conceptual foundations can be traced to  \cite{theil1976disutility}.

Under the expected disutility framework the optimization problem is formulated as follows: 
\begin{equation}\label{P-EDU}
    \inf_{I \in \mathfrak{C}_0} J_v(I) = \inf_{I \in \mathfrak{C}_0} \mathbb{E}[v(L_I)]
\end{equation}
    for an increasing convex and continuously differentiable disutility function $v:\mathbb{R}\to \mathbb{R}$. 
 Note that the increasingness and continuous differentiability of $v$ are imposed only to streamline the presentation of the main results. Although our primary focus is on this class of disutility functions, we also consider several functions of practical relevance that do not strictly satisfy these conditions.
 Whenever our key results extend to these alternative functions, we will explicitly indicate so. Throughout this paper, unless otherwise stated, a disutility function is assumed to be increasing, convex, and continuously differentiable.

In addition to the expected disutility framework, we consider an alternative approach based on spectral risk measures. This class of risk measures has gained considerable attention in actuarial science for its flexibility in capturing risk preferences. Foundational developments are due to \cite{Acerbi2002}, with further applications in insurance pricing and portfolio optimization by \cite{Dowd2006}. 

\begin{definition}
\label{def:spectralriskmeasure}
    Let $Y$ be a random variable and $\operatorname{VaR}_p(Y)$ denote its value at risk at level $100p\%$. A spectral risk measure of $Y$ is defined as 
    \begin{equation*}
        M_\phi(Y) = \int_{0}^1 \phi(p) \operatorname{VaR}_{p}(Y) d p,
    \end{equation*}
    where $\phi: [0,1]\to [0,\infty)$, referred to as risk spectrum, is an increasing function with $\int_0^1 \phi(p)dp=1$.  
\end{definition}

The risk spectrum function $\phi(p)$ specifies the weight a decision-maker assigns to each quantile level when evaluating a risk. For example, if $\phi(p)$ is identically equal to $1$, the decision-maker treats all quantiles equally, evaluating the risk by its expected value. Alternatively, if $\phi(p) = \frac{\mathbb{I}\{p>\beta\}}{1-\beta}$, the decision-maker focuses exclusively on quantiles above the threshold level of $\beta$, weighting them equally, which  leads to the Tail Value-at-Risk (TVaR) measure.

Spectral risk measures are coherent according to \citet{artzner1999coherent} and thus possess the properties of  monotonicity, translation invariance, positive homogeneity, and subadditivity.  Notably, spectral risk measures are equivalent to the class of distortion risk measures introduced by \citet{wang1996premium} when the underlying distortion function is concave, see \citet{gzyl2006relationship}. In this paper, we adopt the spectral representation for its tractability in optimization.

With this definition in place, we proceed to formulate the optimization problem under the spectral risk measure framework.

\begin{equation}\label{P-SPEC}
    \inf_{I \in \mathfrak{C}_0} M_\phi(L_I)
    \end{equation}

To ensure these problems are well-posed and meaningful, we must verify that the infimum does not diverge to negative infinity. The following two propositions demonstrate that, under both the expected disutility and the spectral risk measure frameworks, the objective functions are uniformly bounded from below by a finite constant.

\begin{proposition}
For any convex disutility function $v$, there exists a constant $B>-\infty$ such that $J_v(I) \ge B$ for all $I \in \mathfrak{C}_0$. 
\end{proposition}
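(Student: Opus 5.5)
We bound $J_v(I)$ below by applying Jensen's inequality and then controlling the mean of $L_I$ uniformly over $\mathfrak{C}_0$. Since $v$ is convex and $L_I$ is integrable, Jensen's inequality gives
\begin{equation*}
J_v(I)=\mathbb{E}[v(L_I)] \ge v\bigl(\mathbb{E}[L_I]\bigr).
\end{equation*}
Integrability holds because $X$ has finite mean and $0\le I(X)\le 1$. So it suffices to show that $\mathbb{E}[L_I]$ lies in a bounded set uniformly in $I\in\mathfrak{C}_0$. We also need $v$ to be bounded below on that set; a convex, real-valued function is continuous, hence bounded on compact sets. Alternatively, if $v$ is increasing, a uniform lower bound on $\mathbb{E}[L_I]$ alone suffices.

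The mean is computed directly from the expression for $L_I$:
\begin{equation*}
\mathbb{E}[L_I] = \mathbb{E}[X]-(1+\tau_1)\mathbb{E}[X] + w(1+\tau_2)\mathbb{E}[I(X)] - w\,\mathbb{E}[I(X)] = -\tau_1\mathbb{E}[X] + w\tau_2\,\mathbb{E}[I(X)].
\end{equation*}
Since $0\le \mathbb{E}[I(X)]\le 1$ and $\tau_2\ge 0$, this gives
\begin{equation*}
-\tau_1\mathbb{E}[X]\le \mathbb{E}[L_I]\le -\tau_1\mathbb{E}[X]+w\tau_2 .
\end{equation*}
Both endpoints are finite constants independent of $I$.

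We can therefore take
\begin{equation*}
B=\min_{t\in[-\tau_1\mathbb{E}[X],\,-\tau_1\mathbb{E}[X]+w\tau_2]} v(t),
\end{equation*}
which is finite because $v$ is continuous on this compact interval. In the increasing case it simplifies to $B=v(-\tau_1\mathbb{E}[X])$. Then $J_v(I)\ge v(\mathbb{E}[L_I])\ge B$ for all $I\in\mathfrak{C}_0$.

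The only delicate point is justifying Jensen's inequality. This requires $\mathbb{E}[v(L_I)]$ to be well defined, possibly as $+\infty$. A convex $v$ admits an affine minorant $v(t)\ge a+bt$, and $L_I$ is integrable, so the negative part of $v(L_I)$ is integrable. Hence the expectation exists in $(-\infty,+\infty]$, and the inequality holds in the extended sense. The rest is routine.
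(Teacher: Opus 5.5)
Your proof is correct and follows essentially the same route as the paper: Jensen's inequality, the uniform bracketing $\mathbb{E}[L_I]\in[-\tau_1\mathbb{E}[X],\,-\tau_1\mathbb{E}[X]+w\tau_2]$, and continuity of the convex $v$ on that compact interval. Your added remark that an affine minorant of $v$ makes $\mathbb{E}[v(L_I)]$ well defined in $(-\infty,+\infty]$ is a small gain in rigor that the paper leaves implicit.
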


\begin{proof}
    Since $0\le I(x) \le 1$, we have 
\[   \mathbb{E}[L_I]
= -\tau_1 \mathbb{E}[X] + w \tau_2 \mathbb{E}[I(X)] \in [-\tau_1 \mathbb{E}[X],\,-\tau_1 \mathbb{E}[X] + w \tau_2] := [a, b].
\]
According to Jensen's inequality, it holds that 
\[
J_v(I) = \mathbb{E}[v(L_I)] \ge v(\mathbb{E}[L_I]) \ge \inf_{y \in [a,b]} v(y). 
\]
Recall that $v$ is convex and therefore continuous, and $[a,b]$ is a closed interval. According to Weierstrass's theorem, $v$ obtains its minimum at some $x_0\in [a, b]$ and hence $\inf_{y \in [a,b]} v(y) = v(x_0)$ is finite. 
\end{proof}

\begin{proposition}
For any risk spectrum $\phi$ such that $M_\phi(X)>-\infty$, there exists a constant $B>-\infty$ such that $M_\phi(L_I)\ge B$ for all   $I\in\mathfrak C_0$. 
\end{proposition}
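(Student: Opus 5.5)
The plan is to bound $L_I$ from below pointwise by a random variable that does not depend on $I$. Then monotonicity and translation invariance of the spectral risk measure $M_\phi$ finish the argument. Write
\[
L_I = X - (1+\tau_1)\mathbb{E}[X] + w(1+\tau_2)\mathbb{E}[I(X)] - wI(X).
\]
The deterministic terms involving $I$ are handled by $\mathbb{E}[I(X)]\ge 0$, so $w(1+\tau_2)\mathbb{E}[I(X)]\ge 0$. The random term is handled by $0\le I(X)\le 1$, which gives $-wI(X)\ge -w$. Combining the two yields, almost surely,
\[
L_I \;\ge\; X - (1+\tau_1)\mathbb{E}[X] - w \qquad \text{for all } I\in\mathfrak C_0 .
\]

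Next I would invoke the coherence properties recalled just before Definition~\ref{def:spectralriskmeasure}, namely monotonicity and translation invariance of $M_\phi$. Monotonicity follows directly from the definition: if $Y_1\le Y_2$ a.s., then $\operatorname{VaR}_p(Y_1)\le \operatorname{VaR}_p(Y_2)$ for every $p$, and $\phi\ge 0$. Translation invariance follows from $\operatorname{VaR}_p(Y+c)=\operatorname{VaR}_p(Y)+c$ together with $\int_0^1\phi=1$. Applying both gives
\[
M_\phi(L_I)\;\ge\; M_\phi(X) - (1+\tau_1)\mathbb{E}[X] - w =: B .
\]
This $B$ is finite from below, because $M_\phi(X)>-\infty$ by hypothesis and $\mathbb{E}[X]<\infty$ by the model assumptions.

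The only point needing care is the integrability used in the translation-invariance step. One has to check that $\int_0^1 \phi(p)\,\operatorname{VaR}_p(\cdot)\,dp$ is well defined in the extended sense for these variables, so that splitting off the constant is legitimate. Since $X\ge 0$, the quantiles $\operatorname{VaR}_p(X)$ are nonnegative. The shifted variable therefore has quantiles bounded below by the constant $-(1+\tau_1)\mathbb{E}[X]-w$, and its spectral integral is either finite or $+\infty$. In either case the inequality holds, so no cancellation of infinities arises. I do not expect any genuine obstacle; the argument is a direct pointwise bound.
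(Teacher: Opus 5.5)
Your proposal is correct and matches the paper's proof: the same pointwise bound $L_I \ge X-(1+\tau_1)\mathbb{E}[X]-w$, then monotonicity and translation invariance of $M_\phi$, giving $B = M_\phi(X)-(1+\tau_1)\mathbb{E}[X]-w$. Your extra check that the spectral integral is well defined (quantiles bounded below, $\phi\ge 0$) is a harmless refinement that the paper leaves implicit.
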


\begin{proof}
Noting that $0\le I(X)\le 1$ and $w\ge0$, we have
\[
L_I
=
X-(1+\tau_1)\mathbb{E}[X]
+(1+\tau_2)\mathbb{E}[wI(X)]
-wI(X)
\ge
X-(1+\tau_1)\mathbb{E}[X]-w.
\]
By monotonicity and translation invariance of $M_\phi$,
\[
M_\phi(L_I)
\ge
M_\phi\!\left(X-(1+\tau_1)\mathbb{E}[X]-w\right)
=
M_\phi(X)-(1+\tau_1)\mathbb{E}[X]-w>-\infty. 
\]
\end{proof}

\section{Optimal indemnity form}
\label{sec:optimal_indemnity}

\begin{theorem}\label{thm:optform}
The optimal indemnity function \(I\) for Problems \eqref{P-EDU} and \eqref{P-SPEC}
has the form
\begin{equation}\label{optI}
I_d(x)=\frac{(x-d)_+}{w}\wedge 1 .
\end{equation}
\end{theorem}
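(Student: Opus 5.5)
The plan is a standard rearrangement argument: for each admissible $I$, construct a layer contract $I_d$ with the same expected indemnity whose net loss is dominated by $L_I$ in convex order. Both criteria then reduce to problems over $d$ alone.

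Fix an arbitrary $I\in\mathfrak{C}_0$ and write $wI(x)$ for the monetary indemnity. The map $d\mapsto \mathbb{E}[I_d(X)]$ is continuous and nonincreasing. It equals $\mathbb{E}[(X/w)\wedge 1]$ at $d=0$ and tends to $0$ as $d\to\infty$. Any $I\in\mathfrak{C}_0$ satisfies $I(x)\le I(0)+x/w$. If $I(0)=0$, this gives $\mathbb{E}[I(X)]\le \mathbb{E}[(X/w)\wedge 1]$, and the intermediate value theorem yields $d\ge 0$ with $\mathbb{E}[I_d(X)]=\mathbb{E}[I(X)]$. If $I(0)>0$, I would first compare $I$ with $I-I(0)$ suitably truncated, or allow $d\le 0$ formally. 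I expect a short separate argument to dispose of this case, and I will treat it as a technicality.

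With the premium terms equal, $L_I$ and $L_{I_d}$ differ only through $X-wI(X)$ versus $X-wI_d(X)$, and these have the same mean. Next I show the \emph{single-crossing property}: $h(x):=wI_d(x)-wI(x)$ satisfies $h\le 0$ on $[0,x_0]$ and $h\ge 0$ on $[x_0,\infty)$. On $[0,d]$ we have $I_d=0\le I$. On $[d,d+w]$, $I_d$ has the maximal slope $1/w$, so $I_d-I$ is nondecreasing there. On $[d+w,\infty)$ we have $I_d=1\ge I$. Hence $h$ changes sign at most once, from $-$ to $+$.

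Set $Y_I=X-wI(X)$ and $Y_d=X-wI_d(X)$. Both are nondecreasing functions of $X$, since the slope constraint $I'\le 1/w$ is exactly what guarantees this; comonotonicity with $X$ is the key use of the moral-hazard constraint. Moreover $Y_d-Y_I=-h(X)$ is $\ge 0$ for $X\le x_0$ and $\le 0$ beyond. By the Karlin--Novikoff cut criterion, equal means together with single crossing of the quantile functions give $Y_d\le_{cx} Y_I$. Monotonicity in $X$ lets me compare quantile functions directly: $\operatorname{VaR}_p(Y)=y(\operatorname{VaR}_p(X))$ for $y$ nondecreasing, so the quantile functions cross once at $p_0=F(x_0)$. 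Equivalently, $\int_p^1 \operatorname{VaR}_q(Y_d)\,dq\le \int_p^1 \operatorname{VaR}_q(Y_I)\,dq$ for all $p$, with equality at $p=0$.

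For Problem \eqref{P-EDU}, convex order gives $\mathbb{E}[v(L_{I_d})]\le \mathbb{E}[v(L_I)]$ for every convex $v$, since the translation constant is common to both. For Problem \eqref{P-SPEC}, I would write $M_\phi$ as a mixture of TVaRs, $M_\phi(Y)=\phi(0)\mathbb{E}[Y]+\int_0^1 (1-\beta)\operatorname{TVaR}_\beta(Y)\,d\phi(\beta)$. This is valid because $\phi$ is increasing, and equality of means plus the tail-integral inequality yields $M_\phi(L_{I_d})\le M_\phi(L_I)$. Translation invariance handles the constants.

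The main obstacles are establishing the single crossing carefully and the edge case $I(0)>0$, together with the existence of the matching $d$. The convex-order conclusions are standard once these are in place. Existence of a minimizer over $d$ is then a one-dimensional continuity question, which is deferred to the next section.
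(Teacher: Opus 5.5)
Your proposal follows essentially the same route as the paper: match the mean by the intermediate value theorem, read off the sign pattern of $wI_d-wI$ on the three pieces, and apply Ohlin's lemma (your Karlin--Novikoff cut criterion) to get $L_d\le_{cx}L_I$. For spectral measures the paper cites convex-order consistency, which is what your TVaR-mixture representation proves directly. The case $I(0)>0$ that you defer is settled in the paper by your second option: run the intermediate value theorem over all $d\in\mathbb{R}$ (indeed $\mathbb{E}[I_d(X)]=1$ for $d\le -w$). This is legitimate because the reduced problem is posed over $d\ge -w$, and your single-crossing argument goes through unchanged for negative $d$.
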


\begin{proof}
To prove the optimality of $I_d$  in solving Problem \eqref{P-EDU}, i.e., minimizing $\min \mathbb{E}[v(L_I)]$ for any convex function $v$, it suffices to show that,  for any \(I\in\mathfrak C_0\), there exists \(d\in\mathbb R\) such that \(L_d\le_{cx}L_I\), where \(L_d\) denotes the net loss under \(I_d\).  Since spectral risk measures preserve the convex order  \citep{dhaene2000comonotonicity}, the same argument proves the optimality of  $I_d$  in solving Problem \eqref{P-SPEC}.


For each $x$, $I_d(x)$ is continuous in $d$, with $\lim_{d\to-\infty} I_d(x)=1$ and $\lim_{d\to\infty} I_d(x)=0$. Since $0\le I_d\le 1$, we have $\lim_{d\to-\infty} \mathbb{E}[I_d(X)]=1$ and $\lim_{d\to\infty} \mathbb{E}[I_d(X)]=0$ according to  
the dominated convergence theorem. 
Since $\mathbb{E}[I(X)]\in[0,1]$, intermediate value theorem yields that there exists  $d\in\mathbb{R}$ such that
\begin{equation}\label{eq:meanmatch}
\mathbb{E}[I_d(X)]=\mathbb{E}[I(X)], \quad \text{and thus} \quad \mathbb{E}[L_d]=\mathbb{E}[L_I].
\end{equation}
Define
\[
\Delta(x):=L_I(x)-L_d(x) = 
\begin{cases}
-wI(x)\le 0, & 0\le x<d,\\[4pt]
x-d-wI(x), & d\le x<d+w,\\[4pt]
w(1-I(x))\ge 0, & x\ge d+w.
\end{cases}
\]
Note that \(\Delta'(x)=1-wI'(x)\ge 0\) for any $x \in (d, d+w)$ due to the marginal indemnity constraint of $I$ as assumed in  \eqref{control_set}. Hence, \(\Delta\) is non-positive on \([0,d)\), is increasing on \([d,d+w)\), and is nonnegative on \([d+w,\infty)\). Therefore, \(\Delta\) changes sign at most once, from negative to positive, so \(L_d(x)\) and \(L_I(x)\) cross at most once. Since both \(L_d(X)\) and \(L_I(X)\) are increasing functions of $X$, this single-crossing property carries over to their distribution functions, meaning that there exists \(\ell_0\) such that
\begin{equation}\label{eq:single_crossing}
F_{L_d}(\ell)\le F_{L_I}(\ell) \text{ for } \ell<\ell_0, \qquad F_{L_d}(\ell)\ge F_{L_I}(\ell) \text{ for } \ell>\ell_0.
\end{equation}
Following \eqref{eq:meanmatch} and \eqref{eq:single_crossing}, the hypotheses of Ohlin's lemma \citep{Ohlin1969} hold, which in turn yields  \(L_d\le_{cx}L_I\).
\end{proof}



With the optimal form of indemnity identified by Theorem \ref{thm:optform}, the loss random variable depends solely on $d$ and is thus denoted as $L_d$. Specifically, 
\begin{equation*}
    L_d = X- P_1 + (1+\tau_2)\mathbb{E}[(X-d)_+\wedge w] - (X-d)_+\wedge w.
\end{equation*}
Denote 
\begin{equation*}
m(d) = d- P_1 + (1+\tau_2)\mathbb{E}[(X-d)_+\wedge w]. 
\end{equation*}
$L_d$ can be rewritten as 
\begin{equation}\label{piecewiseld}
    L_d = X-d - (X-d)_+\wedge w + m(d).
    \end{equation}
Clearly, $L_d$ is an increasing\footnote{Throughout this paper, "increasing" means non-decreasing and "decreasing" means non-increasing.} and continuous function of $X$.  To emphasize this, we also write $L_d$ as $L_d(X)$, with $L_d(x)$ specified by 
 \begin{equation*}
    L_d (x) = x-d - (x-d)_+\wedge w + m(d).      
   \end{equation*}
 Note that   
 \begin{equation}\label{ld'}
     \frac{\partial}{\partial d}L_d(x) = m'(d) - 1 + \mathbb{I}\{d < x \le d+w\}.
 \end{equation}

Figure \ref{fig:net_loss_piecewise} shows the graph of $L_d$ as a function of $X$.  When the CAT bond is triggered but not yet exhausted, that is, when the initial loss $X$ falls into the interval $[d, d+w)$, the marginal indemnity received by the insurer exactly offsets the marginal increase in the loss, thereby keeping the net loss constant at the plateau level $m(d)$. Outside this interval, the CAT bond is either not triggered (for $X < d$) or exhausted (for $X \ge d+w$); in either case, the total indemnity remains constant at $0$ or $w$, respectively, so the net loss becomes a linear function of $X$ with slope one on both pieces. Conventionally, $d$ and $d+w$ are referred to as the attachment point and the exhaustion point, and $[d, d+w]$ as the coverage layer.  

\begin{figure}[htbp]
\centering
\begin{tikzpicture}[scale=1.05, >=stealth]

    \pgfmathsetmacro{\d}{3.2}
    \pgfmathsetmacro{\w}{2.4}
    \pgfmathsetmacro{\m}{2.2}
    \pgfmathsetmacro{\xmax}{8}

    \pgfmathsetmacro{\yleft}{\m-\d}
    \pgfmathsetmacro{\xright}{\d+\w}
    \pgfmathsetmacro{\yright}{\m + (\xmax-\d-\w)}

    \draw[thick,->] (0,0) -- (8.3,0) node[below] {$X$};
    \draw[thick,->] (0,0) -- (0,5.2) node[left] {$L_d(X)$};

    \draw[very thick]
        (0,\yleft) -- (\d,\m) -- (\xright,\m) -- (\xmax,\yright);

    \draw[dashed] (\d,0) -- (\d,\m);
    \draw[dashed] (\xright,0) -- (\xright,\m);

    \draw (\d,0.08) -- (\d,-0.08) node[below=4pt] {$d$};
    \draw (\xright,0.08) -- (\xright,-0.08) node[below=4pt] {$d+w$};

    \draw (0.08,\m) -- (-0.08,\m) node[left=4pt] {$m(d)$};

    \fill (\d,\m) circle (1.5pt);
    \fill (\xright,\m) circle (1.5pt);

\end{tikzpicture}
\caption{Net loss $L_d(X)$ as a function of the underlying loss $X$.}
\label{fig:net_loss_piecewise}
\end{figure}

The plateau level $m(d)$ is of independent interest. It represents the level at which the insurer's "anticipated" level of net loss, when the CAT bond functions as intended. For a sound CAT bond design, the plateau level should be nonnegative; a negative plateau would imply that the CAT bond generates a profit for the insurer, which may be viewed unfavorably from a risk management or regulatory standpoint. We shall investigate the sign of the plateau level in the following analysis when appropriate.

The behavior of $m(d)$ plays an essential role in the derivation of the optimal solutions. We summarize its key properties below.
\begin{lemma}\label{l:md}The function $m(d)$ satisfies the following properties:
\begin{itemize}
\item[(i)] Its first order derivative is given by
$$ \frac{\partial }{\partial d} m(d) = 1 - (1+\tau_2)\mathbb{P}\{d < X \le d+w\} $$
\item[(ii)] $m(d)$ is convex on the interval $[0, \infty)$.
\end{itemize}
\end{lemma}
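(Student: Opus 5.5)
Both parts follow from writing the layer expectation as an integral of the survival function; (i) is then a direct differentiation and (ii) reduces to the monotonicity of the density. Write
\[
\mathbb{E}[(X-d)_+\wedge w] = \int_d^{d+w} \bar F(t)\,dt, \qquad \bar F = 1-F,
\]
which follows from the layer representation $(x-d)_+\wedge w = \int_d^{d+w}\mathbb{I}\{x>t\}\,dt$ and Fubini. This holds for every real $d$, with $\bar F(t)=1$ for $t<0$. Hence
\[
m(d) = d - P_1 + (1+\tau_2)\int_d^{d+w}\bar F(t)\,dt .
\]

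\textbf{Part (i).} Since $X$ is continuous, $\bar F$ is continuous. The fundamental theorem of calculus (Leibniz rule with variable limits) then gives
\[
m'(d) = 1 + (1+\tau_2)\bigl(\bar F(d+w) - \bar F(d)\bigr) = 1 - (1+\tau_2)\bigl(F(d+w)-F(d)\bigr).
\]
The right-hand side equals $1-(1+\tau_2)\mathbb{P}\{d<X\le d+w\}$, which is the formula in (i).

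\textbf{Part (ii).} It suffices to show that $g(d):=F(d+w)-F(d)$ is decreasing on $[0,\infty)$, since then $m'(d)=1-(1+\tau_2)g(d)$ is increasing there and $m$ is convex. Write $g(d)=\int_d^{d+w} f(t)\,dt$ with $f$ the decreasing density. For $0\le d_1<d_2$, substitute $t=s+(d_2-d_1)$ to get
\[
g(d_2)-g(d_1) = \int_{d_1}^{d_1+w}\bigl(f(s+d_2-d_1)-f(s)\bigr)\,ds \le 0,
\]
using $f(s+h)\le f(s)$ for $s\ge 0$ and $h>0$. To avoid differentiability issues with $f$, I will state convexity through monotonicity of the derivative $m'$ rather than through a second derivative. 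Alternatively, the same conclusion follows from concavity of $F$ on $[0,\infty)$: a decreasing density makes $F$ concave there, and the increments of a concave function over a window of fixed length $w$ decrease as the window moves right.

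\textbf{Main obstacle.} The only delicate point is restricting to $d\ge 0$. For $d<0$ the window $(d,d+w]$ includes negative values where $f=0$, so $g$ can increase in $d$ and convexity may fail. The argument therefore uses the decreasing-density assumption only on $[0,\infty)$, where it holds, which is exactly the domain claimed in (ii).
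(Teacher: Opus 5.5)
Your proof is correct, but part (i) follows a different route from the paper's. You rewrite the layer expectation as $\int_d^{d+w}\bar F(t)\,dt$ via Fubini and read off $m'$ from the fundamental theorem of calculus. The paper instead differentiates pathwise inside the expectation. It computes the one-sided derivatives of $d\mapsto (X-d)_+\wedge w$, namely $-\mathbb{I}\{d\le X<d+w\}$ from the left and $-\mathbb{I}\{d<X\le d+w\}$ from the right. It passes these through $\mathbb{E}$ by dominated convergence, then uses continuity of $X$ to conclude that the two one-sided derivatives of $m$ agree.

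Your version is more self-contained. Differentiability comes directly from continuity of $\bar F$, with no kinks to handle and no limit--expectation interchange to justify. The paper leaves implicit that its difference quotients are bounded by $1$ because the map is $1$-Lipschitz in $d$.

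The paper's version is a template it reuses. Lemma~\ref{l:derivatives} explicitly invokes ``a similar argument'' to get $J_v'(d)=\mathbb{E}[v'(L_d)\,\partial_d L_d]$, and $h_v'$ in Lemma~\ref{l:hprop} is computed the same way. For those nonlinear functionals of the layer no survival-function representation is available, so the pathwise approach is the one that generalizes.

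For part (ii) the two arguments coincide. The paper only asserts that a decreasing density makes $\mathbb{P}\{d<X\le d+w\}$ decreasing for $d\ge 0$. Your shift computation supplies that step explicitly, and you correctly explain why the restriction to $[0,\infty)$ is needed.
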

\begin{proof}
    (i).    Note that $\frac{\partial^-}{\partial d}((X-d)_+\wedge w) = -\mathbb{I}\{d\leq X <  d+w\}$. Following the dominated convergence theorem, we have
    \begin{eqnarray*}
        \frac{\partial ^-}{\partial d} m(d)
        &=& 1 + (1+\tau_2)\mathbb{E}\left[ \frac{\partial^-}{\partial d} (X-d)_+\wedge w \right] \\
        &=& 1 + (1+\tau_2)\mathbb{E}[-\mathbb{I}\{d\le X <  d+w\}] \\
        &=& 1 - (1+\tau_2)\mathbb{P}\{d \le X <  d+w\}.    
    \end{eqnarray*}
    Similarly, $\frac{\partial^+}{\partial d} m(d) = 1-  (1+\tau_2)\mathbb{P}\{d <  X \le d+w\}$. Since $X$ is assumed to have a continuous distribution, we have $\frac{\partial^+}{\partial d} m(d) = \frac{\partial^-}{\partial d} m(d)$. Thus $m(d)$ is differentiable everywhere and its derivative takes the expression of $\frac{\partial^-}{\partial d} m(d)$. 

(ii) immediately follows from (i) by noting that $X$ has a decreasing pdf and thus $\mathbb{P}\{d < X \le d+w\}$ is decreasing in $d \ge 0$.
    
 \end{proof}

\begin{remark}\label{r:derivative} 
As demonstrated in the above proof, differentiating quantities associated with capped loss  frequently requires evaluating left- and right-hand derivatives. However, due to the continuity assumption on $X$, this distinction does not impact the final result. To streamline the presentation, we will hereafter omit explicit references to one-sided derivatives and simply use standard derivative notation throughout the remainder of the paper. 
\end{remark}

Our primary objective is to determine the optimal attachment point under various models. To this end, it is instructive to develop an intuition for how variations in the attachment point impact the net loss random variable, $L_d$. Specifically, we examine the behavior of $L_d$ as the attachment point $d$ decreases. Lowering $d$ triggers coverage earlier, producing two countervailing effects: first, it reduces the anticipated level of net loss; second, it expands coverage, thereby increasing the associated cost for issuing CAT bonds. In the subsequent analysis of different models, we demonstrate how these competing forces interact, ultimately yielding an optimal outcome when they are balanced. 



With the optimal indemnity form specified by \eqref{optI}, Problems \eqref{P-EDU} and \eqref{P-SPEC} reduce to finding the optimal attachment point $d$. Although $d$ can theoretically take any real value, a value below $-w$ would result in $I_d(x)$ being identically equal to 1, rendering the problem trivial. In light of this, we restrict the domain to $d \ge -w$ in the formulation of the problem.

Note that for a negative attachment point, $I_d(0) > 0$, implying that the sponsor would receive compensation even when no loss is incurred. This is economically counterintuitive. Consequently, it is common practice in the literature to impose the constraint $I(0)=0$ or $d \ge 0$ for admissibility. We remark that, although we do not explicitly impose this constraint a priori, we will later show that under our framework, the optimal attachment point is inherently nonnegative, thereby satisfying this condition. In this sense, our modeling framework naturally accommodates practical application requirements.

Formally, Problems \eqref{P-EDU} and \eqref{P-SPEC} reduce to 
\begin{eqnarray}
\label{P-EDU'}
    \inf_{d \ge -w} J_v(d) &:=& \mathbb{E}[v(L_d)]\\
    \inf_{d \ge -w} J_\phi(d) &:=& \ M_\phi(L_d) 
    \label{P-SPEC'}
\end{eqnarray}
We shall study these two problems respectively in the following section. 

\section{The optimal attachment point}
\label{sec-optimal_attachment_point}

\subsection{Minimizing expected disutility} \label{subsec-edu}

In this subsection, we study Problem \eqref{P-EDU'} to find the optimal attachment point to minimize the expected disutility. To this end,  we impose the following technical conditions on $v$:
\begin{eqnarray}
    \label{v-cond}
 &&   \mathbb{E}[\max\{|v(X-P_1-w)|,|v(X-P_1+(1+\tau_2)w)|\}]<\infty \\
 &&   \mathbb{E}[\max\{|v^\prime(X-P_1-w)|,|v^\prime(X-P_1+(1+\tau_2)w)|\}]<\infty     \label{v'-cond}
\end{eqnarray}
 Noting that 
  $X-P_1 -w \le L_d \le X-P_1 + (1+\tau_2)w$, Condition \eqref{v-cond} guarantees that the expectation $\mathbb{E}[v(L_d)]$ always exists, and Condition \eqref{v'-cond} validates of the application of the dominated convergence theorem in the analysis of the derivative of the objective function $J_v(d)$. 

To determine the optimal attachment point, it is necessary to calculate the derivative of the objective function with respect to $d$. For this purpose, we summarize the expression of derivatives of some key quantities below.

\begin{lemma}\label{l:derivatives}
The derivatives of $J_v(d)$ with respect to $d$ is
    \begin{equation}
        \frac{\partial }{\partial d}\, J_v(d) = \mathbb{P}\{d<X\le d+w\} \times h_v(d)\label{Jv'}
    \end{equation}
    where
    \begin{eqnarray}\label{hfun}
    h_v(d) &=&  \mathbb{E}[v^\prime(L_d)|d<X\le d+w] - (1+\tau_2)\mathbb{E}[v^\prime(L_d)]  \\
    &=& v^\prime(m(d))  - (1+\tau_2)\mathbb{E}[v^\prime(L_d)] \nonumber
    \end{eqnarray}
    is defined on $d\in [-w,\infty)$.
\end{lemma}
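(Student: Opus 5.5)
The plan is to differentiate $J_v(d)=\mathbb{E}[v(L_d(X))]$ under the expectation sign, using the pointwise derivative \eqref{ld'} together with Lemma~\ref{l:md}(i), and then to factor out $\mathbb{P}\{d<X\le d+w\}$.

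\textbf{Step 1 (interchange).} For fixed $x$, the map $d\mapsto L_d(x)$ is Lipschitz in $d$. Indeed, \eqref{ld'} gives a derivative of the form $m'(d)-1+\mathbb{I}\{d<x\le d+w\}$, and Lemma~\ref{l:md}(i) shows $|m'(d)|\le 1+(1+\tau_2)$, so this derivative is bounded uniformly in $x$. By the mean value theorem and the monotonicity of the convex function's derivative $v'$, the difference quotients of $v(L_d(x))$ are bounded by a constant times $\max\{|v'(X-P_1-w)|,|v'(X-P_1+(1+\tau_2)w)|\}$. This uses the sandwich $X-P_1-w\le L_d\le X-P_1+(1+\tau_2)w$ noted after \eqref{v'-cond}. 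By \eqref{v'-cond} this bound is integrable, so dominated convergence applies. Following Remark~\ref{r:derivative}, we ignore the one-sided issue at the kinks $x=d$ and $x=d+w$, which are null events because $X$ is continuous. We obtain
\[
J_v'(d)=\mathbb{E}\big[v'(L_d)\big(m'(d)-1+\mathbb{I}\{d<X\le d+w\}\big)\big].
\]

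\textbf{Step 2 (algebra).} Substitute $m'(d)-1=-(1+\tau_2)p(d)$ with $p(d)=\mathbb{P}\{d<X\le d+w\}$. This gives
\[
J_v'(d)=\mathbb{E}\big[v'(L_d)\mathbb{I}\{d<X\le d+w\}\big]-(1+\tau_2)p(d)\,\mathbb{E}[v'(L_d)].
\]
Writing the first term as $p(d)\,\mathbb{E}[v'(L_d)\mid d<X\le d+w]$ yields \eqref{Jv'} with the first expression for $h_v$.

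\textbf{Step 3 (plateau).} On $\{d<X\le d+w\}$, \eqref{piecewiseld} gives $L_d=m(d)$. Hence the conditional expectation equals $v'(m(d))$, which gives the second expression.

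When $p(d)=0$ the conditional expectation is undefined. This can only happen for $d$ beyond the support, and there both sides of \eqref{Jv'} vanish. Adopting the second expression as the definition of $h_v$ covers the whole domain $d\ge -w$.

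The main obstacle is the rigorous justification of the interchange in Step 1. The domination has to be built from the two endpoint bounds in \eqref{v'-cond}, via monotonicity of $v'$ on the interval in which $L_d(x)$ moves as $d$ varies.
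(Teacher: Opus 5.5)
Your proposal is correct and follows the paper's proof: differentiate $\mathbb{E}[v(L_d)]$ under the expectation using \eqref{ld'} and Lemma~\ref{l:md}(i), factor out $\mathbb{P}\{d<X\le d+w\}$, and use $L_d=m(d)$ on the coverage event. Your mean-value/sandwich domination built from \eqref{v'-cond} spells out what the paper leaves to ``a similar argument as in Lemma~\ref{l:md}''. One small correction: the degenerate case $\mathbb{P}\{d<X\le d+w\}=0$ also occurs at the left endpoint $d=-w$, since $\mathbb{P}\{-w<X\le 0\}=\mathbb{P}\{X=0\}=0$, and not only beyond the support; your convention of defining $h_v$ by the second expression already covers this.
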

\begin{proof}
    Following a similar argument as in the proof of Lemma \ref{l:md}, we have    
    \begin{eqnarray*}
        \frac{\partial }{\partial d} J_v(d) &=& \frac{\partial }{\partial d} \mathbb{E}[v(L_d)] 
        = \mathbb{E}\left[v^\prime(L_d)\times \frac{\partial }{\partial d} L_d\right]\\
        &=& \mathbb{E}[v^\prime(L_d)(\mathbb{I}\{d<X\le d+w\} - (1+\tau_2)\mathbb{P}\{d<X\le d+w\})]\\
        &=&  \mathbb{P}\{d<X\le d+w\}(\mathbb{E}[v^\prime(L_d)|d<X\le d+w] - (1+\tau_2)\mathbb{E}[v^\prime(L_d)])\\
       &=&  \mathbb{P}\{d<X\le d+w\}\big(v^\prime(m(d)) - (1+\tau_2)\mathbb{E}[v^\prime(L_d)]\big).
    \end{eqnarray*}
\end{proof}

\begin{lemma}\label{l:hprop}
    The function $h_v(d)$ as defined in \eqref{hfun} has the following properties:
    \begin{itemize}
        \item[(i)] $h_v(d)$ is increasing in $d \in [-w, \infty)$.
        \item[(ii)]  $h_v(d) \le 0$ for any $d\in[-w, \operatorname{VaR}_{\rho_0}(X)]$, where $\rho_0=\frac{\tau_2}{1+\tau_2}$.
        \item[(iii)] $\lim_{d\to \infty} h_v(d)= v^\prime(\infty) - (1+\tau_2)\mathbb{E}[v^\prime(X-P_1)]$.
    \end{itemize}
\end{lemma}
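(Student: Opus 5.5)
The plan is to handle the three parts separately. Part (i) is the only delicate one; (ii) is a direct pointwise comparison, and (iii) is a limit argument. The basic tool is the decomposition
\[
L_d(x)=m(d)+g_d(x),\qquad g_d(x):=x-d-(x-d)_+\wedge w ,
\]
so that $g_d=0$ on the plateau $\{d<x\le d+w\}$, $g_d(x)=x-d<0$ below it, and $g_d(x)=x-d-w>0$ above it. Write $p(d)=\mathbb{P}\{d<X\le d+w\}$, so that $m'(d)=1-(1+\tau_2)p(d)$ by Lemma~\ref{l:md}(i).

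For (i), I first compute formally, assuming $v$ is twice differentiable. By \eqref{ld'} and dominated convergence (justified by \eqref{v'-cond}),
\[
\tfrac{d}{dd}\mathbb{E}[v'(L_d)]=\mathbb{E}\big[v''(L_d)\big(m'(d)-1+\mathbb{I}\{d<X\le d+w\}\big)\big],
\qquad
\tfrac{d}{dd}v'(m(d))=v''(m(d))\,m'(d).
\]
Next I split $\mathbb{E}[v''(L_d)]=p\,v''(m(d))+E_{\rm off}$, where $E_{\rm off}:=\mathbb{E}[v''(L_d);X\notin(d,d+w]]\ge 0$. Substituting $m'=1-(1+\tau_2)p$ and collecting the $v''(m(d))$ terms and the $E_{\rm off}$ terms separately, the derivative should simplify to
\[
h_v'(d)=v''(m(d))\big(1-(1+\tau_2)p\big)^2+(1+\tau_2)^2\,p\,E_{\rm off}\;\ge 0 .
\]
Both terms are nonnegative by convexity of $v$, so $h_v$ is increasing. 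This identity is the heart of the argument, and I would double-check the algebra carefully.

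The main obstacle in (i) is that $v$ is only $C^1$. To remove the smoothness assumption, I would mollify: $v_\varepsilon=v*\eta_\varepsilon$ is convex and $C^\infty$, and $v_\varepsilon'\to v'$ locally uniformly. The integrability conditions \eqref{v-cond}--\eqref{v'-cond} transfer to $v_\varepsilon$ for small $\varepsilon$, since $v'$ is monotone and $L_d$ is sandwiched between $X-P_1-w$ and $X-P_1+(1+\tau_2)w$. Dominated convergence then gives $h_{v_\varepsilon}(d)\to h_v(d)$ for each $d$, and monotonicity survives pointwise limits. An alternative I would keep in reserve is to run the same bookkeeping with difference quotients and the monotonicity of $v'$, avoiding $v''$ altogether.

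For (ii), no monotonicity is needed; I compare pointwise. If $d\le \operatorname{VaR}_{\rho_0}(X)$, then $\mathbb{P}\{X>d\}\ge 1-\rho_0=1/(1+\tau_2)$. On $\{X>d\}$ we have $g_d(X)\ge 0$, hence $L_d\ge m(d)$, and therefore $v'(L_d)\ge v'(m(d))$ because $v'$ is increasing. Since $v$ is increasing, $v'\ge 0$, and so
\[
(1+\tau_2)\mathbb{E}[v'(L_d)]\ge (1+\tau_2)\,\mathbb{E}[v'(L_d);X>d]\ge (1+\tau_2)\,\mathbb{P}\{X>d\}\,v'(m(d))\ge v'(m(d)),
\]
which gives $h_v(d)\le 0$.

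For (iii), note that $m(d)-d=-P_1+(1+\tau_2)\mathbb{E}[(X-d)_+\wedge w]\to -P_1$ by dominated convergence. Hence $m(d)\to\infty$, and $v'(m(d))\uparrow v'(\infty)$, which is possibly infinite. Moreover, $L_d=X-(X-d)_+\wedge w+(m(d)-d)\to X-P_1$ almost surely. By continuity of $v'$ we get $v'(L_d)\to v'(X-P_1)$, and the envelope in \eqref{v'-cond} (again using the sandwich bounds on $L_d$ and monotonicity of $v'$) lets dominated convergence give $\mathbb{E}[v'(L_d)]\to\mathbb{E}[v'(X-P_1)]$. Combining the two limits yields the stated expression for $\lim_{d\to\infty}h_v(d)$.
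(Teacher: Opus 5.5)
Your proof is correct and follows the paper's argument. In (i) the paper performs the same differentiation of $h_v$ and uses $\mathbb{E}[v''(L_d)]\ge p\,v''(m(d))$ to reach $v''(m(d))\big(1-(1+\tau_2)p\big)^2\ge 0$, which your exact identity (with the extra nonnegative term $(1+\tau_2)^2p\,E_{\mathrm{off}}$) simply sharpens. Part (ii) is the identical pointwise bound. Part (iii) is the same limit argument, and your dominated-convergence version is cleaner than the paper's appeal to monotone convergence, since $L_d$ is not monotone in $d$.

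The mollification step is extra rigor the paper does not supply. If you keep it, use a kernel supported on $[0,\varepsilon]$, so that $v'(x-\varepsilon)\le v_\varepsilon'(x)\le v'(x)$ and the envelope in \eqref{v'-cond} dominates directly. With a symmetric kernel, your sandwich bound only gives domination by $v'(X-P_1+(1+\tau_2)w+\varepsilon)$, and the hypotheses do not guarantee that this is integrable.
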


\begin{proof}
(i)  Since $v^\prime$ is continuous and increasing, $v^\prime$ is differentiable almost everywhere. Recalling that $h_v(d)=v^\prime(m(d)) - (1+\tau_2)\mathbb{E}[v^\prime(L_d)]$, we have 
\begin{eqnarray*}
    h_v^\prime(d) &=& v^{\prime\prime}(m(d))\,m^\prime(d)  - (1+\tau_2)\mathbb{E}[v^{\prime\prime}(L_d)\times \frac{\partial}{\partial d} L_d] \nonumber\\
    &=& v^{\prime\prime}(m(d))(1-(1+\tau_2)\mathbb{P}\{d<X\le d+w\}) \nonumber\\
    && -\, (1+\tau_2) \mathbb{E}[v^{\prime\prime} (L_d)\times (\mathbb{I}\{d < X \le d+w\} - (1+\tau_2)\mathbb{P}\{d<X \le d+w\})] \nonumber\\
    &=& v^{\prime\prime}(m(d)) \left(1-2(1+\tau_2)\mathbb{P}\{d<X\le d+w\} + (1+\tau_2)^2 \mathbb{P}\{d<X\le d+w\} \frac{\mathbb{E}[v^{\prime\prime}(L_d)]}{v^{\prime\prime}(m(d))}\right) \\
    &\ge& v^{\prime\prime}(m(d)) \left(1-2(1+\tau_2)\mathbb{P}\{d<X\le d+w\} + (1+\tau_2)^2 (\mathbb{P}\{d<X\le d+w\})^2\right) \label{h'ineq}\\
    &=&  v^{\prime\prime}(m(d)) \left(1-(1+\tau_2)\mathbb{P}\{d<X\le d+w\}\right)^2 \ge 0.
\end{eqnarray*}
where the first inequality  follows from  
\begin{equation*}
   \mathbb{E}[v^{\prime\prime}(L_d)] \ge     \mathbb{E}[v^{\prime\prime}(L_d)\times \mathbb{I}\{d<X\le d+w\}] = v^{\prime\prime}(m(d))\times \mathbb{P}\{d<X\le d+w\}
\end{equation*}
due to the convexity of $v$.


(ii)
 Noting that $L_d \ge m(d)$ for any $X\ge d$ and $v^\prime(\cdot) \ge 0$, we have 
     \begin{eqnarray*}
        h_v(d) &=&  v^\prime(m(d))  - (1+\tau_2)\mathbb{E}[v^\prime(L_d)] \\
             &\le&  v^\prime(m(d))  - (1+\tau_2)\mathbb{E}[v^\prime(L_d) \mathbb{I}\{X>d\}]\\
             &\le&  v^\prime(m(d))  - (1+\tau_2)\mathbb{E}[v^\prime(m(d)) \mathbb{I}\{X>d\}]\\
             &=&  v^\prime(m(d))(1  - (1+\tau_2) \mathbb{P}\{X>d\}).
     \end{eqnarray*}
For any $d \le \operatorname{VaR}_{\rho_0}(X)$, we have $\mathbb{P}\{X \le d\} \le \rho_0$ and thus $\mathbb{P}\{X>d\}  \ge 1 - \rho_0 = \frac{1}{1+\tau_2}$, which implies $h_v(d)\le 0$.

(iii) immediately follows from (ii) and the monotone convergence theorem. 
\end{proof}

\begin{remark}\label{r1-d*>0}
 For a convex but not necessarily increasing function $v$, it can be proved similarly as (ii) that $h_v(d)\le 0$ for any $d\le 0$.
\end{remark}

\begin{theorem}\label{t:opt-d}
    Consider Problem \eqref{P-EDU'} with a convex function $v$ satisfying \eqref{v-cond} and \eqref{v'-cond}.  The optimal attachment point $d^*$ is given by 
    \begin{equation}\label{opt-d}
    d^* = \inf\{d: h_v(d) > 0\},    
    \end{equation}
with the convention of $\inf\emptyset = \infty$.
         
\end{theorem}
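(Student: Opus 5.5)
The plan is to read off the sign of $J_v'(d)$ from Lemma~\ref{l:derivatives}, $J_v'(d)=\mathbb{P}\{d<X\le d+w\}\,h_v(d)$, and use the monotonicity of $h_v$ from Lemma~\ref{l:hprop}(i). The factor $\mathbb{P}\{d<X\le d+w\}$ is nonnegative. On $[-w,\infty)$ it is in fact strictly positive for $d<\operatorname{ess\,sup}X$, because $X$ has a decreasing density and hence a support of the form $[0,b)$ with $b\le\infty$. So the sign of $J_v'$ agrees with the sign of $h_v$ wherever the probability factor is nonzero.

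Let $d^*=\inf\{d\ge -w: h_v(d)>0\}$. By Lemma~\ref{l:hprop}(ii) we have $h_v\le 0$ on $[-w,\operatorname{VaR}_{\rho_0}(X)]$, so $d^*\ge \operatorname{VaR}_{\rho_0}(X)\ge 0$ and the set over which we take the infimum never reaches the boundary $-w$. Since $h_v$ is increasing, $h_v(d)\le 0$ for all $d<d^*$ and $h_v(d)>0$ for all $d>d^*$, where monotonicity excludes any oscillation. Hence $J_v'\le 0$ on $[-w,d^*)$ and $J_v'\ge 0$ on $(d^*,\infty)$. By Lemma~\ref{l:derivatives}, $J_v$ is differentiable with derivative given by \eqref{Jv'}; this relies on dominated convergence justified by \eqref{v'-cond}. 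So $J_v$ is absolutely continuous on compact intervals, and integrating $J_v'$ shows that $J_v$ is nonincreasing on $[-w,d^*]$ and nondecreasing on $[d^*,\infty)$. When $d^*<\infty$ this makes $d^*$ a global minimizer over $d\ge -w$. If several minimizers exist, $d^*$ is the smallest, which fits the $\inf$ convention.

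The case $d^*=\infty$ needs separate treatment. Here $h_v\le 0$ everywhere, so $J_v$ is nonincreasing on $[-w,\infty)$ and its infimum is $\lim_{d\to\infty}J_v(d)$. Since $L_d\to X-P_1$ pointwise as $d\to\infty$, and \eqref{v-cond} gives an integrable dominating function, dominated convergence yields $\lim_{d\to\infty}J_v(d)=\mathbb{E}[v(X-P_1)]$. This is the value of issuing no effective coverage, so the "optimal attachment point" $d^*=\infty$ should be read as not triggering the bond at all. For bounded $X$ this value is attained at any $d\ge b$, and the convention is consistent with that.

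I expect the main obstacle to be the regularity of $J_v$ rather than the sign argument itself. Lemma~\ref{l:derivatives} gives pointwise differentiability, but to pass from the sign of the derivative to monotonicity I need either continuity of $J_v'$ or a mean value argument. The mean value theorem suffices once differentiability holds at every point, which follows from the continuity of the distribution of $X$ (Remark~\ref{r:derivative}), so I would invoke the mean value theorem directly. A second subtlety is the region $d\ge b$ when $X$ is bounded. There the probability factor vanishes and $J_v$ is constant, so the argument goes through without change.
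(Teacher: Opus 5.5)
Your argument is correct and is essentially the paper's proof. You read off the sign of $J_v'$ from Lemma~\ref{l:derivatives} and the monotonicity of $h_v$ from Lemma~\ref{l:hprop}(i), and you are more careful than the paper about the mean value step, the limit case $d^*=\infty$, and bounded support, where the paper's ``strictly increasing on $(d^*,\infty)$'' can fail.

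One side remark is backwards. Since $h_v$ is continuous and increasing, its zero set is an interval whose right endpoint is $d^*$, so under ties $d^*$ is the largest root of the first-order condition, not the smallest minimizer; the paper chooses it deliberately because higher attachment points appeal to investors.
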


\begin{proof}
Since $h_v(d)$ is increasing on $d\ge -w$ according to Lemma \ref{l:hprop}, it holds that $h_v(d)\le 0$ for any $d<d^*$ and $h_v(d)> 0$ for any $d>d^*$, and, by \eqref{Jv'}, the same statement holds for $J_v^\prime(d)$.  That means  $J_v(d)$ is decreasing on $[-w, d^*)$ and strictly increasing on $(d^*, \infty)$, which implies $J_v(d)$ obtains its minimum at $d=d^*$. 

\end{proof}

As discussed in Section \ref{sec:optimal_indemnity}, a decrease in the attachment point $d$ impacts the net loss random variable $L_d$ in two ways: lowering the anticipated net loss $m(d)$ while simultaneously raising the associated cost. In the expected disutility framework, the former effect reduces the expected disutility, yielding a marginal benefit of $v^\prime(m(d))$; conversely, the latter effect increases the expected disutility, imposing a marginal cost of $(1+\tau_2)\mathbb{E}[v^\prime(L_d)]$. Starting from $d=\infty$ (representing a no-issuance strategy), the marginal benefit exceeds the marginal cost. To minimize expected disutility, we must continually lower the attachment point until this is no longer the case, ultimately leading to the optimal attachment point defined in \eqref{opt-d}.

Intuitively, $d^*$ is determined by solving the first order condition $J^\prime(d)=0$ or equivalently 
$h_v(d)=0$. However, the solution to $h_v(d)=0$ may not exist or there could be multiple solutions. We thus use the expression \eqref{opt-d} to address these issues.  If there are multiple solutions to $h_v(d)=0$, the monotonicity of $h(\cdot)$ implies that  the solution set  forms an interval with $d^*$ as defined in \eqref{opt-d} being its  supremum.  Any $d$ in this interval qualifies as an optimal attachment point as they all lead to the same objective function value.  It suffices to use one value to represent them and we select $d^*$, the supremum of the solution set, because CAT bonds with a higher attachment point is generally more attractive to investors. When there is no solution, the optimal attachment point is the point where $h_v(d)$ changes its sign. 


Note that the optimal attachment point $d^*$, as specified by equation \eqref{opt-d}, may be infinite. This corresponds to the strategy of not issuing a CAT bond at all. From a practical standpoint, a natural question arises: under what conditions should the insurer issue a CAT bond, and when should they refrain?  The following proposition address this question. The proof follows directly from Theorem \ref{t:opt-d} and Lemma \ref{l:hprop} and is therefore omitted.

\begin{proposition} \label{p:d*prop}
The optimal attachment point $d^*$ specified by \eqref{opt-d} satisfies the following properties 
\begin{itemize}
\item[(i)] $d^*$ is finite if and only if  
\begin{equation}\label{cond-dfinite}
\frac{\mathbb{E}[v^\prime(X-P_1)]}{\lim\limits_{x\to \infty}v^\prime(x)}  <  \frac{1}{1+\tau_2}  
\end{equation}
\item[(ii)] $d^* \ge \operatorname{VaR}_{\rho_0}(X)$. 
\end{itemize}
\end{proposition}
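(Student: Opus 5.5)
Both parts follow from the structure already established: by Lemma \ref{l:hprop}(i), $h_v$ is increasing on $[-w,\infty)$, and by Theorem \ref{t:opt-d}, $d^*=\inf\{d: h_v(d)>0\}$. So the behaviour of $d^*$ is governed by where $h_v$ changes sign, and we need only locate that crossing using parts (ii) and (iii) of Lemma \ref{l:hprop}.

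For (ii), we use Lemma \ref{l:hprop}(ii), which gives $h_v(d)\le 0$ for every $d\in[-w,\operatorname{VaR}_{\rho_0}(X)]$. Hence no $d$ in this interval belongs to the set $\{d:h_v(d)>0\}$, and its infimum satisfies $d^*\ge \operatorname{VaR}_{\rho_0}(X)$. This is immediate.

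For (i), we use monotonicity: $d^*<\infty$ if and only if $h_v(d)>0$ for some finite $d$. Since $h_v$ is increasing, this holds if and only if $\lim_{d\to\infty}h_v(d)>0$, with one caveat treated below. By Lemma \ref{l:hprop}(iii), this limit equals $v'(\infty)-(1+\tau_2)\mathbb{E}[v'(X-P_1)]$. The justification is that $L_d\to X-P_1$ pointwise as $d\to\infty$ (since $(X-d)_+\wedge w\to 0$ and $m(d)-d\to -P_1$), and $m(d)\to\infty$ because $m'(d)\to 1$. Dominated convergence under \eqref{v'-cond} then handles the expectation term. Positivity of the limit is equivalent to $\mathbb{E}[v'(X-P_1)]/v'(\infty)<1/(1+\tau_2)$, which is \eqref{cond-dfinite}, read as the trivial inequality $0<1/(1+\tau_2)$ when $v'(\infty)=\infty$.

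The main obstacle is the \emph{only if} direction in the boundary case. An increasing function may have $\lim h_v=0$ while being negative everywhere, in which case $d^*=\infty$; that is consistent with the strict inequality. Conversely, if $\lim h_v\le 0$, then monotonicity gives $h_v(d)\le \lim h_v\le 0$ for all $d$, so $d^*=\infty$. This shows that strictness in \eqref{cond-dfinite} is exactly right. We also need care when $v'(\infty)=\infty$, where the ratio is $0$ provided $\mathbb{E}[v'(X-P_1)]<\infty$, which \eqref{v'-cond} guarantees.
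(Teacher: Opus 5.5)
Your proposal is correct and follows the paper's argument. The paper omits the proof, saying it follows directly from Theorem~\ref{t:opt-d} and Lemma~\ref{l:hprop}; you use exactly that combination, with part (ii) from the sign property of $h_v$ and part (i) from the monotonicity of $h_v$ and its limit, where your dominated-convergence justification of that limit via \eqref{v'-cond} is cleaner than the one the paper gives for Lemma~\ref{l:hprop}(iii).
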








\begin{remark}\label{r2-d*>0}
Note that for a nonnegative loss random variable $X$, $\operatorname{VaR}_{\rho_0}(X)\ge 0$.   Proposition \ref{p:d*prop}(ii) implies that $d^*\ge 0$ and thus $I^*(0)=0$  for the optimal indemnity $I^*(x) = \frac{(x-d^*)_+}{w} \wedge 1$.  
 It is worth noting that although the conclusion of $d^* \ge \operatorname{VaR}_{\rho_0}(X)$ and thus $d^*\ge 0$ is derived under the assumption that $v$ is increasing, convex, and continuously differentiable, the results $d^*\ge 0$ and hence $I^*(0)=0$ hold true for general convex disutility functions according to Remark \ref{r1-d*>0}. 
\end{remark}



We shall study Problem \eqref{P-EDU'} with the following  disutility functions, 
\begin{equation*}
v_0(x) = x, \quad    v_1(x)=|x|, \quad v_2(x)=x_+, \quad v_3(x)=x^2, \quad v_4(x) = x_+^2.
\end{equation*}

 The function $v_0(x) = x$ represents the baseline risk-neutral scenario and the expected net loss. To model asymmetric risk preferences, we introduce the function $v_2(x) = x_+$ which captures a linear risk aversion and penalizes exclusively positive losses, while $v_4(x) = x^2_+$ represents higher-order risk aversion, penalizing these positive losses more than proportionally. Additionally, we introduce the symmetric disutility functions $v_1(x)$ and $v_3(x)$ to account for general uncertainty aversion, wherein any deviation from zero is penalized regardless of its direction. Specifically, $v_1(x) = |x|$ represents linear uncertainty aversion by treating all deviations equally, whereas $v_3(x) = x^2$ captures quadratic uncertainty aversion, reflecting a strong inherent aversion to variance and volatility. Each of these criteria admits a natural statistical interpretation. The problem induced by $v_1$ is the mean absolute deviation of the net loss, while that induced by $v_3$ is the mean squared value. The problems induced by $v_2$ and $v_4$ are the corresponding partial moments above zero, which in turn serve as downside analogues of the ones previously mentioned. Note that the event $L_d = 0$ marks the break-even state of the sponsor, in which income from premiums and coverages exactly offsets the retained loss and the costs, so that positive deviations represent shortfalls whereas negative deviations from zero correspond to overhedging gains. Each disutility function therefore carries a twofold interpretation, first as an attitude toward risk and second as a statistical summary of the net loss, and the latter reading will play a relevant role for the risk-reduction analysis deferred to Section~\ref{sec:case_study}. As we proceed to the problem solutions, we shall evaluate the appropriateness of these choices by analyzing the behavior of the resulting optimal solutions. To this end, we denote the optimal attachment points under these respective  functions as $d^*_0, d^*_1, d^*_2, d^*_3$, and $d^*_4$.

\begin{proposition}\label{p:d034}
\begin{itemize}
    \item[(i)] The optimal attachment point $d_0^* = \infty$, that is, the optimal strategy is not to issue any CAT bonds under the disutility function $v_0$. 
    \item[(ii)] The optimal attachment points $d_3, d_4$ are respectively unique solutions to the following equations:
    \begin{eqnarray}
  \label{d3eqn}     h_{v_3}(d) &:=& m(d)  -  (1+\tau_2)\mathbb{E}[L_d] = 0 \\
    \label{d4eqn}   h_{v_4}(d) &:=& (m(d))_+ - (1+\tau_2)\mathbb{E}[(L_d)_+] = 0 .
    \end{eqnarray}
\end{itemize}
\end{proposition}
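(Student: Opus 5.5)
For part (i), I will compute $h_{v_0}$ directly from \eqref{hfun}. Since $v_0'\equiv 1$, we get $h_{v_0}(d)=1-(1+\tau_2)=-\tau_2\le 0$ for every $d\ge -w$. The set $\{d:h_{v_0}(d)>0\}$ is then empty, and Theorem~\ref{t:opt-d} with the convention $\inf\emptyset=\infty$ gives $d_0^*=\infty$. Alternatively, Proposition~\ref{p:d*prop}(i) applies with ratio $\mathbb{E}[v_0'(X-P_1)]/v_0'(\infty)=1\not<\frac{1}{1+\tau_2}$. I will note the interpretation that $J_{v_0}(d)=\mathbb{E}[L_d]=-\tau_1\mathbb{E}[X]+\tau_2\mathbb{E}[(X-d)_+\wedge w]$ is decreasing in $d$.

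For part (ii), I first identify $h_{v_3}$ and $h_{v_4}$. With $v_3'(x)=2x$, formula \eqref{hfun} gives $h_{v_3}(d)=2\big(m(d)-(1+\tau_2)\mathbb{E}[L_d]\big)$. With $v_4'(x)=2x_+$, it gives $h_{v_4}(d)=2\big(m(d)_+-(1+\tau_2)\mathbb{E}[(L_d)_+]\big)$. The factor $2$ is dropped, which yields \eqref{d3eqn} and \eqref{d4eqn}. Two technical points need a remark. First, $v_3$ is not increasing, but Lemma~\ref{l:derivatives} only uses differentiability and Lemma~\ref{l:hprop}(i) only uses convexity, so Theorem~\ref{t:opt-d} still applies. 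Second, $v_4'$ is only Lipschitz, but the monotonicity argument only needs $v''\ge0$ a.e.\ and absolute continuity of $v'$. Conditions \eqref{v-cond}--\eqref{v'-cond} hold if $\mathbb{E}[X^2]<\infty$, which I will state as an implicit assumption.

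Existence of a root comes from a sign change of these continuous functions.
\begin{itemize}
\item[$\bullet$] \emph{Lower end.} At $d\le 0$, Remark~\ref{r1-d*>0} and Lemma~\ref{l:hprop}(ii) give $h\le 0$.
\item[$\bullet$] \emph{Upper end, $v_3$.} As $d\to\infty$, $m(d)\to\infty$ linearly, while $\mathbb{E}[L_d]\to\mathbb{E}[X]-P_1=-\tau_1\mathbb{E}[X]$ stays bounded. Hence $h_{v_3}(d)\to\infty$.
\item[$\bullet$] \emph{Upper end, $v_4$.} Here $m(d)_+\to\infty$, while $\mathbb{E}[(L_d)_+]\to\mathbb{E}[(X-P_1)_+]<\infty$ by dominated convergence. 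Hence $h_{v_4}(d)\to\infty$.
\end{itemize}
Continuity of $h$ in $d$ follows from continuity of $m$ (Lemma~\ref{l:md}) and dominated convergence applied to $\mathbb{E}[L_d]$ and $\mathbb{E}[(L_d)_+]$. The intermediate value theorem then gives a root, and $d^*=\inf\{h>0\}$ is finite and is a root.

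The main obstacle is uniqueness. Lemma~\ref{l:hprop}(i) gives only weak monotonicity, so I need \emph{strict} increase near any root. I will revisit the inequality chain in the proof of Lemma~\ref{l:hprop}(i):
\[
h'(d)\ge v''(m(d))\big(1-(1+\tau_2)\mathbb{P}\{d<X\le d+w\}\big)^2 + (1+\tau_2)^2\mathbb{P}\{d<X\le d+w\}\,\mathbb{E}\big[v''(L_d)\mathbb{I}\{X\notin(d,d+w]\}\big].
\]
\textbf{Case $v_3$.} Here $v''\equiv2$. The second term is $2(1+\tau_2)^2p\,(1-p)$, where $p=\mathbb{P}\{d<X\le d+w\}$. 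This is strictly positive whenever $0<p<1$, which holds for $d\ge0$ when $X$ has a positive, decreasing density of unbounded support. Moreover, the first and second terms cannot vanish simultaneously, so $h_{v_3}'>0$.

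\textbf{Case $v_4$.} Here $v''=2\,\mathbb{I}\{x>0\}$. At a root we must have $m(d)>0$: if $m(d)\le0$, then $h_{v_4}(d)=-(1+\tau_2)\mathbb{E}[(L_d)_+]<0$, because $L_d=X-d-w+m(d)>0$ for large $X$. Then $v''(m(d))=2$, and $h_{v_4}$ is strictly increasing on a neighborhood, so two roots are impossible. Indeed, between two roots $h\equiv0$ would force $h'=0$.

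If a tidier route is needed, an alternative is to write $h_{v_3}(d)=m(d)(1-(1+\tau_2))-(1+\tau_2)\mathbb{E}[L_d-m(d)]$. This is an explicit expression, and strict monotonicity could be checked directly from $m'$ and $\partial_d\mathbb{E}[L_d]=\tau_2\cdot(-p)$.
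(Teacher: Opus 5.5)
\textbf{The gap: the lower end for $v_3$.} Remark~\ref{r1-d*>0} does not give $h_{v_3}\le 0$ for $d\le 0$. For such $d$ one has $X>d$ a.s., so the argument of Lemma~\ref{l:hprop}(ii) only yields $h_v(d)\le -\tau_2\,v'(m(d))$. That bound is uninformative when $v'(m(d))<0$, which for $v_3$ happens exactly when $m(d)<0$, e.g.\ at $d=0$ under \eqref{m(0)<0}. Computing directly, $L_d=m(d)+(X-d-w)_+$ for $d\in[-w,0]$, hence
\[
m(d)-(1+\tau_2)\mathbb{E}[L_d]=-\tau_2(\tau_2-\tau_1)\mathbb{E}[X]+\tau_2^2\,d-(1-\tau_2^2)\,\mathbb{E}[(X-d-w)_+].
\]
This is $\le 0$ if $\tau_2\le 1$, and under that hypothesis your existence argument is complete.

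Without it the claim is false. Take $\tau_1=\tau_2=2$: the right-hand side at $d=-w$ equals $3\mathbb{E}[X]-4w$, which is positive whenever $\mathbb{E}[X]>4w/3$. Then $h_{v_3}>0$ on all of $[-w,\infty)$, \eqref{d3eqn} has no root in the domain, and $d_3^*=-w$. So part (ii) for $v_3$ needs an extra assumption such as $\tau_2\le 1$. The paper's one-line justification via Proposition~\ref{p:d*prop}, which presumes $v$ increasing, has the same hole.

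\textbf{The rest.} It is sound and follows the paper's route (Theorem~\ref{t:opt-d} plus a sign change of $h$). Your uniqueness argument supplies something the paper asserts but never proves, since Lemma~\ref{l:hprop}(i) gives only weak monotonicity. It needs two small repairs and one remark.

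First, for $v_4$, $v''(m(d))=2$ by itself does not make the first term positive if $\mathbb{P}\{d<X\le d+w\}=1/(1+\tau_2)$. You also need the second term, which is at least $2(1+\tau_2)^2\,\mathbb{P}\{d<X\le d+w\}\,\mathbb{P}\{X>d+w\}$ because $L_d>m(d)>0$ on $\{X>d+w\}$. Hence the two terms cannot vanish together.

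Second, the unbounded-support hypothesis you invoke is genuinely needed, not a convenience. For $X$ uniform on $[0,1]$, $w=2$, $\tau_1=\tau_2=0.3$, one has $m(d)<0$ and $L_d\le m(d)<0$ a.s.\ for $d\in(0,6/13)$, so $h_{v_4}\equiv 0$ on that interval.

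For $v_3$, your alternative route is the cleanest:
\[
\frac{\partial}{\partial d}\big(m(d)-(1+\tau_2)\mathbb{E}[L_d]\big)=1-(1-\tau_2^2)\,\mathbb{P}\{d<X\le d+w\}\ge\min\{1,\tau_2^2\}>0
\]
for $\tau_2>0$, with no support assumption.
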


The conclusion of Proposition \ref{p:d034} directly follows from Theorem \ref{t:opt-d} and Proposition \ref{p:d*prop}. Proposition \ref{p:d034}(i) is straightforward. Under a linear disutility function, the objective reduces to minimizing the expected net loss; therefore, the optimal strategy is to avoid issuing CAT bonds entirely to eliminate the associated costs.

To study the optimization problem under disutility functions $v_1$ and $v_2$, it is crucial to determine the sign of $m(d)$. To this end, we make the following assumption to simplify the discussion: 
\begin{equation}\label{m(0)<0}
    (1+\tau_1)\mathbb{E}[X] >  (1+\tau_2)\mathbb{E} [X\wedge w]. 
\end{equation}

Note that $(1+\tau_1)\mathbb{E}[X]$ represents the total premium collected by the insurer, while $(1+\tau_2)\mathbb{E} [X\wedge w] \ge (1+\tau_2)\mathbb{E} [(X-d)_+\wedge w]$ denotes the maximum possible reinsurance premium payable to the CAT bond investors. Intuitively, condition \eqref{m(0)<0} ensures that the outgoing reinsurance premium never exceeds the collected premium, which reflects a reasonable design constraint.

Furthermore, condition \eqref{m(0)<0} implies that $m(0)< 0$. According to Lemma \ref{l:md}, 
$m(d)$ is convex on $[0, \infty)$. Thus, the equation $m(d)=0$ yields only one positive root, denoted by $e_0$. Clearly, $m(d)>0$ if and only if $d>e_0$. Furthermore, define $\eta(d) = d - m(d)$ and note that $\eta$ is strictly increasing, hence invertible. Denote its inverse by $\eta^{-1}$. Note that $\eta^{-1}(e_0) = e_0$ and $L_d > 0$ if and only if $X > \eta(d)$.

\begin{proposition}
\label{p:d1_d2}
    The optimal attachment points, $d^*_1, d^*_2$ are  specified by 
    \begin{eqnarray}
\label{d1-exp}        d^*_1 &=& \eta^{-1}\left( (e_0 \vee \operatorname{VaR}_{\rho^\prime}(X)) \wedge \operatorname{VaR}_{1-\rho^\prime}(X) \right) \\
 \label{d2-exp}       d^*_2 &=&  \eta^{-1}(e_0\vee\operatorname{VaR}_{\rho_0}(X))
    \end{eqnarray}
    where $\rho_0=\frac{\tau_2}{1+\tau_2}$,  $\rho^\prime=\frac{\tau_2}{2(1+\tau_2)}$.
\end{proposition}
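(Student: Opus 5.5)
The plan is to compute $J_v'(d)$ directly for $v_1$ and $v_2$. Lemma~\ref{l:derivatives} does not apply, because $v_1,v_2$ are not differentiable at $0$. However, $X$ is continuous and $L_d(X)$ is strictly increasing in $X$ off the plateau $[d,d+w]$. So the kink of $v$ is hit with positive probability only when $m(d)=0$, i.e.\ $d=e_0$. For $d\neq e_0$ I would therefore differentiate under the expectation using \eqref{ld'} and dominated convergence, with $v'$ replaced by $\mathbb{I}\{L_d>0\}$ for $v_2$ and by $\operatorname{sgn}(L_d)$ for $v_1$. Write $p(d)=\mathbb{P}\{d<X\le d+w\}$. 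This gives
\[
J_{v_2}'(d)=\mathbb{E}\big[\mathbb{I}\{L_d>0\}\big(\mathbb{I}\{d<X\le d+w\}-(1+\tau_2)p(d)\big)\big].
\]
For $v_1$ I would use $|x|=2x_+-x$. Since $\frac{d}{dd}\mathbb{E}[L_d]=-\tau_2 p(d)$, this yields $J_{v_1}'=2J_{v_2}'+\tau_2 p(d)$.

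The analysis then splits according to the sign of the plateau, using assumption \eqref{m(0)<0} and the uniqueness of the root $e_0$ (Lemma~\ref{l:md}).

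\emph{Case $d>e_0$.} Here $m(d)>0$, so the whole plateau has $L_d>0$, and $\{L_d>0\}=\{X>\eta(d)\}$ with $\eta(d)<d$. Hence
\[
J_{v_2}'(d)=p(d)\big((1+\tau_2)F(\eta(d))-\tau_2\big),
\]
whose sign is that of $F(\eta(d))-\rho_0$. For $v_1$ one gets
\[
J_{v_1}'(d)=p(d)\big(2(1+\tau_2)F(\eta(d))-\tau_2\big),
\]
whose sign is that of $F(\eta(d))-\rho'$.

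\emph{Case $d<e_0$.} Here $m(d)<0$, the plateau lies in $\{L_d<0\}$, and $\{L_d<0\}=\{X<\theta(d)\}$ for a threshold $\theta(d)\ge d+w$. For $v_2$ this gives $J_{v_2}'=-(1+\tau_2)p(d)\mathbb{P}\{L_d>0\}\le 0$. For $v_1$,
\[
J_{v_1}'(d)=p(d)\big(2(1+\tau_2)F(\theta(d))-(2+\tau_2)\big),
\]
whose sign is that of $F(\theta(d))-(1-\rho')$.

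The conclusions then follow from monotonicity. Since $\eta$ (and likewise $\theta$) is strictly increasing and $F$ is increasing, each derivative changes sign at most once, from negative to positive. For $v_2$, $J$ decreases on $[-w,e_0]$, and on $[e_0,\infty)$ it decreases until $\eta(d)$ reaches $\operatorname{VaR}_{\rho_0}(X)$. Using $\eta(e_0)=e_0$, the minimizer is $\eta^{-1}(e_0\vee\operatorname{VaR}_{\rho_0}(X))$, which is \eqref{d2-exp}.

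For $v_1$ there are three sub-cases:
\begin{itemize}
\item[(a)] If $e_0<\operatorname{VaR}_{\rho'}(X)$, the minimizer lies above $e_0$ at $\eta^{-1}(\operatorname{VaR}_{\rho'})$.
\item[(b)] If $\operatorname{VaR}_{\rho'}\le e_0\le \operatorname{VaR}_{1-\rho'}$, the derivative is $\le0$ to the left of $e_0$ and $\ge0$ to the right, so $d^*_1=e_0=\eta^{-1}(e_0)$, a kink minimum.
\item[(c)] If $e_0>\operatorname{VaR}_{1-\rho'}$, the minimizer lies below $e_0$, where the left-case condition is active.
\end{itemize}
Combining the three sub-cases gives the clamp $(e_0\vee\operatorname{VaR}_{\rho'})\wedge\operatorname{VaR}_{1-\rho'}$ in \eqref{d1-exp}. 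At $d=e_0$ itself I would use one-sided derivatives, as in Remark~\ref{r:derivative}.

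The main obstacle is sub-case (c). There one must check that the threshold $\theta(d)$ governing $\{L_d<0\}$ on $d<e_0$ yields exactly $\eta^{-1}(\operatorname{VaR}_{1-\rho'}(X))$. A naive reading gives $\theta(d)=d+w-m(d)$ rather than $\eta(d)=d-m(d)$. I would first recheck how $\eta$ is meant to describe the zero crossing of $L_d$ when $m(d)<0$, i.e.\ whether it refers to the upper slope-one piece, and reconcile the two expressions. If they genuinely differ, the upper cap in \eqref{d1-exp} should be read through the corresponding inverse map.
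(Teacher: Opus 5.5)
Your derivative formulas are correct, and the discrepancy you flag at the end is real, not a misreading. So the statement cannot be proved as written.

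Your route is the paper's route: insert the one-sided derivative of $v_1$ or $v_2$ into $h_v$, split at $e_0$, and check whether a branch or the jump at $e_0$ crosses zero. Your direct differentiation away from $e_0$ is in fact more careful than the paper's formal use of Lemma~\ref{l:derivatives}, which assumes $v\in C^1$.

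The paper relies on the claim, made just before the proposition, that $L_d>0$ iff $X>\eta(d)$. It therefore writes $h_{v_1}(d)=\tau_2-2(1+\tau_2)\mathbb{P}\{X>\eta(d)\}$ also for $0\le d\le e_0$. That claim holds only when $m(d)\ge 0$. When $m(d)<0$ the plateau is negative, and $L_d$ changes sign on the upper slope-one piece, exactly at your $\theta(d)=d+w-m(d)=\eta(d)+w$. At $d=e_0$ one has $\{L_{e_0}>0\}=\{X>e_0+w\}$.

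Hence the left branch is $\tau_2-2(1+\tau_2)\mathbb{P}\{X>\eta(d)+w\}$, and its root is $\eta^{-1}\big(\operatorname{VaR}_{1-\rho'}(X)-w\big)$. The cap $\operatorname{VaR}_{1-\rho'}(X)$ in \eqref{d1-exp} must be replaced by $\operatorname{VaR}_{1-\rho'}(X)-w$.

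The paper's own baseline (Lomax with $\hat\alpha=3.44$, $\hat\theta=28.92$, $\tau_2=0.3$, $w=23.71$) is a counterexample to \eqref{d1-exp}:
\begin{itemize}
\item There $e_0\approx 6.14$ lies between $\operatorname{VaR}_{\rho'}(X)\approx 1.05$ and $\operatorname{VaR}_{1-\rho'}(X)\approx 25.26$, so \eqref{d1-exp} returns $e_0$.
\item But $\mathbb{P}\{X>e_0+w\}\approx 0.087<\rho'\approx 0.115$. So the left derivative of $\mathbb{E}|L_d|$ at $e_0$ is strictly positive, and $e_0$ is not a minimizer.
\item The true minimizer solves $\eta(d)=\operatorname{VaR}_{1-\rho'}(X)-w\approx 1.55$, giving $d_1^*\approx 0.3$. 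The coincidence $d_1^*=d_2^*$ reported in the case study comes from the same slip.
\end{itemize}

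Accordingly, your sub-case (b) is inconsistent with your own formula. The left derivative at $e_0$ has the sign of $F(e_0+w)-(1-\rho')$. So the kink is the minimum iff $\operatorname{VaR}_{\rho'}(X)\le e_0\le\operatorname{VaR}_{1-\rho'}(X)-w$. Sub-case (c) is then $e_0>\operatorname{VaR}_{1-\rho'}(X)-w$, with $d_1^*=\eta^{-1}\big(\operatorname{VaR}_{1-\rho'}(X)-w\big)<e_0$.

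To glue the two branches you also need the derivative to jump upward at the kink, which you use without stating. The right and left limits of $h_{v_1}$ at $e_0$ differ by $2-2(1+\tau_2)\mathbb{P}\{e_0<X\le e_0+w\}=2m'(e_0)$. This is positive because $m$ is convex on $[0,\infty)$ with $m(0)<0=m(e_0)$.

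The same inequality shows $e_0$ never lies in $\big[\operatorname{VaR}_{1-\rho'}(X)-w,\operatorname{VaR}_{\rho'}(X)\big)$, so the three cases are exhaustive. The clamp $\eta^{-1}\big((e_0\vee\operatorname{VaR}_{\rho'}(X))\wedge(\operatorname{VaR}_{1-\rho'}(X)-w)\big)$ is the correct formula when $\operatorname{VaR}_{\rho'}(X)\le\operatorname{VaR}_{1-\rho'}(X)-w$, as in the baseline. Otherwise the result should be stated casewise.

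Your argument for \eqref{d2-exp} is complete and agrees with the paper's. On $d\le e_0$, $h_{v_2}(d)=-(1+\tau_2)\mathbb{P}\{L_d>0\}\le 0$ regardless of where $L_d$ crosses zero, so the slip does not affect $d_2^*$.
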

\begin{proof}
    With $v_1 =|x|$ and thus $v_1^\prime(x) = \mathbb{I}\{x> 0\} - \mathbb{I}\{x \le 0\}$,
    we have 
    \begin{eqnarray*}
        h_{v_1}(d) & = & \mathbb{I}\{m(d) > 0\} - \mathbb{I}\{m(d) \le  0\}  - (1+\tau_2) \mathbb{E}[\mathbb{I}\{L_d > 0\} - \mathbb{I}\{L_d \le 0\}] \\
         &=& 
         \begin{cases}
2+\tau_2 - 2(1+\tau_2) \mathbb{P}\{X> \eta(d)\}, & d > e_0 \\[4pt]
\tau_2 - 2(1+\tau_2) \mathbb{P}\{X > \eta(d)\},    & 0 \le d \le e_0
\end{cases}
    \end{eqnarray*}

Note that the graph of $h_{v_1}(d)$ has a jump at $e_0$, with 
\begin{equation*}
    h_{v_1}(e_0) = \tau_2 - 2(1+\tau_2) \mathbb{P}\{X > e_0\}, \quad
   h_{v_1}(e_0+) = 2+ \tau_2 - 2(1+\tau_2) \mathbb{P}\{X > e_0\}
\end{equation*}
    This jump divides the graph into two branches and  the position of $h_{v_1}(e_0),0, h_{v_1}(e_0+)$ determines which branch crosses zero.


\begin{figure}[htbp]
\centering
\captionsetup[subfigure]{justification=centering,singlelinecheck=false,font=footnotesize}

\begin{subfigure}[t]{0.27\textwidth}
\centering
\begin{tikzpicture}[
    x=0.72cm,y=0.72cm,>=Stealth,
    axis/.style={thick,-{Stealth[length=2mm]}},
    leftcurve/.style={very thick,red!75!black},
    rightcurve/.style={very thick,green!50!black},
    tick/.style={black,thick},
    guide/.style={densely dashed,gray!70},
    ylab/.style={font=\scriptsize,anchor=east,xshift=-2pt,fill=white,inner sep=1pt}
]
    \pgfmathsetmacro{\ds}{1.45}
    \pgfmathsetmacro{\e}{2.95}

    \pgfmathsetmacro{\he}{0.55}
    \pgfmathsetmacro{\hep}{1.15}

    \draw[axis] (0,-2.1) -- (0,2.1);
    \draw[axis] (0,0) -- (4.70,0);

    \node[above left] at (0,2.1) {$h_{v_1}(d)$};
    \node[below right] at (4.70,0) {$d$};

    \draw[guide] (0,\he) -- (\e,\he);
    \draw[guide] (0,\hep) -- (\e,\hep);
    \draw[tick] (0.08,\he) -- (-0.08,\he);
    \draw[tick] (0.08,\hep) -- (-0.08,\hep);
    \node[ylab] at (0,\he) {$h(e_0)$};
    \node[ylab] at (0,\hep) {$h(e_0^+)$};

    \draw[leftcurve]
        plot[smooth] coordinates
        {(0.35,-1.60) (0.70,-1.56) (0.95,-1.25) (1.15,-0.65) (\ds,0.00) (2.05,0.22) (\e,\he)};

    \draw[rightcurve]
        plot[smooth] coordinates
        {(\e,\hep) (3.35,1.32) (3.90,1.45) (4.30,1.54) (4.60,1.60)};

    \draw[dashed] (\e,\he) -- (\e,\hep);

    \fill[red!75!black] (\ds,0.00) circle (1.4pt);
    \fill[red!75!black] (\e,\he) circle (1.4pt);
    \draw (\e,\hep) circle (1.4pt);

    \draw[tick] (\ds,0.08) -- (\ds,-0.08);
    \node[above=10pt,xshift=-2pt] at (\ds,0) {$d_1^*$};

    \draw[tick] (\e,0.08) -- (\e,-0.08);
    \node[below=7pt,xshift=2pt] at (\e,0) {$e_0$};
\end{tikzpicture}
\caption{\centering Left-branch crossing\\[2pt]
$d_1^*=\eta^{-1}\!\bigl(\mathrm{VaR}_{1-\rho'}(X)\bigr)$.}
\end{subfigure}
\hspace{0.055\textwidth}
\begin{subfigure}[t]{0.27\textwidth}
\centering
\begin{tikzpicture}[
    x=0.72cm,y=0.72cm,>=Stealth,
    axis/.style={thick,-{Stealth[length=2mm]}},
    leftcurve/.style={very thick,red!75!black},
    rightcurve/.style={very thick,green!50!black},
    tick/.style={black,thick},
    guide/.style={densely dashed,gray!70},
    ylab/.style={font=\scriptsize,anchor=east,xshift=-2pt,fill=white,inner sep=1pt}
]
    \pgfmathsetmacro{\e}{2.35}

    \pgfmathsetmacro{\he}{-0.22}
    \pgfmathsetmacro{\hep}{0.45}

    \draw[axis] (0,-2.1) -- (0,2.1);
    \draw[axis] (0,0) -- (4.70,0);

    \node[above left] at (0,2.1) {$h_{v_1}(d)$};
    \node[below right] at (4.70,0) {$d$};

    \draw[guide] (0,\he) -- (\e,\he);
    \draw[guide] (0,\hep) -- (\e,\hep);
    \draw[tick] (0.08,\he) -- (-0.08,\he);
    \draw[tick] (0.08,\hep) -- (-0.08,\hep);
    \node[ylab,yshift=-2pt] at (0,\he) {$h(e_0)$};
    \node[ylab] at (0,\hep) {$h(e_0^+)$};

    \draw[leftcurve]
        plot[smooth] coordinates
        {(0.35,-1.65) (0.85,-1.56) (1.35,-1.22) (1.85,-0.78) (\e,\he)};

    \draw[rightcurve]
        plot[smooth] coordinates
        {(\e,\hep) (2.75,0.66) (3.30,0.86) (4.00,1.03) (4.50,1.14)};

    \draw[dashed] (\e,\he) -- (\e,\hep);

    \fill[red!75!black] (\e,\he) circle (1.4pt);
    \draw (\e,\hep) circle (1.4pt);

    \draw[tick] (\e,0.08) -- (\e,-0.08);
    \node[below=15pt,xshift=3pt] at (\e,0) {$e_0=d_1^*$};
\end{tikzpicture}
\caption{\centering Crossing at the jump\\[2pt]
$d_1^*=e_0$.}
\end{subfigure}
\hspace{0.055\textwidth}
\begin{subfigure}[t]{0.27\textwidth}
\centering
\begin{tikzpicture}[
    x=0.72cm,y=0.72cm,>=Stealth,
    axis/.style={thick,-{Stealth[length=2mm]}},
    leftcurve/.style={very thick,red!75!black},
    rightcurve/.style={very thick,green!50!black},
    tick/.style={black,thick},
    guide/.style={densely dashed,gray!70},
    ylab/.style={font=\scriptsize,anchor=east,xshift=-2pt,fill=white,inner sep=1pt}
]
    \pgfmathsetmacro{\e}{2.10}
    \pgfmathsetmacro{\ds}{3.15}

    \pgfmathsetmacro{\he}{-1.00}
    \pgfmathsetmacro{\hep}{-0.35}

    \draw[axis] (0,-2.1) -- (0,2.1);
    \draw[axis] (0,0) -- (4.70,0);

    \node[above left] at (0,2.1) {$h_{v_1}(d)$};
    \node[below right] at (4.70,0) {$d$};

    \draw[guide] (0,\he) -- (\e,\he);
    \draw[guide] (0,\hep) -- (\e,\hep);
    \draw[tick] (0.08,\he) -- (-0.08,\he);
    \draw[tick] (0.08,\hep) -- (-0.08,\hep);
    \node[ylab] at (0,\he) {$h(e_0)$};
    \node[ylab] at (0,\hep) {$h(e_0^+)$};

    \draw[leftcurve]
        plot[smooth] coordinates
        {(0.35,-1.65) (0.85,-1.58) (1.30,-1.42) (1.70,-1.22) (\e,\he)};

    \draw[rightcurve]
        plot[smooth] coordinates
        {(\e,\hep) (2.40,-0.18) (\ds,0.00) (3.90,0.52) (4.45,0.72)};

    \draw[dashed] (\e,\he) -- (\e,\hep);

    \fill[red!75!black] (\e,\he) circle (1.4pt);
    \draw (\e,\hep) circle (1.4pt);
    \fill[green!50!black] (\ds,0.00) circle (1.4pt);

    \draw[tick] (\e,0.08) -- (\e,-0.08);
    \node[below=-15pt,xshift=-3pt] at (\e,0) {$e_0$};

    \draw[tick] (\ds,0.08) -- (\ds,-0.08);
    \node[below=7pt,xshift=2pt] at (\ds,0) {$d_1^*$};
\end{tikzpicture}
\caption{\centering Right-branch crossing\\[2pt]
$d_1^*=\eta^{-1}\!\bigl(\mathrm{VaR}_{\rho'}(X)\bigr)$.}
\end{subfigure}

\caption{Scenarios of $h_{v_1}(d)$ crossing zero to determine $d_1^*$.}
\label{fig:hv1_three_crossings}
\end{figure}


If $h_{v_1}(e_0) > 0$, that is, $e_0 > \operatorname{VaR}_{1-\rho^\prime}(X)$, then the left branch of $h_{v_1}(d)$ crosses zero,  as shown in Figure \ref{fig:hv1_three_crossings}(a). Therefore, $d_1^*$ is obtained by solving
\begin{equation*}
    \tau_2 - 2(1+\tau_2) \mathbb{P}\{X  > \eta(d)\} =0,
\end{equation*}
which yields $d_1^* = \eta^{-1}(\operatorname{VaR}_{1-\rho^\prime}(X))$.

If $h_{v_1}(e_0+) < 0$, that is, $e_0 < \operatorname{VaR}_{\rho^\prime}(X)$, then the right branch of $h_{v_1}(d)$ crosses zero, as shown in Figure \ref{fig:hv1_three_crossings}(c). Therefore, $d_1^*$ is obtained by solving 
\begin{equation*}
    2+\tau_2 - 2(1+\tau_2) \mathbb{P}\{X > \eta(d)\} =0,
\end{equation*}
which yields $d_1^* = \eta^{-1}(\operatorname{VaR}_{\rho^\prime}(X))$.

If $ h_{v_1}(e_0) \le 0 \le h_{v_1}(e_0+)$, that is, $\operatorname{VaR}_{\rho^\prime}(X) \le  e_0 \le  \operatorname{VaR}_{1-\rho^\prime}(X)$, $h_(d)$ crosses zero right at the jump, as shown in Figure \ref{fig:hv1_three_crossings}(b). That means $d_1^* = e_0$, which can be also written as $\eta^{-1}(e_0)$. 

Putting the three pieces together yields \eqref{d1-exp} .

\eqref{d2-exp} can be derived in a similar way.
\end{proof}

Expressions~\eqref{d1-exp} and~\eqref{d2-exp} reveal that $d_1^*=d_2^* = e_0$ if  $\operatorname{VaR}_{\rho_0}(X) \le e_0 \le \operatorname{VaR}_{1-\rho^\prime}(X)$. This corresponds to a wide range of scenarios where both $h_{v_1}(d)$ and $h_{v_2}(d)$ cross zero at their respective jump points, namely $d=e_0$. The definition of $e_0$ implies that the anticipated level of net loss $m(e_0) = 0$. Therefore, within these scenarios, an insurer employing either risk function $v_1$ or $v_2$ should target at a zero  anticipated level of net loss  so as to the minimize the risk exposure. These practical implications  are further illustrated in the case study presented in Section~\ref{sec:case_study}.

\begin{proposition} \label{p:asyvssym}
The optimal attachment points $d^*_1, d^*_2, d^*_3, d^*_4$ can be ordered as follows 
\begin{equation*}
d^*_1 \le d^*_2 \quad  \text{and} \quad  d^*_3 \le d^*_4 
\end{equation*}
\end{proposition}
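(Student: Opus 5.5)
The plan is to compare the functions $h_v$ pairwise and then use the characterization $d^*=\inf\{d: h_v(d)>0\}$ from Theorem \ref{t:opt-d}. The key observation is a pointwise comparison: if $h_{v_a}(d)\ge h_{v_b}(d)$ for every $d\ge -w$, then $\{d: h_{v_b}(d)>0\}\subseteq\{d: h_{v_a}(d)>0\}$. Hence $d^*_a\le d^*_b$. So it suffices to show $h_{v_1}\ge h_{v_2}$ and $h_{v_3}\ge h_{v_4}$ on the relevant domain. Since positive rescaling of $v$ does not change $d^*$, I may replace $h_{v_1}$ by $\tfrac12 h_{v_1}$ and $h_{v_3}$ by $\tfrac12 h_{v_3}$ whenever this is convenient.

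For the pair $v_3,v_4$, I would write $x=x_+-x_-$. Then $h_{v_3}(d)=m(d)-(1+\tau_2)\mathbb{E}[L_d]$ splits as
\[
h_{v_3}(d)=h_{v_4}(d)-\Big(m(d)_- -(1+\tau_2)\mathbb{E}[(L_d)_-]\Big),
\]
so I need $m(d)_-\le(1+\tau_2)\mathbb{E}[(L_d)_-]$. This reduces to the argument of Lemma \ref{l:hprop}(ii) applied to the decreasing convex function $x\mapsto x_-$. On $\{X\le d\}$ we have $L_d\le m(d)$, so $(L_d)_-\ge m(d)_-$ there. On $\{d<X\le d+w\}$ we have $(L_d)_-=m(d)_-$. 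Therefore
\[
(1+\tau_2)\mathbb{E}[(L_d)_-]\ge(1+\tau_2)\mathbb{P}\{X\le d+w\}\,m(d)_-.
\]
This is at least $m(d)_-$ whenever $\mathbb{P}\{X\le d+w\}\ge 1-\rho_0$. That holds for $d\ge \operatorname{VaR}_{1-\rho_0}(X)-w$, and in particular for all $d\ge\operatorname{VaR}_{\rho_0}(X)$ when $\rho_0\ge 1/2$. I expect this to be the main obstacle: the inequality need not hold for small $d$. To get around it, I would use Proposition \ref{p:d*prop}(ii), which says both $d_3^*$ and $d_4^*$ lie in $[\operatorname{VaR}_{\rho_0}(X),\infty)$. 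So only the comparison on that half-line matters.

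On that half-line I would handle the remaining sign region directly. If $m(d)\ge 0$, then $m(d)_-=0$ and the inequality is trivial. If $m(d)<0$, then $h_{v_4}(d)=-(1+\tau_2)\mathbb{E}[(L_d)_+]\le 0$. In that case I only need the weaker implication $h_{v_3}(d)>0\Rightarrow h_{v_4}(d)>0$ near the crossing, which is all the infimum comparison actually uses. If this fails, I would instead verify directly that $h_{v_4}(d)\le 0$ forces $h_{v_3}(d)\le 0$, using the lower bound on $\mathbb{E}[(L_d)_-]$ above. Since $h_{v_4}\le 0$ whenever $m(d)<0$, we get $d_4^*\ge e_0$. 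It then suffices to compare the two functions on $[e_0,\infty)$, where $m(d)\ge0$ and the inequality $h_{v_3}\ge h_{v_4}$ is immediate.

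For the pair $v_1,v_2$ I would use the explicit formulas \eqref{d1-exp} and \eqref{d2-exp} instead. Since $\eta^{-1}$ is increasing, it suffices to show
\[
(e_0\vee \operatorname{VaR}_{\rho'}(X))\wedge\operatorname{VaR}_{1-\rho'}(X)\le e_0\vee\operatorname{VaR}_{\rho_0}(X).
\]
This follows from $\rho'=\rho_0/2\le\rho_0$, which gives $\operatorname{VaR}_{\rho'}(X)\le\operatorname{VaR}_{\rho_0}(X)$. The left side is then at most $e_0\vee\operatorname{VaR}_{\rho'}(X)\le e_0\vee\operatorname{VaR}_{\rho_0}(X)$. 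The same sign-splitting idea, $|x|=x_++x_-$ and $v_1'=2v_2'-1$, gives a consistency check: $\tfrac12h_{v_1}-h_{v_2}=\tfrac{\tau_2}{2}\ge0$ on $d\le e_0$ and $\tfrac{\tau_2}{2}+\ldots$ otherwise.
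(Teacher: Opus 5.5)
Your proof is correct. The half $d_1^*\le d_2^*$ is exactly the paper's argument: formulas \eqref{d1-exp}--\eqref{d2-exp}, $\rho'\le\rho_0$, and monotonicity of $\eta^{-1}$.

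For $d_3^*\le d_4^*$ you use the same core inequality as the paper. When $m(d)\ge 0$, one has $m(d)-(1+\tau_2)\mathbb{E}[L_d]\ge m(d)_+-(1+\tau_2)\mathbb{E}[(L_d)_+]$. You deploy it differently, though.
\begin{itemize}
\item The paper evaluates only at $d_4^*$. It reads $m(d_4^*)\ge 0$ off $h_{v_4}(d_4^*)=0$, concludes $h_{v_3}(d_4^*)\ge 0$, and then appeals to monotonicity and uniqueness of the root of $h_{v_3}$.
\item You prove the set inclusion $\{h_{v_4}>0\}\subseteq\{h_{v_3}>0\}$ and read off the ordering of the infima directly.
\end{itemize}
The paper's route is shorter. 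Yours needs none of the following: that $d_4^*$ is a finite root of \eqref{d4eqn}; strict monotonicity of $h_{v_3}$; or the passage from $m(d_4^*)_+\ge 0$ to $m(d_4^*)\ge 0$, which tacitly uses $\mathbb{E}[(L_{d_4^*})_+]>0$.

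You should, however, delete two sentences in the middle of your $v_3,v_4$ paragraph. The implication $h_{v_3}(d)>0\Rightarrow h_{v_4}(d)>0$, and its contrapositive $h_{v_4}(d)\le 0\Rightarrow h_{v_3}(d)\le 0$, point the wrong way. They would give $d_4^*\le d_3^*$, whereas comparing $\inf\{d:h(d)>0\}$ requires $h_{v_4}(d)>0\Rightarrow h_{v_3}(d)>0$.

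These implications are also false in general. One has $h_{v_3}(d)=d-(1-\tau_1\tau_2)\mathbb{E}[X]+(1-\tau_2^2)\mathbb{E}[(X-d)_+\wedge w]$ and $m(d)=d-(1+\tau_1)\mathbb{E}[X]+(1+\tau_2)\mathbb{E}[(X-d)_+\wedge w]$. Hence for $\tau_1>0$ and small $w$ there are $d$ with $h_{v_3}(d)>0>m(d)$, and then $h_{v_4}(d)\le 0$.

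Your closing two sentences supply the correct direction. The detour through $\mathbb{P}\{X\le d+w\}\ge 1-\rho_0$, Proposition \ref{p:d*prop}(ii) and $e_0$ can be dropped altogether. Indeed, $h_{v_4}(d)>0$ already forces $m(d)>0$, and then $h_{v_3}(d)-h_{v_4}(d)=(1+\tau_2)\mathbb{E}[(L_d)_-]\ge 0$.

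Finally, your consistency check is exact. From $v_1'=2v_2'-1$ you get $h_{v_1}=2h_{v_2}+\tau_2$ for every $d$, with no extra term when $d>e_0$. So $d_1^*\le d_2^*$ also follows from your set-inclusion principle, without using the explicit formulas.
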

\begin{proof}
    Since $\rho^\prime \le \rho_0$, then $\operatorname{VaR}_{\rho^\prime}(X)\le \operatorname{VaR}_{\rho_0}(X)$, and thus $(e_0 \vee \operatorname{VaR}_{\rho^\prime}(X)) \wedge \operatorname{VaR}_{1-\rho^\prime}(X) \le e_0 \vee \operatorname{VaR}_{\rho_0}(X)$. Following \eqref{d1-exp} and \eqref{d2-exp}, it implies that $d^*_1 \le d^*_2$ due to the monotonicity of $\eta^{-1}$.   

    To establish the order between $d^*_3$ and $d^*_4$, note that 
\begin{eqnarray*}
h_{v_4}(d^*_4) = 0 
&\Longrightarrow& m(d^*_4)_+ = (1+\tau_2)\mathbb{E}[(L_{d^*_4})_+] \ge 0 \\ 
&\Longrightarrow& m(d^*_4) = m(d^*_4)_+ \ge (1+\tau_2)\mathbb{E}[L_{d^*_4}]
\Longrightarrow h_{v_3}(d^*_4) \ge 0.
\end{eqnarray*}
By Lemma~\ref{l:hprop}(i), $h_{v_3}$ is increasing in $d$. Thus, as the solution to $h_{v_3}(d)=0$, $d^*_3$ must not exceed $d^*_4$.
According to Lemma \ref{l:hprop}(i), $h_{v_3}$ is increasing in $d$. Thus, as the solution to $h_{v_3}(d)=0$, $d_3$ must not exceed $d_4$. 
  
\end{proof}

Proposition \ref{p:asyvssym} indicates that,  when shifting from a symmetric disutility function to its asymmetric counterpart, either from $v_1(x)=|x|$ to $v_2(x) = x_+$ or from $v_3(x) = x^2$ to $v_4(x)=x^2_+$, the decision-maker shifts the emphasis to the tail risk rather than the deviation of either direction. Consequently, it is reasonable to raise the attachment point, thereby triggering the bond later to preserve coverage for more severe losses.

It is worth noting that both $m(d^*_2)$ and $m(d^*_4)$ are nonnegative ($m(d^*_4) \ge 0$ directly follows from \eqref{d4eqn}, while $m(d^*_2) \ge 0$ stems from the condition $d^*_2 \ge e_0$). This nonnegativity is a desirable property; as discussed when $m(d)$ was first introduced, it represents the anticipated level of net loss and should inherently remain nonnegative under any reasonable design. Conversely, $m(d^*_1)$ and $m(d^*_3)$ are not necessarily nonnegative. This suggests that employing symmetric disutility functions to minimize deviations from either side may lead to an overhedging problem.

\subsection{Minimizing spectral risk measure}

In this subsection, we consider Problem \eqref{P-SPEC'}, which is restated below
\begin{equation*}
     \inf_{d \ge -w} J_\phi(d) := \ M_\phi(L_d). 
\end{equation*}
We aim to find the optimal attachment point that minimizes the net loss under certain spectral risk measure. For this purpose, we first introduce some auxiliary notations and results. 

For a given risk spectrum function $\phi$,  define 
\begin{equation*}
\bar{\phi}(u,v) 
= \frac{\int_{u}^v \phi(p) dp}{v-u}, \qquad 0\le u < v \le 1.
\end{equation*}
Recall that $\phi(p)$ represents the decision-weights assigned to the quantile $\operatorname{VaR}_p(X)$ when evaluating the riskiness of $X$. The function $\bar{\phi}(u, v)$ thus represents the average value of $\phi(p)$ over the interval $[u,v]$. 
In particular, $\bar{\phi}(u, 1)$ denotes the average of $\phi$ over $[u,1]$ and is thus referred to as the tail-average. Because $\phi$ is an increasing function, its average naturally increases as the interval $[u, v]$ shifts to the right. Therefore, $\bar\phi(u,v)$ is  increasing in both $u$ and $v$. In particular, we have $\bar\phi(u,v) \le \bar\phi(u,1)$, and $\bar\phi(u,1)$ is increasing in $u$.


\begin{lemma}\label{l:jphi'}
    The first-order derivative of $J_\phi(d)$ with respect to $d$ is given by 
    \begin{equation*}
        J^\prime_\phi(d) = \mathbb{P}\{d<X\le d+w\}\times\Big(\bar\phi\big(F(d), F(d+w)\big)- (1+\tau_2)\Big).
    \end{equation*}
\end{lemma}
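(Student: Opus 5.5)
Since $L_d$ is an increasing continuous function of $X$, write $L_d = g_d(X)$ with $g_d(x)=L_d(x)$. For each $p\in(0,1)$ we then have $\operatorname{VaR}_p(L_d)=g_d(\operatorname{VaR}_p(X))$, so
\[
J_\phi(d)=\int_0^1 \phi(p)\, L_d\big(F^{-1}(p)\big)\,dp ,
\]
where $F^{-1}(p)=\operatorname{VaR}_p(X)$. This turns the spectral objective into an integral of a pointwise-defined function of $d$. We can then differentiate under the integral sign, exactly as in the proofs of Lemma \ref{l:md} and Lemma \ref{l:derivatives}.

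Next we use \eqref{ld'}:
\[
\frac{\partial}{\partial d}L_d(x)=m'(d)-1+\mathbb{I}\{d<x\le d+w\}.
\]
By Lemma \ref{l:md}(i), $m'(d)-1=-(1+\tau_2)\mathbb{P}\{d<X\le d+w\}$. Since $X$ is continuous, the event $d<F^{-1}(p)\le d+w$ coincides, up to a Lebesgue-null set of $p$, with $F(d)<p\le F(d+w)$. Substituting gives
\[
J_\phi'(d)=-(1+\tau_2)\mathbb{P}\{d<X\le d+w\}\int_0^1\phi(p)\,dp+\int_{F(d)}^{F(d+w)}\phi(p)\,dp .
\]
Now use $\int_0^1\phi=1$ and $\int_{F(d)}^{F(d+w)}\phi=(F(d+w)-F(d))\,\bar\phi(F(d),F(d+w))$, with $F(d+w)-F(d)=\mathbb{P}\{d<X\le d+w\}$. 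Factoring this probability out yields the claimed formula. If the probability is zero, both sides vanish, so $\bar\phi$ need not be defined there.

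The main obstacle is justifying the interchange of derivative and integral, because $L_d(x)$ is only piecewise differentiable in $d$. I would handle it by noting that $d\mapsto L_d(x)$ is Lipschitz uniformly in $x$. Both $m'$ and the indicator are bounded, giving a constant of $1+(1+\tau_2)+1$. The difference quotients are therefore bounded by a constant, and $\phi$ is integrable, so dominated convergence applies. The pointwise limit exists except at $x\in\{d,d+w\}$, which corresponds to a null set of $p$ when $F$ is continuous. One-sided derivatives coincide as in Remark \ref{r:derivative}. A minor point is that $\phi$ could be unbounded near $1$. It is still integrable, which suffices, since the domination is by a constant multiple of $\phi$.
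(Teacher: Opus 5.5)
Your proposal is correct and follows essentially the same route as the paper: VaR equivariance under the increasing map $x\mapsto L_d(x)$, the pointwise derivative from \eqref{ld'} combined with Lemma \ref{l:md}(i), and integration against $\phi$ using $\int_0^1\phi=1$. The paper interchanges derivative and integral without comment, so your dominated-convergence justification (a uniform Lipschitz bound in $d$, domination by a constant multiple of $\phi$, and the kinks at $x\in\{d,d+w\}$ forming a null set of $p$) supplies rigor the paper omits.
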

\begin{proof}
    Note from \eqref{piecewiseld}, $L_d(X)$ is an increasing function of $X$. According to \cite{McNeil2015}, $\operatorname{VaR}$ is invariant under increasing transformation. Therefore, 
    \begin{equation*}
        \operatorname{VaR}_{p}(L_d(X)) =  L_d(\operatorname{VaR}_{p}(X)). 
    \end{equation*}
   Following \eqref{ld'}, we have 
    \begin{eqnarray*}
        \frac{\partial}{\partial d}  \operatorname{VaR}_{p}(L_d(X)) 
        &=& \frac{\partial}{\partial d} L_d(\operatorname{VaR}_{p}(X)) \\
        &=& m^\prime(d) - 1 + \mathbb{I}\{\operatorname{VaR}_{p}(X)-w < d \le \operatorname{VaR}_{p}(X)\}\\
        &=& -(1+\tau_2)\mathbb{P}\{d<X\le d+w\} + \mathbb{I}\{ F(d) \le p < F(d+w)\}.
    \end{eqnarray*}
    Therefore, 
    \begin{eqnarray*}
        \frac{\partial}{\partial d} J_\phi(d)  
        &=&  \frac{\partial}{\partial d} \int_0^1 \phi(p) \operatorname{VaR}_p(L_d) dp \\ 
        &=&  \int_0^1 \phi(p) \frac{\partial}{\partial d}  \operatorname{VaR}_p(L_d) dp \\ 
        &=&  \int_0^1 \phi(p) \big(-(1+\tau_2)\mathbb{P}\{d<X\le d+w\}\big) dp + 
        \int_0^1 \phi(p) \mathbb{I}\{ F(d) \le p < F(d+w)\} dp \\
        &=& -(1+\tau_2)\mathbb{P}\{d<X\le d+w\} + \int_{F(d)}^{F(d+w)} \phi(p) dp \\
         &=& \mathbb{P}\{d<X\le d+w\}\times\Big(\bar\phi\big(F(d), F(d+w)\big)- (1+\tau_2)\Big).
    \end{eqnarray*}
\end{proof}

\begin{theorem}\label{t:opt-d-phi}
Then the optimal attachment point $d_\phi^*$ that solves Problem \eqref{P-SPEC'} is given by
\begin{equation}\label{opt-d-phi}
d_\phi^*=\inf\{d>-w: \bar\phi\big(F(d), F(d+w)\big) > 1+\tau_2\},
\end{equation}
with the convention $\inf \emptyset = \infty$. 
\end{theorem}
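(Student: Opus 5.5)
The argument mirrors the proof of Theorem \ref{t:opt-d}: we read off the sign pattern of $J_\phi'$ from Lemma \ref{l:jphi'} and show that it changes sign at most once, from nonpositive to positive. Write
\[
g(d) := \bar\phi\big(F(d), F(d+w)\big) - (1+\tau_2),
\]
so that $J_\phi'(d) = \mathbb{P}\{d<X\le d+w\}\, g(d)$. The prefactor is nonnegative, so the sign of $J_\phi'$ is governed by $g$ wherever the prefactor is positive. Where the prefactor vanishes, $J_\phi'=0$ and $J_\phi$ is locally constant, which does not affect the minimization.

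\emph{Step 1: monotonicity of $g$.} Both $F(d)$ and $F(d+w)$ are nondecreasing in $d$. As noted before Lemma \ref{l:jphi'}, $\bar\phi(u,v)$ is increasing in both arguments because $\phi$ is increasing. Hence $g$ is increasing on $(-w,\infty)$. Some care is needed in degenerate regions.
\begin{itemize}
\item For $d\in(-w,0]$ we have $F(d)=0$, so $\bar\phi(0,F(d+w))$ is the average of $\phi$ over $[0,F(d+w)]$, and this is still increasing.
\item If $F(d)=F(d+w)$, then $\bar\phi$ is undefined, but the prefactor vanishes there. In that case I would define $g$ by the limiting value $\phi(F(d))$, or simply observe that $J_\phi'=0$.
\end{itemize}
The assumption of a decreasing density means $X$ has support $[0,b)$ with positive density, so $F(d)<F(d+w)$ for all $d<b$. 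The degenerate case therefore only arises for $d\ge b$, where $J_\phi$ is constant.

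\emph{Step 2: conclude.} By the monotonicity in Step 1, $g(d)\le 1+\tau_2-(1+\tau_2)=0$ fails only after $d_\phi^*$. More precisely, $g\le0$ on $(-w,d_\phi^*)$ and $g>0$ on $(d_\phi^*,\infty)$, so $J_\phi$ is decreasing on $[-w,d_\phi^*]$ and increasing afterward. Two further points complete the argument.
\begin{itemize}
\item \emph{Continuity.} $J_\phi$ is continuous in $d$, since $\operatorname{VaR}_p(L_d)=L_d(\operatorname{VaR}_p(X))$ is Lipschitz in $d$ uniformly in $p$ and $\phi$ is integrable. Hence the minimum is attained at $d_\phi^*$ when it is finite.
\item \emph{Infinite case.} If the set in \eqref{opt-d-phi} is empty, $J_\phi$ is decreasing on the whole domain. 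Its infimum is then the limit as $d\to\infty$, which corresponds to no issuance and matches the convention $\inf\emptyset=\infty$, exactly as in Theorem \ref{t:opt-d}.
\end{itemize}

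\emph{Main obstacle.} The delicate point is justifying Lemma \ref{l:jphi'} as a genuine derivative everywhere, so that the monotonicity argument is valid. This requires interchanging the derivative and the integral over $p$, since $\partial_d \operatorname{VaR}_p(L_d)$ has indicator jumps. I would avoid pointwise differentiability altogether. Instead I would write $J_\phi(d_2)-J_\phi(d_1)=\int_{d_1}^{d_2} J_\phi'(s)\,ds$, using that each $\operatorname{VaR}_p(L_d)$ is absolutely continuous in $d$ with a bounded derivative, together with Fubini and $\int\phi=1$. Monotonicity of $J_\phi$ on each side of $d_\phi^*$ then follows directly from the sign of the integrand.
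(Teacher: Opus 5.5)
Your proposal is correct and follows essentially the same route as the paper: you read off the sign of $J_\phi'$ from Lemma~\ref{l:jphi'}, use that $\bar\phi\big(F(d),F(d+w)\big)$ is increasing in $d$ to get a single sign change at $d_\phi^*$, and treat the empty-set case as $J_\phi$ decreasing on the whole domain. Your additional care does not change the argument, but it supplies rigor the paper leaves implicit: you write $J_\phi(d_2)-J_\phi(d_1)=\int_{d_1}^{d_2}J_\phi'(s)\,ds$ via the uniform Lipschitz bound on $\operatorname{VaR}_p(L_d)$ and Fubini, and you handle the degenerate region where $F(d)=F(d+w)$.
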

\begin{proof}
 If $\{d : \bar\phi\big(F(d), F(d+w)\big) > 1+\tau_2\} = \emptyset$, that is, $\bar\phi\big(F(d), F(d+w)\big) \le 1+\tau_2$ for all $d > -w$, then so does $J^\prime_\phi(d)$. This implies that the objective function $J_\phi(d)$ is decreasing in $d\in (-w,\infty)$ and thus attains its minimum at $d_\phi^*= \infty = \inf\{d : \bar\phi\big(F(d), F(d+w)\big) > 1+\tau_2\}$.

 Otherwise, $\{d : \bar\phi\big(F(d), F(d+w)\big) > 1+\tau_2\} \neq \emptyset$ and thus $d_\phi^* < \infty$. 
Recall that $\bar\phi(u,v)$ is increasing in $u$ and $v$, thus $\bar\phi\big(F(d), F(d+w)\big)$ is increasing in $d$.
From the definition of $d^*_\phi$, we conclude that  
\begin{equation}\label{barphiiff}
    \bar\phi\big(F(d), F(d+w)\big) > 1+\tau_2 \quad \text{ if and only if } \quad d>  d_\phi^* 
\end{equation}
which in turn implies $J_\phi^\prime(d) >0$ if and only if $d > d_\phi^*$.  
Therefore, $J_\phi(d)$ attains its minimum at $d^*_\phi$. 
 
\end{proof}

\begin{corollary}\label{c:d_phi_infty}
    If $\phi(p)\le 1+\tau_2$ for all $p\in[0,1]$, then 
    $d_\phi^* =\infty$. 
\end{corollary}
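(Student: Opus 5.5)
The plan is to show that under the hypothesis $\phi\le 1+\tau_2$ the set appearing in \eqref{opt-d-phi} is empty. Theorem~\ref{t:opt-d-phi} then gives $d_\phi^*=\infty$ under the stated convention $\inf\emptyset=\infty$. Equivalently, Lemma~\ref{l:jphi'} shows that $J_\phi$ is decreasing on $(-w,\infty)$, so no-issuance is optimal.

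The key step is a pointwise bound on the interval average. Fix any $d>-w$ with $F(d)<F(d+w)$, so that $\bar\phi\big(F(d),F(d+w)\big)$ is defined. Since $\phi(p)\le 1+\tau_2$ for every $p$, integrating over $[F(d),F(d+w)]$ and dividing by the length $F(d+w)-F(d)$ gives
\[
\bar\phi\big(F(d),F(d+w)\big)\le 1+\tau_2 .
\]
Hence the strict inequality $\bar\phi\big(F(d),F(d+w)\big)>1+\tau_2$ fails for every such $d$.

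The only point requiring care, and a minor one, is the degenerate case $F(d)=F(d+w)$. There the average $\bar\phi$ is undefined, so I would avoid it by arguing directly from Lemma~\ref{l:jphi'}. In that case the prefactor $\mathbb{P}\{d<X\le d+w\}$ vanishes, so $J_\phi'(d)=0\le 0$. Such $d$ should not count as belonging to the set in \eqref{opt-d-phi}; with that convention, in every case $J'_\phi(d)\le 0$ for all $d>-w$.

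By the first case in the proof of Theorem~\ref{t:opt-d-phi}, $J_\phi$ is therefore decreasing and its infimum is approached as $d\to\infty$, giving $d_\phi^*=\infty$. The economic reading follows from the fact that $\int_0^1\phi=1$: the hypothesis means that even the most heavily weighted quantile does not receive weight exceeding the investors' loading $1+\tau_2$. This matches the conclusion of Proposition~\ref{p:d034}(i) in the risk-neutral case $\phi\equiv1$.
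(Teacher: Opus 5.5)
Your proposal is correct and follows the paper's implicit argument: the pointwise bound $\phi\le 1+\tau_2$ passes to the interval average $\bar\phi\big(F(d),F(d+w)\big)$, so the set in \eqref{opt-d-phi} is empty and Theorem~\ref{t:opt-d-phi} gives $d_\phi^*=\infty$. The paper states the corollary without a written proof, only an economic gloss. Your extra care with the degenerate case $F(d)=F(d+w)$, where $\bar\phi$ is undefined but $J_\phi'(d)=0$, is a harmless refinement the paper leaves implicit.
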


As discussed in Section \ref{sec:optimal_indemnity} and reiterated in Subsection \ref{subsec-edu}, a small decrease in the attachment point $d$ expands the CAT bond coverage. Due to the translation invariance and positive homogeneity of a spectral risk measure, this expansion results in a constant marginal cost of $(1+\tau_2)$. Conversely, a small decrease in $d$ shifts the plateau level of the anticipated loss downward, thereby reducing the spectral risk measure. This marginal benefit is quantified by the average decision weight assigned to the losses falling within the coverage layer $[d, d+w]$, equal to $\bar\phi\big(F(d), F(d+w)\big)$.
As long as this marginal benefit exceeds the constant marginal cost, the attachment point should be lowered to minimize the overall risk. This equilibrium yields the optimal attachment point defined in \eqref{opt-d-phi}.

If $\phi(p) \le 1+\tau_2$ for all $p \in [0,1]$, the decision weight assigned to the quantile at any level does not exceed the marginal cost of issuing CAT bond. Consequently, the coverage is economically unjustified. This provides a direct economic interpretation for Corollary \ref{c:d_phi_infty}.



Because the expression for $d_\phi^*$ given by \eqref{opt-d-phi} cannot always be determined explicitly, it is necessary to establish bounding intervals for practical guidance. To this end, we first introduce several auxiliary quantities.

Define 
\begin{eqnarray*}
    \rho_1 &=& \inf\{p\in[0,1]: \bar\phi(p,1)>1+\tau_2\},\\
    \rho_2 &=& \inf\{p\in[0,1]: \phi(p)>1+\tau_2\}. 
\end{eqnarray*}
with the convention of $\inf\emptyset =1$. 


Recall that the quantile level $\rho_0 = \frac{\tau_2}{1+\tau_2}$ was introduced in Section \ref{sec:model-setup}. The following lemma establishes an ordering relationship between $\rho_0, \rho_1,$ and $\rho_2$.

\begin{lemma}\label{l:rhos}
    The quantile levels $\rho_0:=\frac{\tau_2}{1+\tau_2}$, 
    $\rho_1$, and $\rho_2$ satisfy the following inequality:
    \begin{equation*}
        \rho_0 \le \rho_1 \le \rho_2.
    \end{equation*}
\end{lemma}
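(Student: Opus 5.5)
The plan is to prove the two inequalities separately. Each one follows from comparing the defining sets of the infima, using only that $\phi$ is increasing with $\int_0^1\phi(p)\,dp=1$.

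\emph{Step 1: $\rho_1\le\rho_2$.} It suffices to show that every $p$ admissible for $\rho_2$ is admissible for $\rho_1$, i.e. $\{p:\phi(p)>1+\tau_2\}\subseteq\{p:\bar\phi(p,1)>1+\tau_2\}$. Take $p<1$ with $\phi(p)>1+\tau_2$. Monotonicity gives $\phi(q)\ge\phi(p)$ for $q\in[p,1]$, hence $\bar\phi(p,1)\ge\phi(p)>1+\tau_2$. If the first set is empty, then $\rho_2=1\ge\rho_1$ trivially. At $p=1$, $\bar\phi(\cdot,1)$ is undefined, but the convention $\inf\emptyset=1$ already covers this endpoint.

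\emph{Step 2: $\rho_0\le\rho_1$.} I will show that $\bar\phi(p,1)\le 1+\tau_2$ for every $p\le\rho_0$, so that no $p<\rho_0$ lies in the set defining $\rho_1$. The key estimate uses nonnegativity of $\phi$:
\[
\int_p^1\phi(q)\,dq\le\int_0^1\phi(q)\,dq=1,
\qquad\text{so}\qquad
\bar\phi(p,1)\le\frac{1}{1-p}.
\]
For $p\le\rho_0$ we have $1-p\ge 1-\rho_0=\frac{1}{1+\tau_2}$, and therefore $\bar\phi(p,1)\le 1+\tau_2$. Since $\bar\phi(p,1)$ is increasing in $p$, the set $\{p:\bar\phi(p,1)>1+\tau_2\}$ is an up-interval contained in $(\rho_0,1]$, and so its infimum is at least $\rho_0$. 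If the set is empty, then $\rho_1=1\ge\rho_0$.

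The main care point is the boundary and degenerate cases. These are $\tau_2=0$, where $\rho_0=0$ and the claim is trivial, and the empty-set conventions. Beyond these cases the argument is just the two displayed estimates, so I expect no real obstacle.
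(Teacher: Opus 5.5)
Your proof is correct and follows essentially the same route as the paper. You get $\rho_1\le\rho_2$ from $\bar\phi(p,1)\ge\phi(p)$ by monotonicity of $\phi$, and $\rho_0\le\rho_1$ from $\int_p^1\phi\le\int_0^1\phi=1$, which yields $\bar\phi(p,1)\le 1+\tau_2$. The only cosmetic difference is that the paper checks this bound at the single point $p=\rho_0$ and then uses monotonicity of $\bar\phi(\cdot,1)$, whereas you prove it for every $p\le\rho_0$ directly, so that monotonicity is not actually needed.
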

\begin{proof}
 Following the monotonicity of  $\bar\phi(p,1)$  in $p$ and the definition of $\rho_1$, it holds that $\bar\phi(p,1)>1+\tau_2$ if and only if $p>\rho_1$.
    
    Note that 
    \begin{equation*}
        \bar\phi(\rho_0,1) =  \frac{\int_{\rho_0}^1 \phi(u) du}{1 - \rho_0} 
        \le \frac{\int_{0}^1 \phi(u) du}{1 - \rho_0} = \frac{1}{1- \frac{\tau_2}{1+\tau_2}} = 1+\tau_2.
    \end{equation*}
This implies $\rho_0 \le \rho_1$.

Additionally, the monotonicity of $\phi(p)$ in $p$  implies that 
\begin{equation*}
    \bar\phi(p,1) = \frac{\int_{p}^1 \phi(u) du}{1 - p} \ge \frac{\int_{p}^1 \phi(p) du}{1 - p} = \phi(p).
\end{equation*}
By the definitions of $\rho_1, \rho_2$, it immediately follows that $\rho_1\le \rho_2$. 
 
\end{proof}

\begin{proposition}\label{p:bounds}
 The optimal attachment point $d_\phi^*$ satisfies
\begin{equation*}
    \max\{\operatorname{VaR}_{\rho_1}(X), \operatorname{VaR}_{\rho_2}(X)-w\} \le d_\phi^* \le \operatorname{VaR}_{\rho_2}(X).
\end{equation*} 
\end{proposition}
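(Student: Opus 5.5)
The plan is to use the characterization in Theorem~\ref{t:opt-d-phi} together with the equivalence \eqref{barphiiff}. Since $g(d):=\bar\phi\big(F(d),F(d+w)\big)$ is increasing in $d$, an upper bound $d_\phi^*\le a$ follows once we exhibit points $d>a$ with $g(d)>1+\tau_2$. A lower bound $d_\phi^*\ge b$ follows once we show $g(d)\le 1+\tau_2$ for every $d<b$. Each of the three inequalities therefore reduces to comparing $g(d)$ with $1+\tau_2$ through the elementary facts about $\bar\phi$ already noted: it is increasing in both arguments, $\bar\phi(u,v)\le\bar\phi(u,1)$, and $\phi(u)\le\bar\phi(u,v)\le\phi(v)$ by monotonicity of $\phi$. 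Degenerate cases, for instance $\rho_2=1$ giving $\operatorname{VaR}_1(X)=\infty$, make the relevant inequality trivial and are handled separately.

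\textbf{Upper bound $d_\phi^*\le\operatorname{VaR}_{\rho_2}(X)$.} Take $d>\operatorname{VaR}_{\rho_2}(X)$. Because $X$ is continuous with a density, $F(d)>\rho_2$ whenever $F(d)<1$, and in that case $F(d+w)>F(d)$. Then $g(d)\ge\phi(F(d))>1+\tau_2$ by the definition of $\rho_2$. Hence the set in \eqref{opt-d-phi} contains every such $d$ below the upper end of the support, which gives the bound.

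\textbf{Lower bound $d_\phi^*\ge\operatorname{VaR}_{\rho_1}(X)$.} Take $d<\operatorname{VaR}_{\rho_1}(X)$, so $F(d)<\rho_1$ (or $F(d)\le\rho_1$). Then $g(d)\le\bar\phi(F(d),1)\le 1+\tau_2$, because $\bar\phi(p,1)>1+\tau_2$ holds exactly for $p>\rho_1$, as shown in the proof of Lemma~\ref{l:rhos}.

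\textbf{Lower bound $d_\phi^*\ge\operatorname{VaR}_{\rho_2}(X)-w$.} Take $d<\operatorname{VaR}_{\rho_2}(X)-w$. Then $d+w<\operatorname{VaR}_{\rho_2}(X)$, so $F(d+w)\le\rho_2$, and $g(d)\le\phi(F(d+w)-)\le 1+\tau_2$, since $\phi(p)\le1+\tau_2$ for all $p<\rho_2$.

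The main care point is the boundary behaviour at quantile endpoints. This comes from left- versus right-continuity of $\phi$, from the case $p=\rho_2$ itself, and from the case $\mathbb{P}\{d<X\le d+w\}=0$, where $g$ is undefined. I would handle these by using strict inequalities on $d$ and invoking the continuity of $F$. The case where $g$ is undefined occurs only outside the support, where $J_\phi'=0$, and so does not affect the infimum. Combining the three bounds yields the proposition.
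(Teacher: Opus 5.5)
Your proposal is correct and follows essentially the paper's argument: both compare $\bar\phi\big(F(d),F(d+w)\big)$ with $1+\tau_2$ using the monotonicity of $\bar\phi$ and the definitions of $\rho_1,\rho_2$. The only differences are minor. You check the inequality on open half-lines and read the bounds off the infimum in \eqref{opt-d-phi}, whereas the paper evaluates at the three endpoint values $\operatorname{VaR}_{\rho_2}(X)$, $\operatorname{VaR}_{\rho_2}(X)-w$, $\operatorname{VaR}_{\rho_1}(X)$ and invokes \eqref{barphiiff}. Also, the strict inequality $F(d)>\rho_2$ for $d>\operatorname{VaR}_{\rho_2}(X)$ comes from the decreasing density, which makes $F$ strictly increasing on the support, rather than from continuity alone.
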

\begin{proof}
For any $p > \rho_2$, it holds that $\phi(p) >1+\tau_2$,  and thus $\bar\phi(\rho_2, v)  =\frac{\int_{\rho_2}^v \phi(p) dp}{v-\rho_2}>1+\tau_2$ for any $v>\rho_2$. Noting that $F(\operatorname{VaR}_{\rho_2}(X)) = \rho_2$, we have
\begin{eqnarray*}
    \bar\phi\big(F(\operatorname{VaR}_{\rho_2}(X)), F(\operatorname{VaR}_{\rho_2}(X)+w)\big) =     \bar\phi\big(\rho_2, F(\operatorname{VaR}_{\rho_2}(X)+w)\big) > 1+\tau_2,
\end{eqnarray*}
which implies $d_\phi^* \le \operatorname{VaR}_{\rho_2}(X)$ according to \eqref{barphiiff}. 

Similarly, $\bar\phi\big(F(\operatorname{VaR}_{\rho_2}(X)-w), F(\operatorname{VaR}_{\rho_2}(X))\big) \le 1+\tau_2$, implying $d_\phi^* \ge \operatorname{VaR}_{\rho_2}(X)-w$.

Since $\bar\phi(u,v)$ is increasing in $v$, and for any $v>\rho_1$, we have $\bar{\phi}(\rho_1, v) \le \bar{\phi}(\rho_1, 1) \le 1+\tau_2$. Noting that $F(\operatorname{VaR}_{\rho_1}(X)) = \rho_1$, it holds that
\begin{equation*}
 \bar\phi\big(F(\operatorname{VaR}_{\rho_1}(X)), F(\operatorname{VaR}_{\rho_1}(X)+w)\big) \le \bar\phi(\rho_1,1) \le 1+\tau_2,
\end{equation*}
which implies $d_\phi^*\ge \operatorname{VaR}_{\rho_1}(X)$.

\end{proof}

Recall from the interpretation of Theorem \ref{t:opt-d-phi} that determining the optimal attachment point reduces to a comparison of the marginal benefit and marginal cost induced by a decrease in the attachment point. The marginal cost is a constant $(1+\tau_2)$, whereas the marginal benefit is quantified by the average decision weight assigned to the losses covered by the CAT bond, specifically those falling within $[d, d+w]$. Furthermore, $\rho_2$ serves as a break-even point between the point-wise decision weight $\phi(p)$ and the cost $(1+\tau_2)$. Intuitively, under the optimal design, the loss at the quantile corresponding to this break-even point must be covered to achieve equilibrium, which means $\operatorname{VaR}_{\rho_2}(X) \in [d_\phi^*, d_\phi^* +w]$, or equivalently $\operatorname{VaR}_{\rho_2}(X)-w \le d_\phi^* \le \operatorname{VaR}_{\rho_2}(X)$.

Furthermore, $\rho_1$ serves as a break-even point between the tail-average decision weight $\bar\phi(p,1)$ and the cost $(1+\tau_2)$. Because the issuance amount $w$ is finite, the average decision weight assigned to the covered losses never fully encompasses the tail. Consequently, to achieve equilibrium, the optimal attachment point must be  higher than the loss quantile corresponding to $\rho_1$.

It is worth noting that Proposition \ref{p:bounds} implies that $d_\phi^* \ge \operatorname{VaR}_{\rho_1}(X) \ge 0$. This confirms that, within the framework of minimizing the spectral risk measure, the optimal solution endogenously satisfies the practical constraint $I(0)=0$, as discussed at the end of Section \ref{sec:optimal_indemnity}.



\begin{proposition} \label{p:d_phi_com}
     $d_\phi^*$ is independent of $\tau_1$, increasing in $\tau_2$, and decreasing in $w$.
\end{proposition}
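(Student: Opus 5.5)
We work directly from the characterization in Theorem~\ref{t:opt-d-phi}. Write
\[
g(d;w):=\bar\phi\big(F(d),F(d+w)\big), \qquad d_\phi^*=\inf\{d>-w:\ g(d;w)>1+\tau_2\}.
\]
Each comparative static then follows by checking how the defining set $\{d: g(d;w)>1+\tau_2\}$ moves when a parameter changes. Throughout we use that $g$ is increasing in $d$, as shown in the proof of Theorem~\ref{t:opt-d-phi}, so this set is an up-set of the form $(d_\phi^*,\infty)$.

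\emph{Independence of $\tau_1$.} Neither $g$ nor the threshold $1+\tau_2$ involves $\tau_1$. Indeed, $\tau_1$ enters $L_d$ only through the constant $P_1$, which affects $J_\phi$ by translation invariance but drops out of $J_\phi'$ in Lemma~\ref{l:jphi'}. Hence the defining set, and therefore $d_\phi^*$, does not depend on $\tau_1$.

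\emph{Monotonicity in $\tau_2$.} Let $\tau_2<\tilde\tau_2$. Then $\{d: g(d;w)>1+\tilde\tau_2\}\subseteq\{d:g(d;w)>1+\tau_2\}$. Since the infimum of a smaller set is larger, $d_\phi^*$ is increasing in $\tau_2$. The convention $\inf\emptyset=\infty$ covers the case where the smaller set is empty.

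\emph{Monotonicity in $w$.} This is the step needing care. Two things change with $w$: $g$ itself and the domain restriction $d>-w$. First fix $d$ and consider $w\mapsto F(d+w)$. Because $F$ is nondecreasing, $F(d+w)$ is nondecreasing in $w$. Because $\bar\phi(u,v)$ is increasing in $v$, which follows from $\phi$ being increasing, $g(d;w)$ is increasing in $w$.

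Now take $w<\tilde w$. For any $d>-w$ with $g(d;w)>1+\tau_2$, we have $d>-\tilde w$ and $g(d;\tilde w)\ge g(d;w)>1+\tau_2$. So the defining set for $w$ is contained in the one for $\tilde w$, and the infimum for $\tilde w$ is no larger. Thus $d_\phi^*$ is decreasing in $w$.

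One degenerate case remains: $F(d)=F(d+w)$, where $\bar\phi$ is undefined. The reading consistent with Lemma~\ref{l:jphi'} is that $J'_\phi=0$ there. Such points contribute nothing to the defining set, or can be handled by taking the limiting value $\phi(F(d))$. Under the decreasing-density assumption, $F$ is strictly increasing on the support, so this case can only occur beyond the support, where it is harmless.

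For robustness we could argue directly with $J'_\phi$ instead of $g$. Its sign equals the sign of $\int_{F(d)}^{F(d+w)}\big(\phi(p)-(1+\tau_2)\big)\,dp$. Increasing $w$ appends to this integral a range of larger $p$, where $\phi(p)-(1+\tau_2)$ is at least its average over the original range. Hence positivity is preserved as $w$ grows. This gives the same inclusion of sets and avoids dividing by a possibly zero interval length.
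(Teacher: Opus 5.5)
Your proposal is correct and follows essentially the same route as the paper, which argues that $d_\phi^*$ is where the increasing function $\bar\phi\big(F(d),F(d+w)\big)$ crosses the level $1+\tau_2$: raising $\tau_2$ delays the crossing, while raising $w$ shifts the graph upward and advances it. Your set-inclusion argument is a more careful version of the same idea, and it also handles the $w$-dependent domain $d>-w$ and the degenerate case $F(d)=F(d+w)$, both of which the paper leaves implicit.
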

\begin{proof} Recall that $d_\phi^*$ is the point where $\bar\phi\big(F(d), F(d+w)\big)$, as a function of $d$, crosses the constant level $(1+\tau_2)$. 

Because both $\bar\phi\big(F(d), F(d+w)\big)$ and $1+\tau_2$ independent of $\tau_1$, so is $d_\phi^*$.

Since $\bar\phi(u,v)$ is increasing in $u$ and $v$, then $\bar\phi\big(F(d), F(d+w)\big)$ is increasing in $d$ (as well as in $w$), this means the crossing occurs once and from below. 

An increase in $\tau_2$ raises the level $1+\tau_2$ that $\bar\phi\big(F(d), F(d+w)\big)$ must cross, which delays the crossing and results in an increase in $d_\phi^*$. Similarly, an increase in $w$ shifts the graph of $\bar\phi\big(F(d), F(d+w)\big)$ as a function of $d$ upward and thus makes the crossing  earlier, resulting in a decrease in $d_\phi^*$. 
\end{proof}

Proposition \ref{p:d_phi_com} establishes comparative statics that characterize the behavior of the optimal attachment point in response to changes in the parameters. Intuitively, this can be interpreted as follows.
Due to the translation invariance property of spectral risk measures, the insurance premium $P_1$, and consequently $\tau_1$, contributes to the objective function solely as an additive constant. Because it does not interact with other terms, it does not affect the determination of the optimal attachment point. Conversely, an increase in $\tau_2$ implies that issuing CAT bonds becomes more expensive. A rational decision-maker would therefore reserve this costly coverage for more severe, albeit less likely, losses. This corresponds to delaying the CAT bond trigger, which inherently increases the optimal attachment point. Furthermore, a higher $w$ implies greater coverage capacity. With extreme tail risks better mitigated, the decision-maker is incentivized to shift the coverage layer downward to absorb middle-tranche losses.

\begin{proposition}\label{p:opt_d_beta}
  Consider Problem \eqref{P-SPEC'} where $M_\phi$ is the $\operatorname{TVaR}_\beta$ measure, defined by the risk spectrum function $\phi(p)=\frac{\mathbb{I}\{p>\beta\}}{1-\beta}$. Under this setup, the optimal attachment point $d_\beta^*$ is given by
\begin{itemize}
    \item[(i)] If $\beta \le \rho_0:= \frac{\tau_2}{1+\tau_2}$, then $d_\beta^* = \infty$;
    \item[(ii)] If $\beta > \rho_0 := \frac{\tau_2}{1+\tau_2}$, then
 $d_\beta^*$ is the unique solution to  
\begin{equation}\label{opt_d_beta}
 \frac{1 -  \frac{1-F(d+w)}{1-\beta}}{F(d+w)-F(d)} = 1+\tau_2
\end{equation}
Furthermore, $d_\beta^*$ is increasing in $\beta$ and 
$$d_\beta^* \in [\max\{\operatorname{VaR}_{\rho_0}(X), \operatorname{VaR}_\beta(X)-w\}, \operatorname{VaR}_\beta(X)].$$
\end{itemize}
\end{proposition}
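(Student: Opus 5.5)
This is a direct specialization of Theorem~\ref{t:opt-d-phi}, Lemma~\ref{l:rhos} and Proposition~\ref{p:bounds} to the spectrum $\phi(p)=\mathbb{I}\{p>\beta\}/(1-\beta)$. The plan has four steps.

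\emph{Step 1: compute the auxiliary quantities.} For $u<v$ we have $\int_u^v\phi(p)\,dp=\frac{(v-\beta\vee u)_+}{1-\beta}$. Hence, whenever $F(d+w)>\beta$,
\[
\bar\phi\big(F(d),F(d+w)\big)=\frac{F(d+w)-\beta\vee F(d)}{(1-\beta)\big(F(d+w)-F(d)\big)},
\]
and $\bar\phi\big(F(d),F(d+w)\big)=0$ otherwise. In particular $\rho_2=\beta$ when $1/(1-\beta)>1+\tau_2$, i.e.\ when $\beta>\rho_0$, and $\rho_2=1$ otherwise. Also $\bar\phi(p,1)=\frac{1}{1-\beta\vee p}$, so $\rho_1=\beta$ in case (ii).

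\emph{Step 2: case (i).} If $\beta\le\rho_0$, then $\phi(p)\le \frac{1}{1-\beta}\le 1+\tau_2$ for all $p$, and Corollary~\ref{c:d_phi_infty} gives $d_\beta^*=\infty$.

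\emph{Step 3: case (ii), first-order condition and uniqueness.} First I show that the crossing in Theorem~\ref{t:opt-d-phi} occurs with $F(d)\le\beta$. For $F(d)>\beta$ we get $\bar\phi=\frac{1}{1-\beta}>1+\tau_2$, and for $F(d+w)\le\beta$ we get $\bar\phi=0$. So $d^*_\beta$ lies where $F(d)\le\beta<F(d+w)$. There $\bar\phi$ equals the left-hand side of \eqref{opt_d_beta}, since
\[
F(d+w)-\beta=(1-\beta)-\big(1-F(d+w)\big)
\quad\Longrightarrow\quad
\frac{F(d+w)-\beta}{1-\beta}=1-\frac{1-F(d+w)}{1-\beta}.
\]

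I then verify that this left-hand side is continuous in $d$ on this interval, because $F$ is continuous. At the left end ($F(d+w)=\beta$) it equals $0<1+\tau_2$. At the right end ($F(d)=\beta$) it equals $\frac{1}{1-\beta}>1+\tau_2$. The intermediate value theorem therefore gives a root.

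Uniqueness is the main obstacle, because $\bar\phi(F(d),F(d+w))$ is only known to be weakly increasing. I would show strict monotonicity directly. Write $a=F(d)$ and $b=F(d+w)$. The ratio $\frac{b-\beta}{b-a}$ is strictly increasing in $a$ whenever $b>\beta$, and it is increasing in $b$ since $a\le\beta$. The remaining worry is a flat stretch where $F$ is constant, i.e.\ a zero-density region. Since the pdf $f$ is decreasing, $f$ can vanish only on a terminal interval $[x_0,\infty)$. On the relevant range $F(d)<\beta<1$, so $d<x_0$ and $a$ strictly increases. This gives strict monotonicity and hence a unique root.

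\emph{Step 4: monotonicity in $\beta$ and the bounds.} Set $G(d,\beta)=\frac{F(d+w)-\beta}{(1-\beta)(F(d+w)-F(d))}$. Its $\beta$-derivative has the sign of $-(1-\beta)+(b-\beta)=b-1\le 0$, so $G$ is decreasing in $\beta$. Since $G$ is increasing in $d$, a larger $\beta$ pushes the crossing to the right, and $d^*_\beta$ is increasing in $\beta$.

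For the bounds, Proposition~\ref{p:bounds} with $\rho_1=\rho_2=\beta$ gives
\[
d_\beta^*\in\big[\max\{\operatorname{VaR}_\beta(X),\operatorname{VaR}_\beta(X)-w\},\ \operatorname{VaR}_\beta(X)\big].
\]
This interval is sharper than the claimed one, apart from a possible degenerate equality. Since $\operatorname{VaR}_\beta(X)\ge \operatorname{VaR}_\beta(X)-w$ and $\operatorname{VaR}_\beta(X)\ge\operatorname{VaR}_{\rho_0}(X)$, the stated interval follows.

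However, Step 3 located the root where $F(d)\le\beta$, i.e.\ $d\le\operatorname{VaR}_\beta(X)$, which suggests the lower bound $\operatorname{VaR}_{\rho_1}(X)$ may be too strong. I would recheck this: the proof of Proposition~\ref{p:bounds} assumes $\bar\phi(\rho_1,1)\le 1+\tau_2$, which fails here since $\bar\phi(\beta,1)=\frac{1}{1-\beta}>1+\tau_2$. So I would not rely on $\operatorname{VaR}_{\rho_1}(X)$. Instead I would prove the lower bound $\operatorname{VaR}_{\rho_0}(X)$ directly. At $d$ with $F(d)\le\rho_0$, write $a=F(d)$ and $b=F(d+w)$. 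The claim is $\frac{b-\beta}{(1-\beta)(b-a)}\le\frac{1}{1-\rho_0}$. It suffices to check $(b-\beta)(1-\rho_0)\le(1-\beta)(b-a)$. Since $a\le\rho_0$, the right side is at least $(1-\beta)(b-\rho_0)$, and
\[
(1-\beta)(b-\rho_0)-(1-\rho_0)(b-\beta)=(1-b)(\beta-\rho_0)\ge0 .
\]
The lower bound $\operatorname{VaR}_\beta(X)-w$ and the upper bound $\operatorname{VaR}_\beta(X)$ come from the $\rho_2$ part of Proposition~\ref{p:bounds}.
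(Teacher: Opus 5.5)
Your route is the paper's: specialize Theorem~\ref{t:opt-d-phi} to the TVaR spectrum, solve $\bar\phi(F(d),F(d+w))=1+\tau_2$, get monotonicity in $\beta$ from the sign of the $\beta$-derivative, and take the bounds from Proposition~\ref{p:bounds}. The argument you finally keep is valid, but Step~1 contains an error, and you then misdiagnose it.

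For $p\le\beta$ one has $\int_p^1\phi(s)\,ds=1$, so $\bar\phi(p,1)=\frac{1}{1-p}$. For $p>\beta$ it equals $\frac{1}{1-\beta}$. So $\bar\phi(p,1)=\frac{1}{1-(p\wedge\beta)}$, not $\frac{1}{1-(p\vee\beta)}$. Hence in case (ii) $\rho_1=\rho_0$, not $\beta$. Proposition~\ref{p:bounds} with $\rho_1=\rho_0$ and $\rho_2=\beta$ then gives exactly the stated interval. You do not even need to compute $\rho_1$: Lemma~\ref{l:rhos} gives $\rho_0\le\rho_1$, hence $\operatorname{VaR}_{\rho_0}(X)\le\operatorname{VaR}_{\rho_1}(X)\le d_\beta^*$.

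The ``sharper'' interval you first wrote would force $d_\beta^*=\operatorname{VaR}_\beta(X)$, and that is false. Nothing fails in Proposition~\ref{p:bounds} either. The condition $\bar\phi(\rho_1,1)\le 1+\tau_2$ always holds, by continuity of $p\mapsto\bar\phi(p,1)$ and the definition of $\rho_1$. The contradiction you spotted was evidence that $\rho_1\neq\beta$. Your direct inequality $(1-\beta)(b-\rho_0)-(1-\rho_0)(b-\beta)=(1-b)(\beta-\rho_0)\ge 0$ is correct; it just reproves the $\rho_1$-part of Proposition~\ref{p:bounds} for this spectrum. Delete the wrong Step~1 claim and the remark about Proposition~\ref{p:bounds}, and the proof stands.

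Where you go beyond the paper is uniqueness. The paper says the left-hand side of \eqref{opt_d_beta} is increasing in $d$ because $F(d+w)-F(d)$ is decreasing. That relies on the decreasing density and is only justified for $d\ge 0$, and the paper leaves strictness implicit. Your computation of $\partial_a$ and $\partial_b$ of $\frac{b-\beta}{b-a}$ gives monotonicity on the whole relevant range using only $a\le\beta$, and you handle strictness explicitly. One small correction there: for $d<0$ you have $F(d)=0$, so $a$ is not strictly increasing. On that stretch strictness comes instead from $b=F(d+w)$ increasing strictly, together with $\partial_b\frac{b-\beta}{b-a}=\frac{\beta-a}{(b-a)^2}>0$ for $a<\beta$.
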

\begin{proof}
    Note that $\phi(p)$ is a step function taking two values: $0$ and $\frac{1}{1-\beta}$. It is easy to derive that
    \begin{equation*}
  \bar\phi(u,v) =
  \begin{cases}
  0 & \quad 0 \le u<v \le \beta;\\
  \frac{1-\frac{1-v}{1-\beta}}{v-u} & \quad 0\le u < \beta < v \le 1; \\
  \frac{1}{1-\beta} & \quad \beta\le u<v \le 1.
  \end{cases}   
    \end{equation*}

   (i) If $\beta \le \frac{\tau_2}{1+\tau_2}$, then $\frac{1}{1-\beta} \le 1+\tau_2$.  Thus $\bar\phi(u,v)$ would never exceed $ 1+\tau_2$ for any $u, v\in[0,1]$. According to Theorem \ref{t:opt-d-phi}, $d_\beta^* = \infty$. 

   (ii) If $\beta > \frac{\tau_2}{1+\tau_2}$, then $\frac{1}{1-\beta} > 1+\tau_2$. Thus, $\{d: \bar\phi\big(F(d), F(d+w)\big)>1+\tau_2\} \neq \emptyset$, which implies that $d_\beta^*<\infty$. Noting that $\bar\phi\big(F(d), F(d+w)\big)$ is continuous in $d$, finding $d_\beta^*$ reduces to solving the equation $\bar\phi\big(F(d), F(d+w)\big) = 1+\tau_2$, which is equivalent to \eqref{opt_d_beta}.   

Note that the left-hand side of \eqref{opt_d_beta} is decreasing in $\beta$ and increasing in $d$. The latter property holds because $F(d+w)-F(d)$ is a decreasing function of $d$, due to the assumption that $X$ has a decreasing probability density function. Therefore, treating $d_\beta^*$ as an implicit function of $\beta$ determined by \eqref{opt_d_beta}, we conclude that $d_\beta^*$ is increasing in $\beta$.

   The bounds for $d_\beta^*$ follow from Proposition \ref{p:bounds} by noting that $\rho_2 = \beta$.
   
\end{proof}

 If $\beta=\frac{\tau_2}{1+\tau_2}$, according to Proposition \ref{p:opt_d_beta}, it is straightforward to conclude that $d_\beta^*=\infty$. 
However, it is worth noting that in this case, the optimal solution is not unique. Specifically, if  $\beta=\frac{\tau_2}{1+\tau_2}=\rho_0$, then  $\phi(p) = (1+\tau_2)\mathbb{I}\{p>\beta\}$, and thus $\bar\phi(u,v)=1+\tau_2$ for any $\rho_0\le u<v\le 1$. Therefore, for any $d\ge \operatorname{VaR}_{\rho_0}(X)$, 
\[
J_\phi^\prime(d) = \mathbb{P}\{d<X \le d+w\}\Big(\bar\phi\big(F(d), F(d+w)\big) - (1+\tau_2)\Big) = 0.
\]
This means that the objective function $J_\phi(d)$ stays constant on the interval $[\operatorname{VaR}_{\rho_0}(X), \infty)$. In this sense, any $d$ falling in this interval serves as an optimal attachment point as they all result in identical minimum spectral risk measure. By convention, we select $d_\beta^*=\infty$ as the representative solution, as it corresponds to a no-issuance strategy and inherently eliminates administrative complexities.

The investigation of this case also provides insights into how the risk spectrum function $\phi(p)$ impacts decision-making. In this particular scenario, the decision weight assigned to any loss greater than $\operatorname{VaR}_{\rho_0}(X)$ is a constant $1+\tau_2$. Because this exactly equals the marginal cost of issuing a CAT bond, the decision-maker is economically indifferent to whether or not to cover losses at these levels.

The monotonic behavior of $d_\beta^*$ with respect to $\beta$ in Proposition \ref{p:opt_d_beta} admits a clear economic interpretation. Because $\beta$ represents the confidence level of the TVaR measure, a higher $\beta$ implies that the decision-maker is concerned with more extreme tail losses. Consequently, it is rational to raise the optimal attachment point to reserve coverage for these highly severe events.

Notably, this result serves as a meaningful supplement to the comparative statics established in Proposition \ref{p:d_phi_com}. Specifically, Proposition \ref{p:d_phi_com} characterizes the behavior of the optimal attachment point $d_\phi^*$ in response to changes in the real-valued parameters $\tau_1$, $\tau_2$, and $w$. In contrast, the present result addresses the functional parameter: the risk spectrum function $\phi$. Admittedly, when restricted to the TVaR measure, this functional parameter $\phi$ is entirely characterized by the real-valued parameter $\beta$. Nevertheless, this result motivates a broader inquiry into how the optimal attachment point responds to a general variation in the risk spectrum function, which remains an open question for future research.

The following proposition presents the optimal solution to Problem \eqref{P-SPEC'} under the proportional hazard transform and establishes its range. The proof follows directly from Theorem \ref{t:opt-d-phi} and Proposition \ref{p:bounds} and is therefore omitted.

\begin{proposition}\label{p:opt_d_r}
 Consider Problem \eqref{P-SPEC'} where $M_\phi$ is a proportional hazard transform, characterized by the risk spectrum function $\phi(p)=r(1-p)^{r-1}$ with parameter $r\in(0,1)$. Under this setup, the optimal attachment point $d_r^*$ is the unique solution to
 \begin{equation}\label{opt_d_r}
     \frac{\big(S(d)\big)^r - \big(S(d+w)\big)^r}{S(d)-S(d+w)} = 1+\tau_2,
 \end{equation}
 where $S(x) = \mathbb{P}\{X>x\}$ is the survival function of $X$.
 
Furthermore,  
$$d_r^* \in [\max\{\operatorname{VaR}_{\rho_1}(X), \operatorname{VaR}_{\rho_2}(X)-w\}, \operatorname{VaR}_{\rho_2}(X)]$$
with 
$\rho_1 = 1 - \big(\frac{1}{1+\tau_2}\big)^{\frac{1}{1-r}}$ and 
$\rho_2 = 1 - \big(\frac{r}{1+\tau_2}\big)^{\frac{1}{1-r}}$.
\end{proposition}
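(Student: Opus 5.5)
The plan is to specialize Theorem~\ref{t:opt-d-phi} and Proposition~\ref{p:bounds} to the spectrum $\phi(p)=r(1-p)^{r-1}$. First I will check that this is an admissible risk spectrum. It is nonnegative and $\int_0^1 r(1-p)^{r-1}dp = 1$. It is increasing in $p$ because $r-1<0$. Next I will compute the average weight in closed form. An antiderivative of $\phi$ is $-(1-p)^r$, so
\[
\bar\phi(u,v)=\frac{(1-u)^r-(1-v)^r}{v-u}.
\]
Substituting $u=F(d)$ and $v=F(d+w)$, so that $1-u=S(d)$ and $1-v=S(d+w)$, gives exactly the left-hand side of \eqref{opt_d_r}.

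The second step is to show that $d_r^*$ is finite and solves \eqref{opt_d_r}. The map $d\mapsto\bar\phi(F(d),F(d+w))$ is continuous on the range where $S(d)>S(d+w)$, and it is increasing by the monotonicity of $\bar\phi$ noted before Lemma~\ref{l:jphi'}. As $d\to\infty$, $S(d)\to0$, and the left-hand side of \eqref{opt_d_r} is at least of order $S(d)^{r-1}$ times a constant. More carefully, by the mean value theorem it equals $r\,\xi^{r-1}$ for some $\xi\in(S(d+w),S(d))$, so it is at least $rS(d)^{r-1}\to\infty$. Hence the set in \eqref{opt-d-phi} is nonempty and $d_r^*<\infty$.

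The third step is to argue that the threshold is actually attained, i.e.\ that the left-hand side of \eqref{opt_d_r} is at most $1+\tau_2$ for small $d$. The same mean value representation gives $r\xi^{r-1}$ with $\xi\le S(d)\le1$. Near $d=-w$, where $S(d)=1$ and $S(d+w)=S(0)=1$, this value should be close to $r<1+\tau_2$. The caveat is that for $d<0$ we have $F(d)=0$, so one should evaluate at $d$ slightly above $-w$. Continuity and the intermediate value theorem then give a solution of \eqref{opt_d_r}, and that solution coincides with the infimum in \eqref{opt-d-phi}.

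For uniqueness I will use strict monotonicity. Since $\phi$ is strictly increasing here and $X$ has a density, $\bar\phi(F(d),F(d+w))$ is strictly increasing in $d$ wherever $F(d)<F(d+w)$. So the crossing point is unique.

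The bounds come from Proposition~\ref{p:bounds}, once $\rho_1$ and $\rho_2$ are computed. For $\rho_2$, solve $r(1-p)^{r-1}=1+\tau_2$, which gives $1-p=\bigl(\frac{r}{1+\tau_2}\bigr)^{1/(1-r)}$. For $\rho_1$, note that $\bar\phi(p,1)=(1-p)^r/(1-p)=(1-p)^{r-1}$. Setting this equal to $1+\tau_2$ gives $1-p=\bigl(\frac{1}{1+\tau_2}\bigr)^{1/(1-r)}$. Both expressions are continuous and strictly increasing in $p$, so these infima are exactly the stated values.

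The main obstacle is the boundary behavior in the third step: confirming that the crossing occurs at an interior $d$ with $S(d)>S(d+w)$, rather than in a degenerate region. This may need the observation that the lower bound $d_r^*\ge\operatorname{VaR}_{\rho_1}(X)\ge0$ already places the solution where $F$ is strictly increasing.
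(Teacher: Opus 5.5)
Your proposal is correct and follows the route the paper indicates: the paper omits the proof as a direct specialization of Theorem~\ref{t:opt-d-phi} and Proposition~\ref{p:bounds}, and your computations of $\bar\phi$, $\rho_1$, $\rho_2$ together with the added finiteness, existence and strict-monotonicity (uniqueness) checks are sound. One small fix in your third step: the upper bound near $d=-w$ comes from $\xi\ge S(d+w)$, which gives $r\xi^{r-1}\le rS(d+w)^{r-1}\to r$, whereas $\xi\le S(d)\le 1$ only yields the lower bound $r\xi^{r-1}\ge r$.
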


\section{Case study: U.S. severe storm losses}
\label{sec:case_study}

To illustrate the theoretical results developed in Section~\ref{sec-optimal_attachment_point}, we conduct a case study to analyze annual U.S. severe storm losses. The dataset is sourced from the U.S. Billion-Dollar Weather and Climate Disasters database, formerly administered by the National Oceanic and Atmospheric Administration's National Centers for Environmental Information (NOAA NCEI); the database was retired in 2025, but its records remain archived by NCEI \citep{NOAA2024}. It comprises the annual aggregate direct economic losses attributed to severe storms over the period 1980--2024, CPI-adjusted and expressed in billions of U.S. dollars. Because the database exclusively records events with losses of at least one billion dollars, the data are left-censored at this threshold: of the 45 years, $n=40$ have exact observations and five are known only to be below one billion dollars. The complete dataset is presented in Table~\ref{tab:severe_storm_data}. Key characteristics of the exact observations are summarized in Table~\ref{tab:desc_severe_storms} and illustrated in Figure~\ref{fig:severe_storm_visualizations}.

\begin{table}[!htbp]
\centering
\footnotesize
\begin{tabular}{rrrrrr}
\toprule
Year & Loss & Year & Loss & Year & Loss \\
\midrule
1980 & $<1.0$  & 1995 & 12.6 & 2010 & 13.0 \\
1981 & 1.4  & 1996 & $<1.0$  & 2011 & 45.1 \\
1982 & 3.2  & 1997 & 1.9  & 2012 & 22.5 \\
1983 & $<1.0$  & 1998 & 9.1  & 2013 & 14.5 \\
1984 & 3.3  & 1999 & 3.8  & 2014 & 15.8 \\
1985 & 1.8  & 2000 & 1.3  & 2015 & 10.9 \\
1986 & 1.5  & 2001 & 6.7  & 2016 & 19.2 \\
1987 & $<1.0$  & 2002 & 5.0  & 2017 & 24.7 \\
1988 & $<1.0$  & 2003 & 13.8 & 2018 & 17.0 \\
1989 & 1.4  & 2004 & 2.9  & 2019 & 17.1 \\
1990 & 2.0  & 2005 & 1.4  & 2020 & 40.7 \\
1991 & 1.5  & 2006 & 11.0 & 2021 & 23.4 \\
1992 & 7.2  & 2007 & 3.8  & 2022 & 22.8 \\
1993 & 1.4  & 2008 & 13.3 & 2023 & 55.7 \\
1994 & 2.1  & 2009 & 11.8 & 2024 & 46.8 \\
\bottomrule
\end{tabular}
\caption{Annual U.S. severe storm losses, 1980-2024 (CPI-adjusted, in \$billion)}
\label{tab:severe_storm_data}
\end{table}

\begin{table}[!htbp]
\centering
\small
\renewcommand{\arraystretch}{1.2}
\begin{tabular}{l r @{\hspace{1.5cm}} l r}
\toprule
\textbf{Metric} & \textbf{Value} & \textbf{Metric} & \textbf{Value} \\
\midrule
Number of Observations        & 40     & 10th Percentile               & 1.40   \\
Mean                          & 12.86  & 25th Percentile (Q1)          & 2.08   \\
Median                        & 10.00  & 75th Percentile (Q3)          & 17.02  \\
Standard Deviation            & 13.71  & 90th Percentile               & 26.30  \\
Coefficient of Variation (\%) & 106.59 & Interquartile Range (IQR)     & 14.95  \\
Minimum                       & 1.30   & Skewness                      & 1.58   \\
Maximum                       & 55.70  & Excess Kurtosis                      & 1.96   \\
\bottomrule
\end{tabular}
\caption{Descriptive statistics of U.S. severe storm losses, 1980--2024.}
\label{tab:desc_severe_storms}
\end{table}

\begin{figure}[!htbp]
    \centering
    \includegraphics[width=\textwidth]{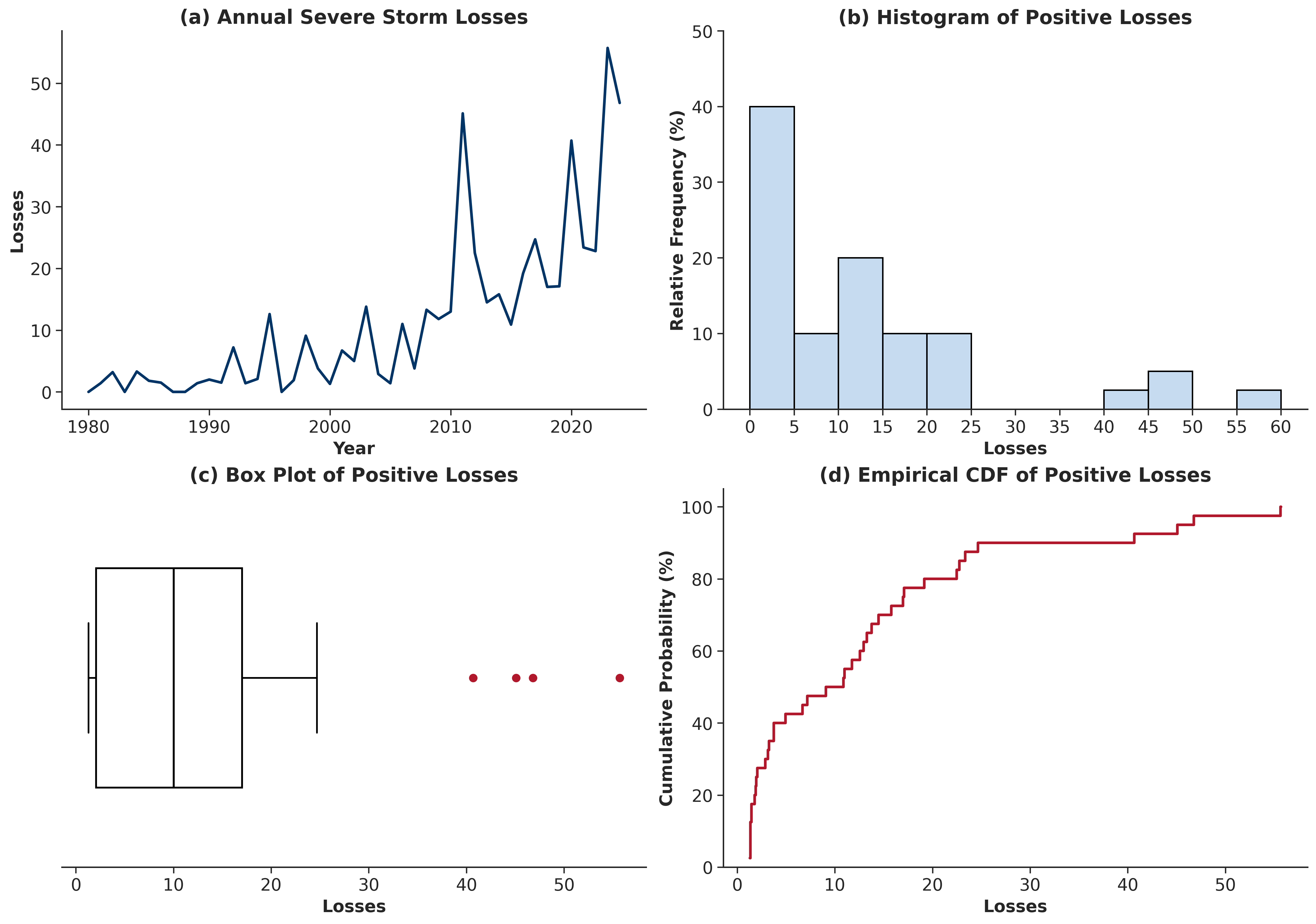}
    \caption{Visual summary of U.S. severe storm losses, 1980--2024.}
    \label{fig:severe_storm_visualizations}
\end{figure}

\FloatBarrier


As shown in Figure~\ref{fig:severe_storm_visualizations}(b) and (d), the empirical distribution is strongly right-skewed, with mass concentrated at moderate loss levels and a pronounced upper tail. The boxplot in Figure~\ref{fig:severe_storm_visualizations}(c) further shows high-end outliers alongside a tightly concentrated body. Together with the statistics in Table~\ref{tab:desc_severe_storms}, these patterns indicate that severe storm losses exhibit catastrophic characteristics and should be modeled by a heavy-tailed distribution. In addition, Figure~\ref{fig:severe_storm_visualizations}(a) shows rising levels and substantial volatility over time, with episodic extreme spikes, consistent with observed climate trends and growing exposure.

\subsection{Parametric fitting of the loss distribution}
\label{sec:parametric-fit}

In this subsection, we fit the loss data to parametric models and implement the optimal CAT bond design derived in Sections 3 and 4. We consider three candidate distributions: Lomax (Pareto Type II), Weibull, and lognormal, as they are standard in actuarial science for modeling heavy-tailed losses \citep{klugman2012loss, embrechts1997modelling}. We estimate the parameters by maximum likelihood for left-censored data as in   \citep{klugman2012loss} and assess goodness-of-fit with the statistics summarized in Table~\ref{tab:model-comparison}.




\begin{table}[H]
\centering
\small
\setlength{\tabcolsep}{5pt}
\renewcommand{\arraystretch}{1.2}
\begin{tabular}{l c c S[table-format=-3.2] S[table-format=3.2] S[table-format=3.2] S[table-format=1.2] S[table-format=1.3] }
\toprule
& \multicolumn{2}{c}{Estimates} & & & & &  \\
\cmidrule(lr){2-3} 
Model & Shape & Scale & {Log-lik.} & {AIC} & {BIC} & {$\Delta$AIC} & {$\omega_i$} \\
\midrule
Weibull   & $\hat k = 0.80$     & $\hat\lambda = 10.16$ & -153.05 & 310.10 & 313.71 & 0.00 & 0.554  \\
Lomax     & $\hat\alpha = 3.44$ & $\hat\theta = 28.92$  & -153.90 & 311.80 & 315.42 & 1.70 & 0.237  \\
Lognormal & $\hat\sigma = 1.39$  & $\hat\mu = 1.67$       & -154.02 & 312.04 & 315.65 & 1.94 & 0.210 \\
\bottomrule
\end{tabular}
\caption{Model fitting statistics}
\label{tab:model-comparison}
\end{table}

 As shown in Table~\ref{tab:model-comparison}, the Weibull model yields the highest log-likelihood and lowest information criteria, though its performance advantage is marginal. Because the Lomax and lognormal models fall within the $\Delta AIC<2$ threshold relative to the Weibull and the Akaike weights 
  fail to assign any model a decisive share of evidence, all three specifications retain substantial empirical support \citep{anderson2004model}. Furthermore, the Kolmogorov--Smirnov test does not reject any of the models at the $\gamma=0.05$ significance level. Overall, empirical diagnostics support all three candidate distributions without decisively preferring any single  model.

 Among the three candidate distributions, only the Lomax exhibits regular variation in the sense of \citet{bingham1989regular} 
and thus assigns greater probability mass to extreme events. Therefore, we select it 
 as a more conservative choice for modeling catastrophic risk. Furthermore, among the three candidates only the Lomax density is decreasing, as required by the model in Section~\ref{sec:model-setup}.
As an additional mathematical advantage, the Lomax model allows closed-form expressions for the key loss functionals in the subsequent analysis.

\subsection{Numerical implementation of the model}
\label{sec:numerical_implementation}

Using the loss distribution calibrated in Section~\ref{sec:parametric-fit}, we compute the optimal attachment point $d^*$ under each risk assessment criterion. Following Theorem~\ref{thm:optform} and the results derived in Section~\ref{sec-optimal_attachment_point}, determining $d^*$ reduces to solving the corresponding first-order conditions. We shall first evaluate the optimal attachment point under the baseline parameter configuration and then perform a comparative static analysis by varying the underlying parameters.


 The baseline parameter configuration is summarized in Table~\ref{tab:baseline_config}. The distributional parameters $\hat{\alpha}$ and $\hat{\theta}$ are the maximum-likelihood estimates for the Lomax model obtained in Section~\ref{sec:parametric-fit}, and the bond volume $w$ is set to $2\mathbb{E}[X]$. Note that under this configuration, condition~\eqref{m(0)<0} holds. Table~\ref{tab:baseline_results} reports the optimal attachment points $d^*$ across all six risk criteria, in terms of both absolute values and  relative magnitudes normalized by the expected loss, alongside the corresponding theoretical bounds for $d^*$ and the relative reduction of the objective functions. 
Figure \ref{fig:optimal_attachment_points} presents the graphs of the optimal indemnity functions across all risk criteria.

The baseline parameter configuration is summarized in Table~\ref{tab:baseline_config}. The distributional parameters $\hat{\alpha}$ and $\hat{\theta}$ are the maximum-likelihood estimates for the Lomax model obtained in Section~\ref{sec:parametric-fit}, and the bond volume $w$ is set to $2\mathbb{E}[X]$. Note that under this configuration, condition~\eqref{m(0)<0} holds. Table~\ref{tab:baseline_results} reports the optimal attachment points $d^*$ across all six risk criteria, in terms of both absolute values and relative magnitudes normalized by the expected loss, alongside the corresponding theoretical bounds for $d^*$ and the relative reduction of the objective functions. 
 The objective functions under the expected disutility framework can be decomposed  into probabilities and (partial) moments of $X$, the computation of which is straightforward. Under the spectral risk measure framework, the computation is based on the expression of $\operatorname{VaR}_\beta(L_d)$ 
 provided in 
 Lemma~\ref{lemma:netloss-dist} in Appendix~\ref{app:netloss}.
 Figure~\ref{fig:optimal_attachment_points} presents the graphs of the optimal indemnity functions across all risk criteria.


\begin{table}[!htbp]
\centering
\small
\begin{tabular}{llc}
\toprule
& Parameter & Value \\
\midrule
\multirow{2}{*}{Loss distribution (Pareto--Lomax)}
  &  shape $\hat{\alpha}$ & $3.44$ \\
  &  scale $\hat{\theta}$ & $28.92$ \\
\midrule
\multirow{3}{*}{Pricing and contract parameters}
  & Premium loading $\tau_1$ & $0.1$ \\
  & Investor loading $\tau_2$ & $0.3$ \\
  & Bond issuance volume $w$ & $23.71$ \\
\midrule
\multirow{2}{*}{Spectral risk measure parameters}
  & TVaR confidence level $\beta$ & $0.95$ \\
  & PHT distortion parameter $r$ & $\frac{1}{3}$ \\
\bottomrule
\end{tabular}
\caption{Baseline parameter configuration.}
\label{tab:baseline_config}
\end{table}

\begin{table}[!htbp]
\centering
\small
\begin{tabular}{lcccc}
\toprule
Risk criterion & $d^*$ & $d^*/\mathbb{E}[X]$ (\%) & Risk reduction (\%) & Ranges \\
\midrule
$\operatorname{TVaR}_{\beta}$ & $20.02$ & $168.84$ & $37.46$ & $[16.46,\ 40.18]$ \\
$\operatorname{PHT}_{r}$      & $14.74$ & $124.33$ & $4.21$  & $[3.51,\ 23.44]$ \\
$v_4(x)=x_+^2$ & $12.59$ & $106.18$ & $40.86$ & $[\,2.29,\ \infty)$ \\
$v_3(x)=x^2$   & $6.98$  & $58.89$  & $46.95$ & $[\,0,\ \infty)$ \\
$v_2(x)=x_+$   & $6.14$  & $51.75$  & $56.03$ & $[\,2.29,\ \infty)$ \\
$v_1(x)=|x|$   & $6.14$  & $51.75$  & $64.67$ & $[\,0,\ \infty)$ \\
\bottomrule
\end{tabular}
\caption{Optimal $d^*$ and relevant quantities under the baseline configuration. }
\label{tab:baseline_results}
\end{table}


\FloatBarrier

\begin{figure}[!htbp]
\centering
\includegraphics[width=0.85\textwidth]{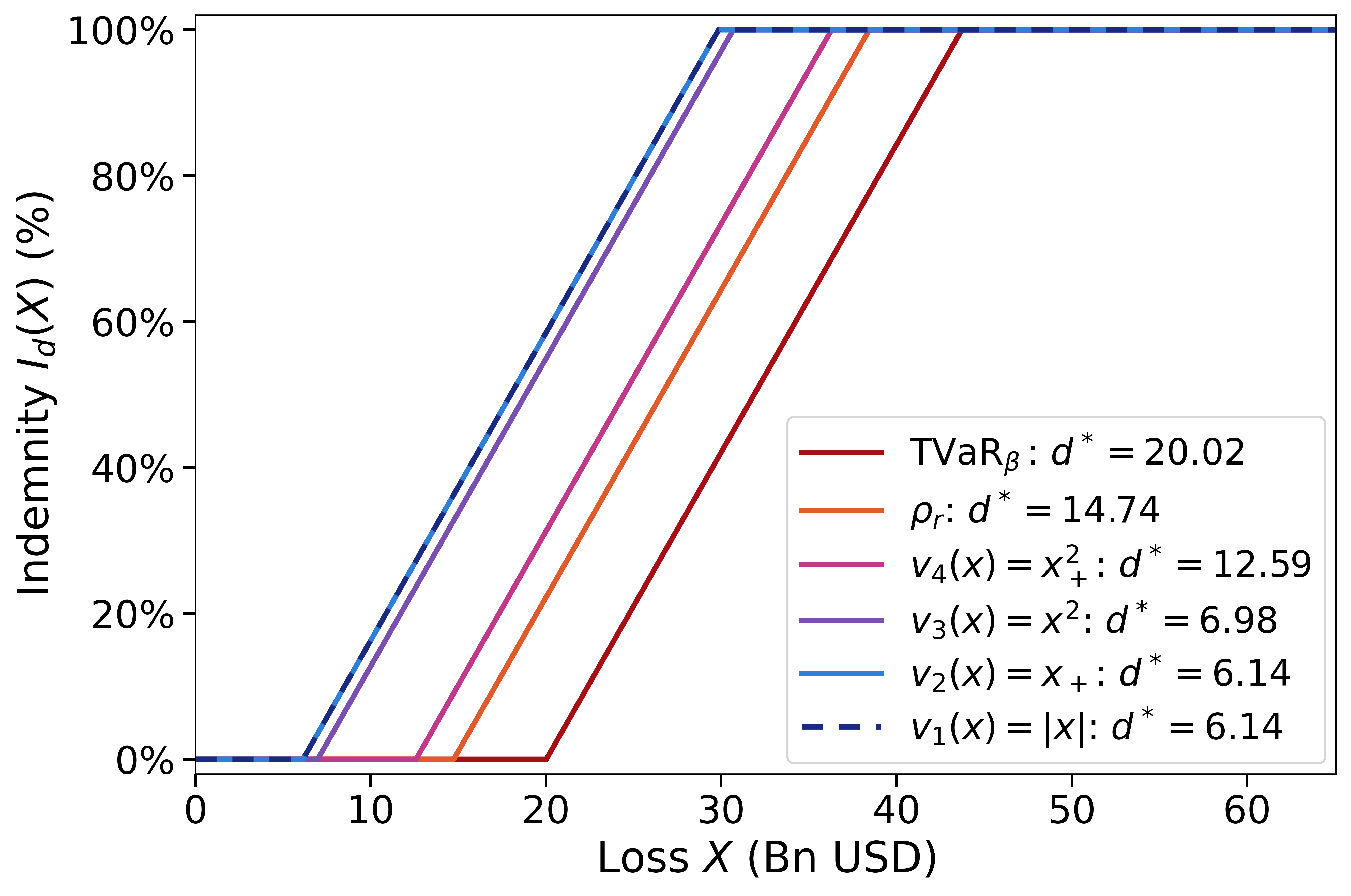}

\caption{
Optimal indemnity functions. 
}

\label{fig:optimal_attachment_points}
\end{figure}

\FloatBarrier



 Table~\ref{tab:baseline_results} reveals that, across all risk criteria, the optimal attachment points range from 50\% to 170\% of the expected value of original losses. In addition, all these attachment points strictly fall within their theoretical ranges established in Proposition~\ref{p:opt_d_beta}, Proposition~\ref{p:opt_d_r} and Proposition~\ref{p:d*prop}, providing a robust numerical consistency check. It is worth noting that the optimal attachment points for the expected disutility models under $v_1 = \vert{}x\vert{}$ and $v_2 = x_+$ coincide. This alignment is consistent with the findings established in Proposition~\ref{p:d1_d2}, which provides theoretical evidence that both models share a common optimal attachment point across a wide range of scenarios.  Furthermore, the risk-reduction column reports the relative decrease in each objective function when comparing the optimal CAT bond coverage to a scenario with no coverage ($d=d^*$ vs. $d=\infty$), serving as an intuitive measure of the CAT bond's efficiency in reducing risk exposure. The results demonstrate that for most models, the optimal CAT bond coverage yields a significant risk reduction. The sole exception is the PHT measure with $r=1/3$; because this distortion index is close to $1/\hat{\alpha}$, the PHT assigns substantial decision weight to the extreme tail, which cannot be effectively mitigated by CAT bond with a coverage layer of only $w=2\mathbb{E}[X]$.

 Beyond the baseline parameter configuration, we vary the parameters and investigate how the behavior of the optimal attachment point in response to these changes. The results are visualized in Figure~\ref{fig:comparative_statics}.

\begin{figure}[!htbp]
\centering



\begin{subfigure}[t]{0.48\textwidth}
    \centering
    \includegraphics[width=\textwidth]{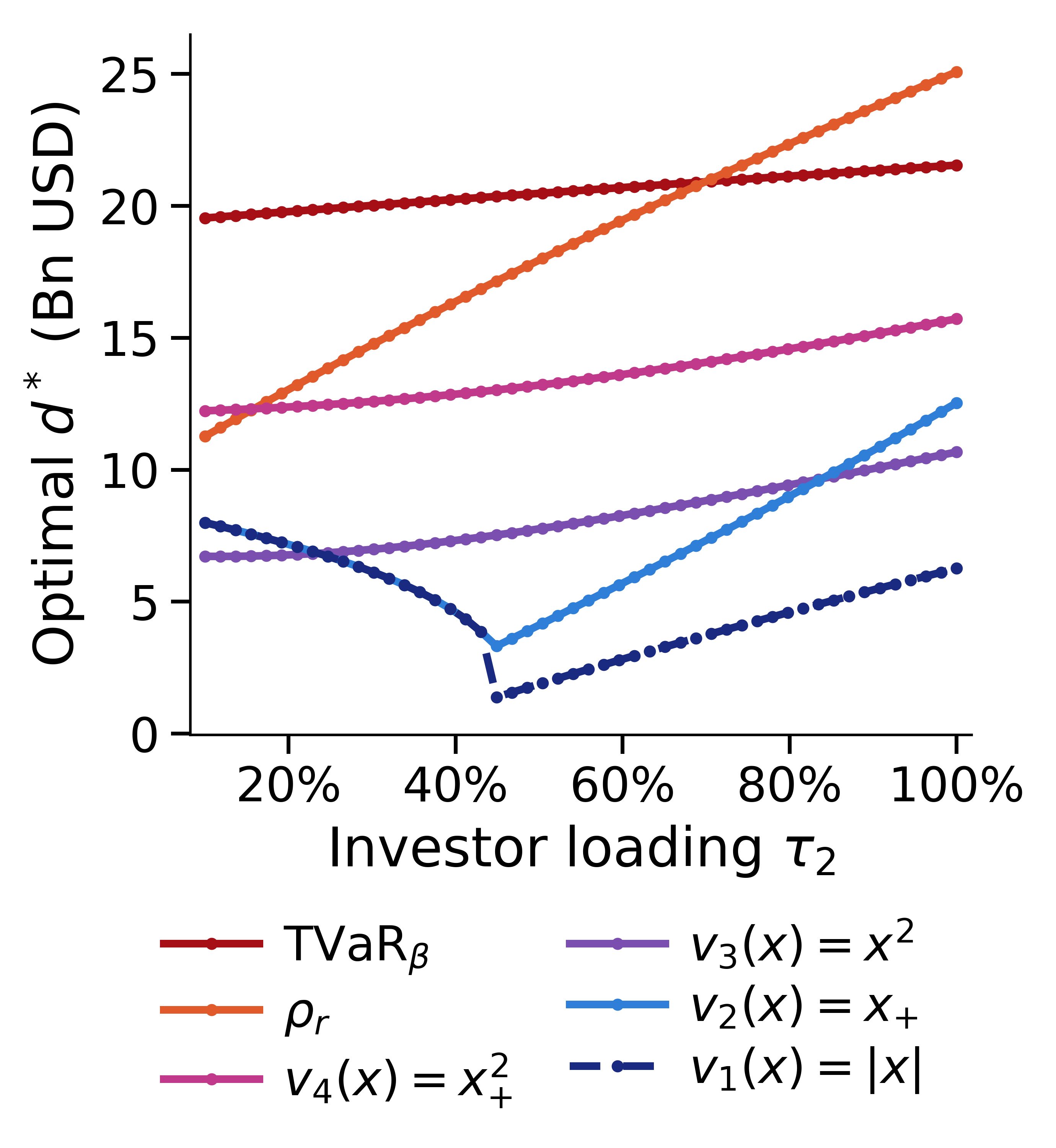}
    \caption{Comparative statics with respect to $\tau_2$.}
    \label{fig:cs_tau2}
\end{subfigure}
\begin{subfigure}[t]{0.48\textwidth}
    \centering
    \includegraphics[width=\textwidth]{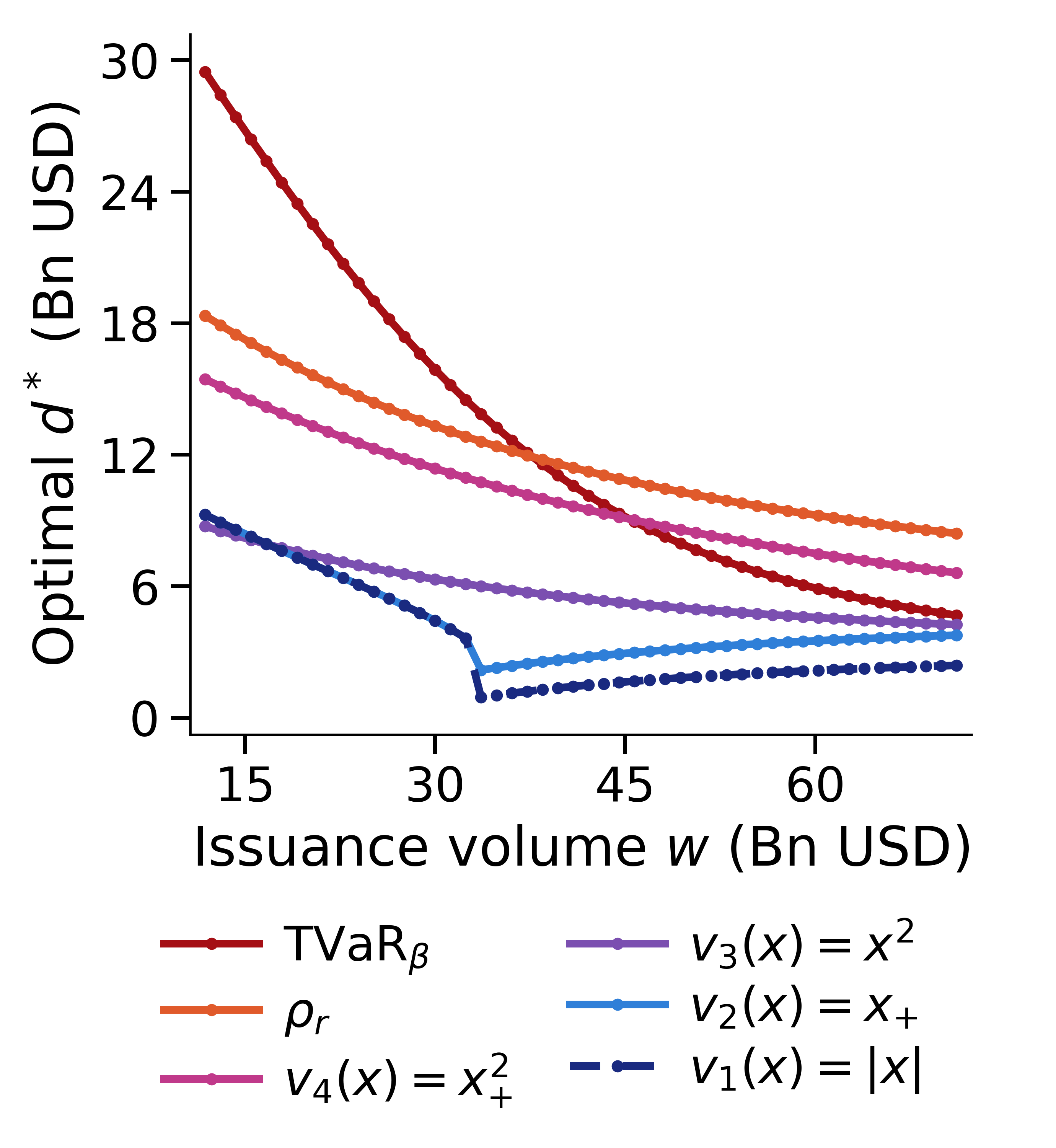}
    \caption{Comparative statics with respect to $w$.}
    \label{fig:cs_w}
\end{subfigure}

\caption{Comparative statics of  $d^*$ with respect to  $\tau_2$, and $w$.}
\label{fig:comparative_statics}
\end{figure}

\FloatBarrier

 Across the two panels in Figure~\ref{fig:comparative_statics}, the spectral risk measure minimizers behave exactly as Proposition~\ref{p:d_phi_com} outlines: they are  increasing in the CAT bond pricing loading $\tau_2$ and decreasing in the CAT bond issue volume $w$. These graphical results provide a numerical validation of the theoretical results. The interpretations of these behaviors  have been discussed following Proposition~\ref{p:d_phi_com}  and are thus omitted here.

 Within the expected disutility framework, no theoretical comparative static results have been established due to its analytical complexity. However, in the specific setting of our case study,  the expected disutility minimizers exhibit behaviors that closely align with those of the spectral risk measure minimizers and yield similarly intuitive interpretations. Specifically, for the square and semi-square disutility functions ($v_3(x)=x^2$ and $v_4(x) = x_+^2$), the minimizers are increasing in $\tau_2$ and decreasing in $w$. This perfectly echoes the economic intuition established under the spectral framework: higher pricing loadings compel the decision-maker to reserve costly coverage for more severe tail events, whereas greater issuance volume incentivizes shifting coverage downward to absorb middle-tranche losses.

 For the two semi-linear disutility functions, $v_1(x)=|x|$ and $v_2(x)=x_+$, the corresponding optimal attachment points, $d_1^*$ and $d_2^*$, as functions of $\tau_2$ and $w$, exhibit the same monotonicity behavior as two other disutility functions only within a restricted range. Specifically, both $d_1^*$ and $d_2^*$ increase with $\tau_2$ once $\tau_2$ exceeds a threshold value, and decrease with $w$ until $w$ reaches a threshold. The mechanisms underlying the divergence of $d_1^*$ and $d_2^*$ from $d_3^*$ and $d_4^*$ outside this parameter region remain to be investigated.    
Additionally, it is worth noting that the $d^*$ curves for these two disutility functions overlap significantly. This visually confirms that $d_1^*$ and $d_2^*$ coincide across a wide range of scenarios, as presented in Proposition~\ref{p:d1_d2}.

\begin{figure}[H]
    \centering
    \includegraphics[width=0.78\textwidth]{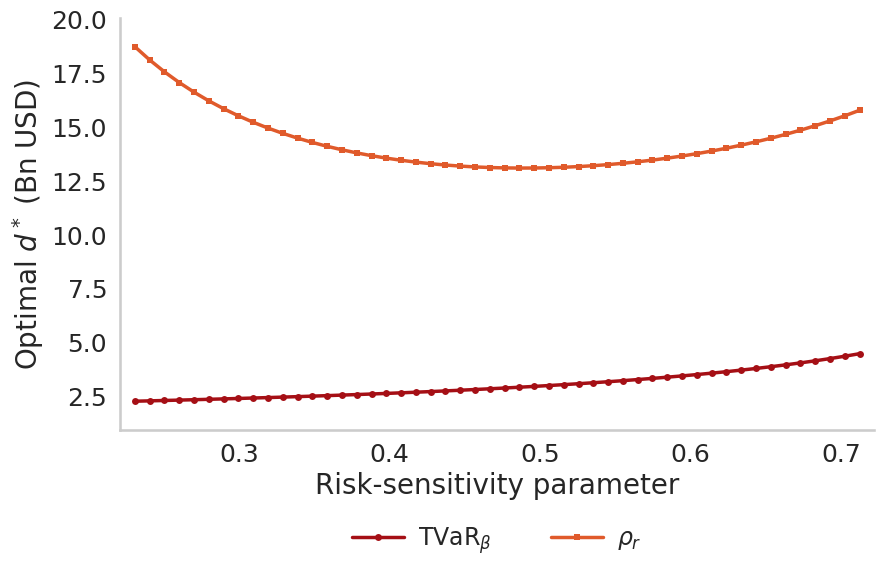}
    \caption{Comparative statics of  $d^*$ w.r.t the parameters of risk spectrum functions
    }
    \label{fig:comparative_statics_tvar_pht}
\end{figure}

 Figure~\ref{fig:comparative_statics_tvar_pht} illustrates how the minimizers of the spectral risk measures vary with their respective risk-sensitivity parameters. The blue curve represents the minimizer of $\operatorname{TVaR}_\beta$, denoted by $d_\beta^*$, as a function of the confidence level $\beta$. Consistent with Proposition~\ref{p:opt_d_beta}, $d_\beta^*$ increases monotonically with $\beta$. The purple curve depicts $d_{1-r}^*$, the minimizer associated with the proportional hazard transform (PHT) of the net loss, plotted against $1-r$. We use $1-r$ rather than $r$ to facilitate comparison with $\beta$, since under this setup larger values of both parameters correspond to greater sensitivity to tail risk. The range of $1-r$ is restricted to values below $0.7$ to ensure the existence of the PHT risk measure. Unlike $d_\beta^*$, the optimal attachment point $d_{1-r}^*$ does not appear to be monotonic in $1-r$. The structural reasons for this behavior remain unclear and is worth further investigation.

Irrespective of its nonmonotonic behavior, Figure~\ref{fig:comparative_statics_tvar_pht} shows that the PHT-based optimal attachment point consistently exceeds its TVaR counterpart across the entire range of risk-sensitivity parameters considered. Intuitively, this observation is consistent with the stronger tail emphasis induced by the PHT distortion, which places substantially greater weight on extreme losses than TVaR. However, at present we lack a rigorous theoretical explanation for this ordering, and establishing such a result remains an important direction for future research.

More broadly, these findings motivate comparative statics with respect to structural model components rather than solely within a given parametric family. The comparative statics developed in the preceding sections examine how the optimal attachment point responds to changes in explicit model parameters, such as $\tau_2$, $w$, $\beta$, and $r$. A natural extension is to investigate how the optimal solution changes when the underlying risk evaluation framework itself is modified, for example, when moving between different families of risk measures (such as TVaR and PHT) or between alternative disutility functions. 
Another interesting direction of extension is to study the sensitivity of the optimal attachment point to the underlying loss distribution. This is of considerable practical importance because the loss distribution is typically  estimated from data and  therefore subject to model uncertainty and misspecification. Comparative statics with respect to distributional characteristics could provide valuable insights into the robustness of the optimal strategy and help practitioners assess the consequences of model error. Such results would contribute to a better understanding of how optimal CAT bond design should be implemented in the presence of distributional uncertainty.

\section{Conclusion}
\label{sec:conclussions}

In this paper, we investigate the optimal design of CAT bonds by treating the indemnity function as a decision variable rather than an exogenously specified feature. We analyze the design problem under both the expected disutility framework and the spectral risk measure framework. Our results show that the optimal indemnity function should follow the piecewise-linear structure under both frameworks, which reveals a common structural property that is robust across different risk evaluation criteria. In addition, we show that over single-layer indemnities with an unrestricted slope, maximizing the sponsor's expected net loss drives the design to a threshold contract (see Appendix~\ref{app:max_expected_net_loss}), which the slope bound in our admissible class rules out.

These findings have important implications for both practice and theory. From a practical perspective, The results provide a theoretical foundation for the sliding-scale payout mechanisms that have recently emerged in CAT bond markets and demonstrate their superior risk-reduction efficiency relative to traditional threshold payout structures. 
From a methodological perspective, the piecewise-linear characterization reduces the infinite-dimensional optimization problem to a one-dimensional optimization problem, which significantly enhances tractability and facilitates further analytical and numerical investigations of optimal CAT bond design.

Within the reduced one-dimensional optimization framework, we derive explicit or semi-explicit expressions for the optimal attachment point under each of the six risk criteria considered. All results admit a clear economic interpretation based on the trade-off between the marginal benefit and the marginal cost of increasing coverage. Building upon these results, we establish the monotonic behavior of the optimal attachment point with respect to changes in external parameters under the spectral framework, and numerically confirm it there while exploring it under the expected disutility framework.

Finally, we apply the model to U.S. severe storm loss data. The empirical results confirm our theoretical analysis and demonstrate the risk-reduction benefits of the optimal CAT bond design under each risk criterion. The case study also uncovers several empirical patterns that warrant further investigation. In particular, the optimal attachment point increases across several risk evaluation criteria, suggesting a tendency to shift coverage toward increasingly severe loss events as the risk criterion becomes more conservative. One possible explanation is that more punitive risk assessments induce the optimal contract to forgo protection against relatively small losses in order to preserve coverage for catastrophic, potentially solvency-threatening events. While this interpretation is economically intuitive, it remains an open question whether such a pattern holds more generally, and, if so, how the notion of  "more punitive"  should be formally defined and characterized within a theoretical framework.

The analysis developed in this paper lays a foundation for the study of optimal CAT bond design, but it also raises a number of challenging and practically relevant research questions. A first limitation of our framework is that both the issuance volume and the CAT bond price are treated as exogenous inputs. In practice, however, sponsors cannot freely choose an arbitrary level of coverage. The amount of protection that can be obtained depends on investor demand, while the attractiveness of alternative contract structures is ultimately determined by their market price. A natural next step is therefore to endogenize these quantities and study the joint determination of the indemnity function, issuance volume, and pricing mechanism. Such an extension would transform the design problem from a purely risk-management exercise into a market-equilibrium problem, providing a richer understanding of the interaction between sponsors' hedging objectives and investors' risk-return preferences.

A second direction concerns the choice of indemnity trigger. Throughout the paper we focus on the direct-loss indemnity trigger, which provides the cleanest benchmark for studying the optimal contract structure. In practice, however, many CAT bonds rely on parametric triggers, owing to their transparency, lower information requirements, and faster settlement process. These advantages come at the cost of basis risk, which introduces an additional trade-off absent from the current framework. It remains an open question how the optimal contract changes when payouts are linked to a parametric index rather than realized losses. More generally, future work could explore hybrid structures that combine indemnity-based and parametric triggers, and investigate the conditions under which the benefits of reduced basis risk justify the additional complexity.

A third direction concerns the extension of comparative statics analysis, which is essential for assessing the robustness of CAT bond designs. In this paper, we focus primarily on explicit model parameters, such as the issuance volume and the CAT bond pricing loading. A more challenging and arguably more interesting question is how the optimal attachment point responds to changes in the structural components of the model, including the risk evaluation criterion, represented by a disutility function or a spectral risk measure, and the distribution of the underlying loss. Such questions are of considerable practical importance. Sponsors often evaluate risk using multiple criteria to reflect different business purposes and the loss distribution itself is inevitably subject to estimation error and model uncertainty. Understanding how the optimal contract varies across alternative risk criteria and distributional assumptions would therefore provide valuable guidance for the design of robust CAT bond structures.

These directions point toward a broader research agenda in which contract design, risk measurement, model uncertainty, and market equilibrium are treated within a unified framework. We hope that the results and methodologies developed in this paper provide a useful first step toward that goal.



\bibliography{references_test}

\newpage

\appendix

\section{Expressions of $\operatorname{VaR}_\beta(L_d)$ and $\operatorname{TVaR}_\beta(L_d)$}
\label{app:netloss}

\begin{lemma}\label{lemma:netloss-dist}
The expression of $\operatorname{VaR}_\beta(L_d)$ is given by
\begin{eqnarray*}
\operatorname{VaR}_\beta(L_d) =
\begin{cases}
\operatorname{VaR}_\beta(X) - d + m(d), & 0\le\beta<F(d),\\
m(d), & F(d)\le\beta<F(d+w),\\
\operatorname{VaR}_\beta(X) - d - w + m(d), & F(d+w)\le\beta\le 1.
\end{cases}
\end{eqnarray*}
\end{lemma}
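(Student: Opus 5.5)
The plan is to use the quantile-invariance property already invoked in the proof of Lemma~\ref{l:jphi'}. Since $L_d(x) = x-d-(x-d)_+\wedge w + m(d)$ is a continuous, increasing function of $x$, we have $\operatorname{VaR}_\beta(L_d(X)) = L_d(\operatorname{VaR}_\beta(X))$ for every $\beta\in[0,1]$ \citep{McNeil2015}. The lemma then reduces to evaluating the piecewise expression \eqref{piecewiseld} at the point $x=\operatorname{VaR}_\beta(X)$, and sorting the result according to where this point lies relative to the layer $[d,d+w]$.

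The key step is to translate the location of $\operatorname{VaR}_\beta(X)$ into a condition on $\beta$. Because $X$ is continuous with a decreasing density, $F$ is continuous. On the support of $X$, we use the equivalences $\operatorname{VaR}_\beta(X)<d \iff \beta<F(d)$ and $\operatorname{VaR}_\beta(X)\ge d+w \iff \beta\ge F(d+w)$. These follow from the definition $\operatorname{VaR}_\beta(X)=\inf\{x:F(x)\ge\beta\}$ together with the continuity and monotonicity of $F$. With them in hand, the three cases are read off directly.

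\emph{Case $\beta<F(d)$.} Here $(x-d)_+\wedge w=0$, so the value is $\operatorname{VaR}_\beta(X)-d+m(d)$.

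\emph{Case $F(d)\le\beta<F(d+w)$.} Here $d\le x<d+w$, so $(x-d)_+\wedge w=x-d$, and the value is the plateau level $m(d)$.

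\emph{Case $\beta\ge F(d+w)$.} Here the cap equals $w$, giving $\operatorname{VaR}_\beta(X)-d-w+m(d)$.

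The main obstacle is the boundary behavior, in three places. First, if $F$ is flat on an interval, the equivalences above could fail at its endpoints. The decreasing-density assumption keeps the density positive up to the right end of the support, so $F$ is strictly increasing there; I will check that only a null set of $\beta$ is affected. Second, at $\beta=F(d)$ we may have $\operatorname{VaR}_\beta(X)=d$, but continuity of $L_d$ makes the formulas agree at this endpoint. Third, if $d<0$, then $F(d)=0$ and the first case is empty, which is consistent with the formula. The endpoint $\beta=1$, where $\operatorname{VaR}_1(X)$ may be infinite, is handled by the convention that the third branch is then $+\infty$ as well.
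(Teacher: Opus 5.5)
Your proposal is correct and follows the paper's proof: both use the equivariance $\operatorname{VaR}_\beta(L_d(X))=L_d(\operatorname{VaR}_\beta(X))$ for the increasing map $x\mapsto L_d(x)$ and then read off the three branches of \eqref{piecewiseld}, and your treatment of the endpoints is more careful than the paper's one-line argument. One small refinement: if $X$ has bounded support with right endpoint $s<d+w$, then $\operatorname{VaR}_1(X)=s$ is finite and the third branch is actually wrong at $\beta=1$, so in that case the statement is rescued by your ``null set of $\beta$'' qualification, not by the $+\infty$ convention.
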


\begin{proof}
Recall from \eqref{piecewiseld} that 
\begin{equation*}
    L_d  = X-d - (X-d)_+\wedge w + m(d)
    =
    \left\{
     \begin{array}{ll}
        X - d + m(d), & \quad X\le d \\
        m(d), & \quad  d< X \le d+w \\
          X - d - w + m(d), & \quad X > d+w
     \end{array}
     \right. .
    \end{equation*}
Since $L_d = L_d(X)$ is an increasing function of $X$, according to the equivariance property of $\operatorname{VaR}$ under monotone transformation \citep{McNeil2015}, it follows that  $\operatorname{VaR}_\beta(L_d) = L_d(\operatorname{VaR}_\beta(X))$, which immediately yields the desired result.  

\end{proof}

\begin{lemma}\label{lemma:tvar_Ld}
The expression of $\operatorname{TVaR}_\beta(L_d)$ is given by
\begin{equation*}
\operatorname{TVaR}_\beta(L_d)
=
\begin{cases}
m(d) - d + \operatorname{TVaR}_\beta(X)
- \dfrac{\mathbb{E}\!\left[(X-d)_+\wedge w\right]}{1-\beta},
& \beta \le F(d),\\[10pt]
m(d) + \dfrac{\mathbb{E}\!\left[(X-(d+w))_+\right]}{1-\beta},
& F(d)< \beta\le F(d+w),\\[10pt]
m(d) - d + \operatorname{TVaR}_\beta(X) - w,
&\beta > F(d+w).
\end{cases}
\end{equation*}
\end{lemma}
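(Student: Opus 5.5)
We compute $\operatorname{TVaR}_\beta(L_d)$ from its quantile representation, $\operatorname{TVaR}_\beta(L_d)=\frac{1}{1-\beta}\int_\beta^1 \operatorname{VaR}_p(L_d)\,dp$, and substitute the piecewise expression of $\operatorname{VaR}_p(L_d)$ from Lemma~\ref{lemma:netloss-dist}. The three cases of the statement correspond to where $\beta$ sits relative to the breakpoints $F(d)$ and $F(d+w)$ of that piecewise formula. The key identity we use repeatedly is that, since $X$ is continuous, $\int_\beta^1 \operatorname{VaR}_p(X)\,dp=(1-\beta)\operatorname{TVaR}_\beta(X)$. A second identity, valid for any $a$, is
\begin{equation*}
\int_{F(a)}^1 \big(\operatorname{VaR}_p(X)-a\big)\,dp=\mathbb{E}[(X-a)_+].
\end{equation*}
It follows from the change of variables $p=F(x)$.

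\textbf{Case $\beta>F(d+w)$.} Here $\operatorname{VaR}_p(L_d)=\operatorname{VaR}_p(X)-d-w+m(d)$ for all $p\ge\beta$. Averaging over $[\beta,1]$ gives $m(d)-d-w+\operatorname{TVaR}_\beta(X)$ directly.

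\textbf{Case $F(d)<\beta\le F(d+w)$.} Split the integral at $F(d+w)$. On $[\beta,F(d+w))$ the integrand is the constant $m(d)$. On $[F(d+w),1]$ it is $m(d)+(\operatorname{VaR}_p(X)-d-w)$. The constant $m(d)$ therefore integrates to $(1-\beta)m(d)$ over the whole range. The remainder is $\mathbb{E}[(X-d-w)_+]$ by the second identity with $a=d+w$. Dividing by $1-\beta$ yields the middle formula.

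\textbf{Case $\beta\le F(d)$.} It is cleanest to write $\operatorname{VaR}_p(L_d)=\operatorname{VaR}_p(X)-d+m(d)-g(p)$ for all $p$. Here
\begin{equation*}
g(p):=\big(\operatorname{VaR}_p(X)-d\big)_+\wedge w,
\end{equation*}
which vanishes for $p<F(d)$. This is just $L_d(\operatorname{VaR}_p(X))$ rewritten via \eqref{piecewiseld}. Integrating the first three terms over $[\beta,1]$ gives $(1-\beta)\big(\operatorname{TVaR}_\beta(X)-d+m(d)\big)$. For the last term, $\int_\beta^1 g(p)\,dp=\int_0^1 g(p)\,dp$ because $g=0$ below $F(d)\ge\beta$. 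The full integral $\int_0^1 g(p)\,dp$ equals $\mathbb{E}[(X-d)_+\wedge w]$, since $\operatorname{VaR}_U(X)\overset{d}{=}X$ for uniform $U$. Dividing by $1-\beta$ gives the first formula.

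The main obstacle is purely bookkeeping: handling the boundaries and endpoints ($\beta=F(d)$ versus strict inequalities, and possible atoms when $d<0$ so that $F(d)=0$) consistently with Lemma~\ref{lemma:netloss-dist}. Continuity of $X$ makes the endpoint conventions immaterial, since the integrals do not see single points. As a check, the three expressions agree at $\beta=F(d)$ and at $\beta=F(d+w)$, using $\mathbb{E}[(X-d)_+]=\mathbb{E}[(X-d)_+\wedge w]+\mathbb{E}[(X-d-w)_+]$.
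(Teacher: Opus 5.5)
Your proof is correct, but it takes a different route from the paper's. You integrate the quantile function of $L_d$ directly.

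The paper never uses Lemma~\ref{lemma:netloss-dist} here. It writes $L_d = X - Y - d + m(d)$ with the layer $Y=(X-d)_+\wedge w$. Translation invariance and comonotonic additivity of $\operatorname{TVaR}$ then give $\operatorname{TVaR}_\beta(L_d)=\operatorname{TVaR}_\beta(X)-\operatorname{TVaR}_\beta(Y)-d+m(d)$. It evaluates $\operatorname{TVaR}_\beta(Y)$ through the distortion representation $\int_0^\infty\bigl[1-\Phi(F_Y(y))\bigr]\,dy=\int_d^{d+w}\bigl[1-\frac{(F(x)-\beta)_+}{1-\beta}\bigr]\,dx$, splitting this layer integral at $\operatorname{VaR}_\beta(X)$ in the middle case.

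What your route buys:
\begin{itemize}
\item It needs neither comonotonic additivity nor the Choquet-integral representation.
\item It uses only the spectral definition $\operatorname{TVaR}_\beta=\frac{1}{1-\beta}\int_\beta^1\operatorname{VaR}_p\,dp$, the already-established Lemma~\ref{lemma:netloss-dist}, and the stop-loss identity $\int_{F(a)}^1(\operatorname{VaR}_p(X)-a)\,dp=\mathbb{E}[(X-a)_+]$.
\item It is therefore more elementary and self-contained within the appendix.
\item It carries over unchanged to any spectral $\phi$, since $M_\phi(L_d)=\int_0^1\phi(p)\,L_d(\operatorname{VaR}_p(X))\,dp$.
\end{itemize}

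What the paper's decomposition buys: it isolates the only nontrivial piece, the layer. It also makes the terms $\mathbb{E}[(X-d)_+\wedge w]=\int_d^{d+w}(1-F(x))\,dx$ and $\mathbb{E}[(X-d-w)_+]$ appear transparently as integrals of the survival function.

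Two small remarks on your write-up:
\begin{itemize}
\item The stop-loss identity is best justified by $\mathbb{E}[(X-a)_+]=\int_0^1(\operatorname{VaR}_p(X)-a)_+\,dp$ together with $\operatorname{VaR}_p(X)>a \iff p>F(a)$. This avoids the substitution $p=F(x)$, which implicitly assumes $F$ is strictly increasing.
\item There are no atoms to worry about when $d<0$. The only effect is $F(d)=0$, so the first case degenerates to $\beta=0$, and your identities handle this because they hold for every real $a$.
\end{itemize}
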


\begin{proof}

Recall from \eqref{piecewiseld} that $L_d = X - (X-d)_+\wedge w - d + m(d)$. Note that $-d+m(d)$ is a deterministic quantity and both $X - (X-d)_+\wedge w$ and $(X-d)_+\wedge w$ are increasing in $X$. According to 
translation invariance  and
comonotonic additivity to of $\operatorname{TVaR}$ \citep{Acerbi2002}, we have 
\begin{equation*}\label{eq:split}
\operatorname{TVaR}_\beta(L_d)
= \operatorname{TVaR}_\beta(X) - \operatorname{TVaR}_\beta\big((X-d)_+\wedge w\big) - d + m(d).
\end{equation*}

Denote $Y=(X-d)_+\wedge w$. Recall that $\operatorname{TVaR}_\beta$ has risk spectrum function $\phi(u) = \frac{\mathbb{I}\{u>\beta\}}{1-\beta}$. Define $\Phi(u)=\int_0^u\phi(p)\,dp = \tfrac{(u-\beta)_+}{1-\beta}$.  
According to \cite{Acerbi2002}, $\operatorname{TVaR}$ admits the following expression
\[
\operatorname{TVaR}_\beta(Y) = \int_{0}^{\infty}\big[1 - \Phi(F_Y(y))\big]\,dy .
\]
Noting that 
\begin{equation*}
    F_Y(y) = 
    \begin{cases}
         F(d+y), & 0 \le y <w \\
        1, & y \ge w
    \end{cases},
\end{equation*}
we have 
\begin{equation}\label{eq:layerintegral}
\operatorname{TVaR}_\beta(Y) =\int_0^{d}\left[1 - \frac{(F(d+y)-\beta)_+}{1-\beta}\right]dy =  \int_d^{d+w}\left[1 - \frac{(F(x)-\beta)_+}{1-\beta}\right]dx.
\end{equation}
\smallskip

The evaluation of \eqref{eq:layerintegral} depends on the relative position of $\beta$ with respect to $F(d)$ and $F(d+w)$. We therefore consider the following cases.

\emph{Case $\beta\le F(d)$. 
}
The integral in \eqref{eq:layerintegral} reduces to  
\[ \frac{1}{1-\beta}\int_d^{d+w}(1-F(x))\,dx
= \frac{1}{1-\beta}\,\mathbb{E}[(X-d)_+\wedge w].\]

\emph{Case $\beta\ge F(d+w)$. 
}
In this case, $F(x) \le \beta$ for all  $x\in [d, d+w]$. Therefore, the integrand in \eqref{eq:layerintegral} equals $1$ throughout the integration interval, and thus the integral is equal to $w$.


\emph{Case $F(d)\le\beta\le F(d+w)$. 
}
In this case, we partition the integration interval into
$[d,\operatorname{VaR}_{\beta}(X)]$ and
$[\operatorname{VaR}_{\beta}(X),d+w]$.
Since the integrand in \eqref{eq:layerintegral} is constantly equal to $1$ on the first interval, the corresponding integral evaluates to
$\operatorname{VaR}_{\beta}(X)-d$.
The integral over the second interval is evaluated as follows:
\[
\begin{aligned}
\frac{1}{1-\beta}\int_{\operatorname{VaR}_\beta(X)}^{d+w}(1-F(x))\,dx
&= \frac{1}{1-\beta}\mathbb{E}\!\left[(X-\operatorname{VaR}_\beta(X))_+\right]
- \frac{1}{1-\beta}\mathbb{E}\!\left[(X-(d+w))_+\right],
\end{aligned}
\]
Combining the above two expressions and using the identity
$\operatorname{TVaR}_\beta(X)
=
\operatorname{VaR}_\beta(X)
+\frac{1}{1-\beta}\mathbb{E}\!\left[(X-\operatorname{VaR}_\beta(X))_+\right]$,
we obtain the desired expression for $\operatorname{TVaR}_\beta(L_d)$ in the case of $F(d)\le\beta\le F(d+w)$.
\end{proof}

\section{Limitation of the threshold payout structure}
\label{app:max_expected_net_loss}

In the introduction, we pointed out that the threshold payout structure may create moral hazard concerns. In this appendix, we take a closer look at this design and demonstrate an additional limitation. 
For this purpose, we first define the following class of indemnity functions:
\begin{equation}\nonumber
  \label{eq:admissible_class_C1}
  \mathfrak{C}_1
  :=
  \left\{
    I_{c,d} : [0,\infty) \to [0,1] \;\middle|\;
    I_{c,d}(x) := \frac{(x-d)_+}{c} \wedge 1,\;
    c>0,\; d \ge 0
  \right\}
\end{equation}

Clearly, when setting $c=w$, the indemnity function $I_{c,d}$ reduces to $I_d$ as defined in \eqref{optI}, which is the optimal form of the indemnity function for Problems \eqref{P-EDU} and \eqref{P-SPEC}. On the other hand, note that  
\begin{equation*}
    \lim_{c\to 0^+} I_{c,d}(x) = \mathbb{I}\{x>d\}.
\end{equation*}
In this sense, letting $c\to 0^+$, the indemnity function $I_{c,d}$ resembles the threshold payout structure.

Consider the net loss  $L_{c,d}$  under the indemnity function  $I_{c,d}$ and its expected value.  
\begin{eqnarray*}
     L_{c,d} &=& X- (1+\tau_1)\mathbb{E}[X] + (1+\tau_2)w\mathbb{E}[I_{c,d}(X)] - w I_{c,d}(X)\\
      J(c,d)
&:=& \mathbb{E}[L_{c,d}]
= -\tau_1\,\mathbb{E}[X]
  + w\tau_2\,\mathbb{E}\!\bigl[I_{c,d}(X)\bigr]
 \end{eqnarray*}

Note that the indemnity function $I_{c,d}(x)$ is decreasing in both parameters $c$ and $d$. Since $I_{c,d}(x)$ is bounded, according to the dominated convergence theorem, $\mathbb{E}[I_{c,d}(X)]$, and thus $\mathbb{E}[L_{c,d}]$, is decreasing in $c$ and $d$. Therefore, 
\[
\mathbb{E}[ \lim_{c\to 0^+} L_{c,d}(X)] \ge \mathbb{E}[  L_{w,d}(X)]  = \mathbb{E}[  L_{d}(X)].  
\]
In other words, any threshold indemnity function is inferior to the piecewise linear indemnity function defined in \eqref{optI} with the same attachment point. More strikingly, among all indemnity functions in $\mathfrak{C}_1$ with a common attachment point, the threshold structure is the least desirable choice since it maximizes the expected net loss. This fundamental inefficiency 
provides yet another reason to favor the piecewise linear indemnity function over the threshold structure in practice.

\end{document}